\newenvironment{wrapper}[1]
{
	\begin{center}
		\begin{minipage}{\linewidth}
			\begin{mdframed}[hidealllines=true, backgroundcolor=gray!20, leftmargin=0cm,innerleftmargin=0.4cm,innerrightmargin=0.4cm,innertopmargin=0.4cm,innerbottommargin=0.4cm,roundcorner=10pt]
				#1}
			{\end{mdframed}
		\end{minipage}
	\end{center}
}
\definecolor{ForestGreen}{rgb}{0.1333,0.5451,0.1333}
\definecolor{DarkRed}{rgb}{0.65,0,0}
\newtheorem{theorem}{Theorem}[section]
\newtheorem{lemma}{Lemma}[section]
\newtheorem{definition}{Definition}[section]
\newtheorem{claim}{Claim}[section]
\newtheorem{invariant}{Invariant}[section]
\newcommand{\ceil}[1]{\left\lceil #1 \right\rceil}
\newcommand{\floor}[1]{\left\lfloor #1 \right\rfloor}
\newcommand{\brac}[1]{\left(#1\right)}
\newcommand{\miss}{\mathsf{miss}}
\newcommand{\clr}{\mathsf{clr}}
\newcommand{\lst}{\mathsf{lst}}
\newcommand{\ds}{\mathcal{D}}
\newcommand{\br}{\mathsf{br}}
\newcommand{\paths}{\mathcal{P}}
\newcommand{\lone}{\mathsf{lone}}
\newcommand{\desc}{\mathsf{desc}}
\newcommand{\cng}{\mathsf{cng}}
\newcommand{\hi}{\mathrm{hi}}
\newcommand{\lo}{\mathrm{lo}}
\begin{document}

\title{Faster $(\Delta + 1)$-Edge Coloring: Breaking the $m \sqrt{n}$ Time Barrier}
\author{
	Sayan Bhattacharya$^*$ \and Din Carmon\textsuperscript{\textdagger} \and Mart\'in Costa$^*$ \and Shay Solomon\textsuperscript{\textdagger} \and Tianyi Zhang\textsuperscript{\textdagger}
}

\date{Warwick University$^*$ \\ Tel Aviv University\textsuperscript{\textdagger}}
\maketitle

\begin{abstract}
Vizing's theorem states that 
any $n$-vertex $m$-edge graph of maximum degree $\Delta$ can be {\em edge colored} using at most $\Delta + 1$ different colors [Diskret.~Analiz, '64].
    Vizing's original proof is algorithmic and shows that such an edge coloring can be found in $\tilde{O}(mn)$ time. This was subsequently improved to $\tilde O(m\sqrt{n})$, independently by Arjomandi [1982] and by Gabow et al.~[1985]. 
    
    In this paper we present an algorithm that computes such an edge coloring in $\tilde O(mn^{1/3})$ time, giving the first polynomial improvement for this fundamental problem in over 40 years.
\end{abstract}

\thispagestyle{empty}
\clearpage
\setcounter{page}{1}

\section{Introduction}
\label{sec:intro}

Consider a simple, undirected graph $G = (V, E)$ with maximum degree $\Delta$, and an integer $\kappa \in \mathbb{N}^+$. A $\kappa$-edge coloring $\psi : E \rightarrow \{1, 2, \ldots, \kappa\}$ of $G$ assigns a ``color'' $\psi(e)$ to every edge $e \in E$, in such a way that no two adjacent edges receive the same color. What is the minimum  $\kappa$ for which $G$ admits a $\kappa$-edge coloring? This value is known as the ``edge chromatic number'' of $G$. Any edge coloring trivially requires at least $\Delta$ colors. On the other hand, a textbook theorem by Vizing guarantees that at most $\Delta+1$ colors are always  sufficient~\cite{vizing1965critical}, and it is NP-complete to distinguish whether the edge chromatic number of a given  graph is  $\Delta$ or $\Delta+1$~\cite{holyer1981np}. This leads  naturally to a fundamental question in graph algorithms, summarized below.

\begin{wrapper}
How fast can we compute a $(\Delta+1)$-edge coloring of an input graph with $n$ vertices and $m$ edges?
\end{wrapper}

\noindent
{\bf State-of-the-Art.} Vizing's original proof can easily be converted into  an $O(m n)$ time algorithm for $(\Delta+1)$-edge coloring. Back in the 1980s, Arjomandi~\cite{arjomandi1982efficient} and Gabow et al.~\cite{gabow1985algorithms} independently improved this runtime bound to $\tilde{O}(m\sqrt{n})$.\footnote{Throughout the paper, the notation $\tilde{O}(.)$ hides polylogarithmic in $n$ factors.} Recently, Sinnamon~\cite{sinnamon2019fast}  simplified these previous algorithms of~\cite{arjomandi1982efficient,gabow1985algorithms} using randomization, alongside shaving off some logarithmic factors from their runtimes, to achieve a clean bound of $O(m \sqrt{n})$. {\em But,  there has been  no polynomial improvement over this $m\sqrt{n}$ time barrier in over four decades}. We breach this barrier with the following result.

\begin{theorem}
\label{faster-vizing}
Given a simple undirected graph $G = (V, E)$ on $n$ vertices and $m$ edges with maximum degree $\Delta$, there is a randomized algorithm that computes a $(\Delta+1)$-edge coloring with high probability in runtime $\tilde{O}\brac{mn^{\frac{1}{3}}}$.
\end{theorem}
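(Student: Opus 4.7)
The plan is to work within the randomized Vizing-chain framework of Sinnamon~\cite{sinnamon2019fast}, but to introduce a length threshold $L = \tilde{\Theta}(n^{1/3})$ that governs a two-regime processing of uncolored edges. The hope is that \emph{almost} every uncolored edge can be eliminated by a Vizing chain of length at most $L$, which costs $\tilde{O}(n^{1/3})$ to detect and to apply; and that the few ``stubborn'' edges whose natural Vizing chains are longer than $L$ can be handled in \emph{batch}, sharing work across many such edges.

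First I would maintain a partial $(\Delta+1)$-edge coloring, an uncolored set $U \subseteq E$, and, for every vertex $v$, the standard ``missing colors'' data structure. For each $e = uv \in U$, I would sample a random missing color at $u$ (or more generally a random starting fan, as in the randomized analysis that gives $\tilde{O}(\sqrt{n})$ expected chain length), simulate the Vizing chain for up to $L$ steps, and if it terminates within the budget then apply the color swap. Each successful elimination costs $\tilde{O}(L) = \tilde{O}(n^{1/3})$, and summing over at most $m$ eliminations gives a contribution of $\tilde{O}\brac{m n^{1/3}}$ to the total runtime. A standard concentration argument (Chernoff on the $m$ independent attempts) should yield that, with high probability, only a controlled fraction of the edges survives this phase.

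Second, for the stubborn edges — those whose sampled Vizing chain always exceeds $L$ — I would run a batched subroutine on a large collection $B$ of them simultaneously. The intuition is that a long alternating $(\alpha,\beta)$-chain visits $\Omega(L)$ vertices, and across many stubborn edges these chains must reuse color classes and vertices heavily; by exploiting this overlap one can either (i) identify an auxiliary ``compressed'' alternating structure in which each stubborn edge is resolved in $\tilde{O}(n^{1/3})$ amortized time, or (ii) randomly perturb a small set of edges of the current coloring so that the conditional probability that a fresh sample still yields a chain of length $\geq L$ drops polynomially. Either way the target is to charge the batched work to $\tilde{O}(|B| \cdot n^{1/3})$, so that the two regimes jointly yield the claimed $\tilde{O}\brac{m n^{1/3}}$ bound.

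The hard part will be the batching step. The longstanding $\sqrt{n}$ barrier reflects the tight trade-off ``short-chain cost $\sqrt{n}$ per edge vs.\ long-chain batching cost $\sqrt{n}$ per edge via fan/missing-color matchings,'' and pushing the exponent from $\tfrac12$ to $\tfrac13$ requires genuinely new structure. Concretely, I expect the proof to rest on a structural lemma showing that after applying a carefully designed random modification to the current coloring, the distribution of Vizing-chain lengths has a polynomially heavier tail below $L$; and on a dynamic data structure that amortizes the cost of repeated partial chain rewinds across a batch. Once these two ingredients are in place, the runtime analysis should reduce to a standard accounting: $m$ successful chain applications at $\tilde{O}(n^{1/3})$ each, plus $O(\log n)$ batched phases whose total cost telescopes to the same bound, plus the cost of maintaining the missing-colors and fan data structures, which is $\tilde{O}(m)$.
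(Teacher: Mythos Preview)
Your proposal is not a proof; it is a wish list whose central step is explicitly left open. You acknowledge that ``the hard part will be the batching step'' and that breaking the $\sqrt{n}$ barrier ``requires genuinely new structure,'' but you do not supply that structure. The two ingredients you say you ``expect'' to rest on --- a structural lemma about the tail of Vizing-chain lengths after a random modification, and an amortized rewind data structure --- are precisely the whole content of the theorem, and you give no indication of what they are or why they should exist. Moreover, your first phase already fails as stated: in Sinnamon's framework the expected chain length when $\lambda$ edges remain uncolored is $\Theta(\Delta m/\lambda)$, which is $\Theta(\sqrt{n})$ at the hard regime $\Delta \approx \sqrt{n}$, $\lambda \approx m$; capping at $L = \tilde\Theta(n^{1/3})$ gives no reason to believe more than an $o(1)$ fraction of attempts succeed, so ``only a controlled fraction of the edges survives'' is unsupported.

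The paper's actual mechanism is completely different from what you sketch and does not proceed by batching long chains after the fact. Instead it \emph{engineers} the set of uncolored edges in advance to have a very special shape: it samples a vertex set $U$ of density $\tilde\Theta(n^{-1/3})$, colors everything except the $U$-to-$(V\setminus U)$ edges cheaply using the $\tilde\Theta(\Delta/n^{1/3})$ degree slack this creates (via \Cref{slack}), and is left with uncolored edges that form \emph{stars} centered at the $\tilde O(n/n^{1/3})$ vertices of $U$. The star structure is the new idea: when $m_0$ uncolored edges are concentrated on $|U|$ centers, each center $u$ with uncolored degree $d_u$ has $\Omega(d_u)$ missing colors, so one gets $\Omega(\sum d_u^2/\Delta) \geq \Omega(m_0^2/(|U|\Delta))$ candidate alternating paths rather than $m_0$, and the average path length drops by a factor of $m_0/(|U|\Delta)$. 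This is the quantitative gain; the remaining (substantial) work is a four-case analysis (Steps~(1)--(4) in the paper) handling the interaction of Vizing fans with this star picture, plus an Euler-partition reduction for $\Delta > n^{2/3}$ and a direct $\tilde O(m\Delta)$ call for $\Delta < n^{1/3}$. None of this is present in your outline.
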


\subsection{Related Work}

 If we allow for a larger palette of $\Delta+\tilde{O}(\sqrt{\Delta})$ colors, then the problem can be solved in $\tilde{O}(m)$ time~\cite{karloff1987efficient}. In addition, there exist  algorithms which  achieve linear or near-linear runtime for $(1+\epsilon)\Delta$-edge coloring~\cite{duan2019dynamic,BhattacharyaCPS24,elkin2024deterministic}.

 Quite a few improved  results have been obtained over the years
 for restricted graph classes. 
 For bipartite graphs one can always compute a $\Delta$-edge coloring in $\tilde{O}(m)$ time~\cite{combinatorica/ColeOS01}. For bounded degree graphs one can  compute a $(\Delta+1)$-edge coloring in $\tilde{O}(m \Delta)$ time \cite{gabow1985algorithms}, a classic result that was generalized recently for
  bounded arboricity graphs~\cite{BhattacharyaCPS24b}; see \cite{BhattacharyaCPS24c,ChristiansenRV24,Kowalik24} for further recent works on edge coloring in bounded arboricity graphs.
 Subfamilies of bounded arboricity graphs, in particular planar graphs, bounded treewidth graphs and bounded genus graphs, were studied in \cite{chrobak1989fast,chrobak1990improved,cole2008new}.

 Finally, in recent years  substantive  effort has been devoted to the study of the edge coloring problem in other computational models; such as dynamic algorithms~\cite{BarenboimM17,BhattacharyaCHN18,duan2019dynamic,Christiansen23,BhattacharyaCPS24,Christiansen24}, online algorithms~\cite{CohenPW19,BhattacharyaGW21,SaberiW21,KulkarniLSST22,BilkstadSVW24}, distributed algorithms~\cite{panconesi2001some,elkin20142delta,fischer2017deterministic,ghaffari2018deterministic,balliu2022distributed,ChangHLPU20,Bernshteyn22,Christiansen23,Davies23}, and streaming algorithms~\cite{BehnezhadDHKS19,behnezhad2023streaming,chechik2023streaming,ghosh2023low}.

\section{Technical Overview}
\label{sec:overview:sayan}

    In this section, we will heavily use ideas from Vizing's original proof, which uses concepts like ``fans'' and ``alternating paths'' (see Section~\ref{prelim} for precise definitions of these concepts).

\subsection{Reviewing the State-of-the-Art}
\label{sub:sec:review}
We begin by outlining the  approach of~\cite{arjomandi1982efficient,gabow1985algorithms,sinnamon2019fast} that achieves the  runtime bound of $\tilde{O}(m \sqrt{n})$. This approach~\cite{arjomandi1982efficient,gabow1985algorithms,sinnamon2019fast} fits within the following abstract template. It {\em iterates} through the edges of $G = (V, E)$ in a carefully chosen order $e_1, \ldots, e_m$. Consider any $i \in [m]$. At the start of the $i^{th}$ iteration, the algorithm already has a valid partial coloring $\psi : \{e_1, \ldots, e_{i-1}\} \rightarrow [\Delta+1]$ of the edges seen until now. The goal of the current iteration is to perform a ``color-extension'' step, which {\em extends} the partial coloring $\psi$ to the edge-set $\{e_1, \ldots, e_{i-1}, e_i\}$, by assigning a color to $e_i$ and  changing the colors of some subset $C_{i} \subseteq \{e_1, \ldots, e_{i-1}\}$ of the previous edges. Define $|C_i|$ to be the ``cost'' of this color-extension step. At the end of the $m^{th}$ iteration, the algorithm returns the coloring $\psi$ of the entire graph $G = (V, E)$. Define $\sum_{i=1}^m |C_i|$ to be the ``cost of the algorithm'' on the given input. 

We now highlight a few crucial points. (i) During a given iteration, the color-extension step is performed just by following Vizing's recipe, which involves ``rotating a fan'' and ``flipping an alternating path''. There is nothing special about it. (ii) Instead, the key driver of the algorithm is the way it {\em chooses} the ordering $e_1, \ldots, e_m$ of the edges. (iii) The algorithm's cost is clearly a lower bound on its update time, because it needs to spend $\Omega(1)$ time whenever it changes the color of an edge; to be more precise, this property holds for all the known algorithms, as they 
recolor each edge {\em separately}. On the other hand, as long as the algorithm is not doing something {\em out of the ordinary}, it is typically possible to come up with supporting data structures so as to ensure that the algorithm's runtime is proportional to its cost (up to polylogarithmic factors). Indeed, the costs of all the existing state-of-the-art algorithms~\cite{arjomandi1982efficient,gabow1985algorithms,sinnamon2019fast} also happen to be $\Omega(m\sqrt{n})$. 

\medskip
\noindent {\bf Two Simplifying Assumptions.} In light of the above discussion, {\em for  now} we make two assumptions, which will help us emphasize the key technical insight in this paper. 
\begin{itemize}
\item (a) We focus only on designing an algorithm {\bf whose {\em cost} breaches the $m\sqrt{n}$ barrier}. 
\item (b)  While performing a color-extension step, we pretend that {\bf we only need to flip an alternating path starting from one of the endpoints of the uncolored edge} without rotating any Vizing fan. To be a bit more precise, we assume  that whenever we are asked to extend the current partial coloring to an uncolored edge $(u, v)$, we get lucky in the following sense: the relevant alternating path $P$ starting from $v$ does not end  at the other endpoint $u$ (which always happens in bipartite graphs, for example). So, it suffices to only flip the colors on the path $P$, without us having to rotate any Vizing fan.
\end{itemize}

The previous algorithms~\cite{arjomandi1982efficient,gabow1985algorithms,sinnamon2019fast} rely on two ingredients, as outlined below.  Actually, \cite{sinnamon2019fast} uses this ingredient explicitly, whereas the deterministic algorithms from \cite{arjomandi1982efficient,gabow1985algorithms} use this idea implicitly in a more intricate way.

\medskip
\noindent {\bf Ingredient I (Random Sampling an Uncolored Edge).} This ingredient helps us achieve the $\tilde{O}(m\sqrt{n})$ bound when $\Delta \leq \sqrt{n}$. To see how it works, suppose that we have a partial coloring $\psi$ in $G = (V, E)$, which leaves us with a set $E' \subseteq E$ of currently uncolored edges. If we wish to extend this coloring to any  edge $e' \in E'$ using Vizing's recipe, then we have to flip an alternating path $P_{e'}$ (starting from some endpoint of $e'$). Let $L_{\texttt{total}}$ denote the sum of the lengths of these alternating paths, over all $e' \in E'$. In the following paragraph, we will show that $L_{\texttt{total}} \leq \Delta m$.

Fix an already colored edge $(u, v) \in E \setminus E'$, and let $c = \psi(u, v)$. Note that for any color $c' \in [\Delta+1] \setminus \{c\}$, the edge $(u, v)$ can appear in at most one $\{c, c'\}$-alternating path in $\{ P_{e'} \}_{e' \in E'}$. This holds because the subgraph of $G$ consisting of the edges $e$ with colors $\psi(e) \in \{c, c'\}$ is a collection of vertex-disjoint paths and cycles. Thus, summing over all such $c' \in [\Delta+1] \setminus \{c\}$, we infer that the edge $(u, v)$ appears in  at most $\Delta$  alternating paths from the collection $\{ P_{e'} \}_{e' \in E'}$. Since there are at most $m$ such edges  $(u, v) \in E \setminus E'$, a simple counting argument gives us:
\begin{equation}
\label{eq:bound:length:1}
L_{\texttt{total}}\leq \Delta m.
\end{equation}

Accordingly, if we now sample an uncolored edge $e' \in E'$, then in expectation the length of the concerned alternating path $P_{e'}$ will be at most $\Delta m/|E'|$,  and this will be an upper bound on our expected cost for this color-extension step. 

We now derive the following natural algorithm: at each iteration, pick a currently uncolored edge $e'$ uniformly at random, and extend the coloring to $e'$. From the discussion above, the total expected cost of this algorithm is at most: $\sum_{\lambda=m}^1 (\Delta m)/\lambda = \tilde{O}(\Delta m)$, where $\lambda$ refers to the number of uncolored edges at the start of a concerned iteration. Finally, note that $\Delta m \leq m \sqrt{n}$ if $\Delta \leq \sqrt{n}$.

\medskip
\noindent {\bf Ingredient II (Divide, Conquer and Combine).} This ingredient helps us achieve the $\tilde{O}(m \sqrt{n})$ bound when $\Delta \geq \sqrt{n}$. Here, we crucially exploit a known subroutine (see Lemma~\ref{euler}). In $\tilde{O}(m)$ time, this allows us to partition the input graph $G = (V, E)$ into two (almost equal sized) subgraphs $G_1 = (V, E_1)$ and $G_2 = (V, E_2)$, such that each of $G_1$ and $G_2$ has maximum degree at most $\lceil \Delta/2 \rceil$. We then recursively color $G_1$ and $G_2$, using two {\em mutually disjoint} palettes of $\lceil \Delta/2\rceil+1$ colors. Putting them together, this gives us a valid edge coloring of the whole graph $G = G_1 \cup G_2$, using $2 \cdot (\lceil \Delta/2 \rceil+1) \leq \Delta + 3$ colors. Next, we identify the two ``least popular'' colors in $G$ (these are the colors that get assigned to the least number of edges). Let $c_0$ and $c_1$ be these two least popular colors, and let $E' := \{ e \in E : \psi(e) \in \{c_0, c_1\}\}$ denote the concerned set of edges that receive them. A simple averaging argument implies that $|E'| = O(m/\Delta)$. We now uncolor the edges in $E'$. This gives us a valid {\em partial} $(\Delta+1)$-edge coloring $\psi$ of $G = (V, E)$, with a set $E' \subseteq E$ of $O(m/\Delta)$ uncolored edges. We refer to the set $E'$ as ``leftover edges''. To wrap things up, we extend the $(\Delta+1)$-edge coloring $\psi$ to these leftover edges, by going through them in any arbitrary order and coloring each of them according to Vizing's recipe. Trivially, each of these color-extensions for the leftover edges incurs a cost of $n$ (which is an upper bound on  the length of the concerned alternating path that we need to flip). The ``combine'' step of this divide, conquer and combine approach, therefore, incurs a total cost of $O(mn/\Delta)$. Let $T(m, \Delta)$ denote the cost of the overall algorithm. We accordingly get the recurrence stated below:

\begin{equation}
\label{eq:recur:1}
T(m, \Delta) = 2 \cdot T\left(\ceil{\frac{m}{2}}, \ceil{\frac{\Delta}{2}}\right) + O\left(\frac{mn}{\Delta}\right). 
\end{equation}
Note that if $\Delta \geq \sqrt{n}$, then the second term in the RHS of~(\ref{eq:recur:1}) becomes $O(m \sqrt{n})$. Thus, we can unfold the above recurrence, truncating the ``recursion tree'' at the ``level'' where $\Delta$ becomes equal to $\sqrt{n}$, and apply Ingredient I to solve the resulting subproblems  at that last level (where $\Delta \simeq \sqrt{n}$). It is easy to verify that this framework also gives us  the   bound of $\tilde{O}(m\sqrt{n})$ for the overall cost.

\subsection{A Canonical Instance (and How We Solve It)}
\label{sub:sec:instance}

For the rest of Section~\ref{sec:overview:sayan}, we focus on the following input instance: $G = (V, E)$ is a {\em regular} graph with degree $\Delta = \sqrt{n}$, and so  $m = \Theta(n \Delta) = \Theta(n \sqrt{n})$ . Since $\Delta = \sqrt{n}$ and the graph is $\sqrt{n}$-regular, here the existing algorithms discussed in Section~\ref{sub:sec:review} will achieve a bound of $\tilde{\Theta}(m \sqrt{n}) = \tilde{\Theta}(n^2)$. 
\textbf{Can we beat the $\tilde{O}(n^2)$ bound on this instance?} This is the question we will attempt to address below. This canonical instance is in some sense the hard instance: first, adding the regularity restriction does not lose much generality, and second this choice of parameters is the ``balance point'' of the two ingredients described earlier, i.e., where both meet. Thus, once we have resolved it, the same approach can be applied, with some extra work, to resolve the general case.

\subsubsection{Our Algorithmic Template} 
\label{sub:sec:template}
We sample each vertex $v \in V$ u.a.r.~with probability $\simeq n^{-1/4}$ into a set $U$.\footnote{Actually, the sampling probability would be $(c \log n) \cdot n^{-1/4}$, so that we are able to apply standard concentration bounds. We ignore these extra polylogarithmic factors to highlight the main idea.}   Using standard concentration bounds, w.h.p.\ we  have: (i) $|U| \simeq n^{3/4}$, (ii) $\Delta(G [V \setminus U]) \simeq n^{1/2} - n^{1/4}$ and (iii) $\Delta(G[U]) \simeq n^{1/4}$; where $\Delta(G[S])$ is the maximum degree in the subgraph of $G$ {\em induced} by $S \subseteq V$.

We next separately compute  $(\Delta+1)$-edge colorings  of $G[V \setminus U]$ and $G[U]$, using colors from the same palette $[\Delta+1]$. Because of  the  $n^{1/4} = \sqrt{\Delta}$ ``slack'' guaranteed by property (ii), it takes only $\tilde{O}(m)$ time to color  $G[V \setminus U]$  by invoking the algorithm of~\cite{karloff1987efficient}.\footnote{The algorithm by~\cite{karloff1987efficient} computes a $(\Delta+\tilde{O}(\sqrt{\Delta}))$-edge coloring in near-linear time.} In contrast, we  color $G[U]$ using a greedy algorithm in $O(m)$ time. We can do this because $\Delta(G[U])$ is negligible compared to $\Delta+1$, and the vertices in $G[U]$ are disjoint from the vertices in $G[V \setminus U]$. To summarize, at this point all but the edges in $G[U \times (V \setminus U)]$ are colored, and the cost we have incurred until now is only $\tilde{O}(m)$. We think of the edges in $G[U \times (V \setminus U)]$ as a collection of ``stars'', with the vertices in $U$ as  ``centers'' and the vertices in $V \setminus U$ as ``clients''. It remains to show how to color these star edges.

\medskip
\noindent {\bf Coloring the stars.}  For all $v \in V$, let $\miss_{\psi}(v) \subseteq [\Delta+1]$ denote the set of ``missing colors'' at $v$, under the current partial coloring $\psi$. Specifically, these are the colors that are {\em not}  assigned to any edge $e \in E$ incident on  $v$. Throughout the algorithm, for every star edge $(u, v) \in G[U \times (V \setminus U)]$, the client $v \in V \setminus U$ will maintain a ``tentative color'' $\clr(v \rightarrow u) \in \miss_{\psi}(v)$ for the center $u \in U$, while ensuring that Invariant~\ref{inv:tentative} holds. It is easy to satisfy this invariant because the palette  consists of $\Delta+1$ colors, whereas any client $v \in V \setminus U$ is incident upon at most $\Delta$ edges in $G = (V, E)$.

\begin{invariant}
\label{inv:tentative}
For every two distinct uncolored edges $(u, v), (u', v) \in G[U \times (V \setminus U)]$ sharing a common client $v \in V \setminus U$, we have $\clr(v \rightarrow u) \neq \clr(v \rightarrow u')$.
\end{invariant}

As usual, we will color the star edges in iterations. Each iteration will identify an appropriate uncolored star edge $e$, and extend the  $(\Delta+1)$-coloring $\psi$ to $e$. We terminate the algorithm when there are no more uncolored edges left. Below, we explain how a given iteration is implemented. 

Pick a color $x \in [\Delta+1]$ u.a.r. Let $X := \{ u \in U : x \in \miss_{\psi}(u)\}$ be the collection of centers that have $x$ as a missing color, and let $H$ denote the set of uncolored star edges incident on the centers in $X$. Next, u.a.r.~pick an  edge $(u', v') \in H$. W.l.o.g., suppose that $u'$ is a center and $v'$ is a client. Let $P_x(v' \rightarrow u')$ denote the unique $\{x, \clr(v' \rightarrow u')\}$-alternating path in $G$ starting from $v'$. Flip the colors on  $P_{x}(v' \to u')$, and assign the color $x$ to the edge $(u', v')$. This completes the color-extension step. The cost incurred during this step equals the length of the path $P_{x}(v' \to u')$. 


\begin{lemma}
\label{lm:overview:1} Let $m_0$ denote the number of uncolored star edges at the start of an iteration. Then the expected cost incurred during this iteration is at most $O\left(\min\left(n, \frac{\Delta^2 n |U|}{m_0^2}\right)\right) = O\left(\min\left(n, \frac{n^{2+3/4}}{m_0^2} \right)\right)$.
\end{lemma}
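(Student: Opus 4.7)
The bound $E[\mathrm{cost}] \le n$ is immediate: any alternating path in $G$ spans at most $n$ vertices and so has at most $n-1$ edges, while the cost of a single color-extension step equals the length of the flipped path $P_x(v' \to u')$. The content of the lemma is therefore the bound $E[\mathrm{cost}] \le O(\Delta^2 n |U|/m_0^2)$, which I would prove by double counting. Writing $L_x := \sum_{(u,v) \in H_x} |P_x(v \to u)|$ we have $E[\mathrm{cost}] = \tfrac{1}{\Delta+1}\sum_x L_x/|H_x|$, and the strategy is to upper-bound the aggregate $\sum_x L_x$ and lower-bound the aggregate $\sum_x |H_x|$, then combine.

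For the numerator, I would count triples $(x, (u,v), f)$ with $(u,v) \in H_x$ and $f \in P_x(v \to u)$, charging each already-colored edge $f$ of color $\psi(f) = y$. Since $P_x(v \to u)$ is $\{x, \clr(v \to u)\}$-alternating, any such triple forces $y \in \{x, \clr(v\to u)\}$. For each ordered color pair $(x, c)$ with $y \in \{x,c\}$, the two-color subgraph $G_{x,c}$ is a union of vertex-disjoint paths and cycles; because the client $v$ has color $c$ missing, it must be an endpoint of a path component whose incident edge has color $x$, and there are at most two such endpoints per path component containing $f$. By Invariant~\ref{inv:tentative}, for each such $v$ at most one center $u$ has $\clr(v \to u) = c$. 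Summing $O(1)$ triples per pair $(x,c)$ over the $2\Delta$ relevant pairs per $f$ gives $\sum_x L_x \le O(\Delta m) \le O(\Delta^2 n)$, using $m \le \Delta n$ in the canonical $\Delta$-regular instance.

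For the denominator, double counting yields $\sum_x |H_x| = \sum_{u \in U} k_u \cdot |\miss_\psi(u)|$, where $k_u$ is the number of uncolored star edges at the center $u$. Since the $\deg(u) - k_u \le \Delta - k_u$ colored edges at $u$ must use distinct palette colors, $|\miss_\psi(u)| \ge 1 + k_u \ge k_u$. Cauchy--Schwarz over the $|U|$ centers then gives
\[
\sum_x |H_x| \;\ge\; \sum_{u \in U} k_u^2 \;\ge\; \frac{\left(\sum_{u \in U} k_u\right)^2}{|U|} \;=\; \frac{m_0^2}{|U|}.
\]

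Combining, a uniform sampling of triples would already give $E[\mathrm{cost}] \le \sum_x L_x / \sum_x |H_x| \le O(\Delta m |U|/m_0^2) \le O(\Delta^2 n |U|/m_0^2)$. The main obstacle I anticipate is that the algorithm instead draws $x$ uniformly and then $(u',v') \in H_x$ uniformly, which induces a non-uniform distribution over triples and inflates the weight of colors with unusually small $|H_x|$. I would bridge this by partitioning the colors into a ``typical'' set, where $|H_x|$ is within a constant factor of the average $\tfrac{1}{\Delta+1}\sum_y |H_y|$ and the averaging above applies directly, and an ``atypical'' set, whose contribution is controlled by the per-iteration bound $|P_x(v \to u)| \le n$. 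Concentration from the random sampling that defines $U$ should keep the atypical contribution lower order; in any event, the $\min(n, \cdot)$ form of the claim absorbs the slack, since whenever the atypical contribution dominates we are already in the regime where the trivial $n$-bound is the binding one.
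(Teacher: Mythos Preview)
Your core argument---Cauchy--Schwarz over the centers for the denominator, two-color vertex-disjointness plus Invariant~\ref{inv:tentative} for the numerator---is the same as the paper's. One simplification: the paper proves the \emph{pointwise} bound $L_x = \sum_{P\in\mathcal{P}_x}|P| = O(\Delta n)$ for every fixed $x$ (by summing the $O(n)$ bound over the $\Delta+1$ possible client-colors $c$), which subsumes your aggregate bound $\sum_x L_x = O(\Delta m)$ and is more directly usable; your triple-counting over all $x$ simultaneously is an unnecessary detour.

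You correctly identify the Jensen-type gap between a lower bound on $\mathbb{E}[|H_x|]$ and the needed upper bound on $\mathbb{E}[L_x/|H_x|]$, and the paper acknowledges exactly this sloppiness in a footnote without repairing it (the statement sits in the overview and is explicitly a proof \emph{sketch}). Your proposed repair, however, does not work as written. The claim that ``the $\min(n,\cdot)$ form of the claim absorbs the slack'' is incorrect: the lemma asserts $\mathbb{E}[\mathrm{cost}] \le O(\Delta^2 n |U|/m_0^2)$ unconditionally, and in the regime where this quantity is below $n$, bounding the atypical-color contribution by $n$ per color does not yield the required bound unless you also control the \emph{fraction} of atypical colors. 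Your appeal to ``concentration from the random sampling that defines $U$'' is likewise not justified here, since $|H_x|$ depends on the evolving partial coloring $\psi$, which is a function of the algorithm's entire history and not of $U$ alone.
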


\medskip
\noindent {\bf Bounding the total cost.} Before outlining the proof sketch of Lemma~\ref{lm:overview:1}, we explain how this lemma implies that the total cost of our algorithm is polynomially better than the  $\tilde{O}(n^2)$ threshold. Consider the sequence of iterations in which we extend the coloring to the star edges, one after the other. Partition this sequence into two phases. Phase I starts in the beginning, when $m_0$ (the number of uncolored edges) equals $|U| \cdot \Delta$, and ends when $m_0$  becomes equal to some threshold $m^{\star} = n^{4/5}$. Phase II consists of the remaining $m^{\star}$ iterations. By Lemma~\ref{lm:overview:1}, during an iteration in Phase I, the expected cost incurred is at most: $$O\left(\frac{n^{2+3/4}}{m_0^2} \right) \leq O  \left(\frac{n^{2+3/4}}{m_0 \cdot m^{\star}} \right) = O\left(\frac{n^{2+3/4-4/5}}{m_0} \right) = O\left( \frac{n^{19/20}}{m_0}\right).$$
Next, the cost incurred during any iteration in Phase II is at most $n$ (this also follows from Lemma~\ref{lm:overview:1}). Finally, we have already discussed that we pay a cost of only $\tilde{O}(m) = \tilde{O}(n^{3/2})$ for coloring the edges in $G$ that are {\em not} part of the stars. Thus, the total cost we pay to color the whole input graph $G$ is at most:
$$\sum_{m_0 =m^{\star}}^{|U| \cdot \Delta} O\left( \frac{n^{19/20}}{m_0}\right) + m^{\star} \cdot n + \tilde{O}(n^{3/2}) = \tilde{O}\left(n^{19/20}\right)+n^{9/5} + \tilde{O}\left(n^{3/2}\right) = \tilde{O}\left(n^{19/20} \right).$$
{\bf Remark.} By balancing out the costs incurred in different parts of the algorithm against one another, we can further improve the bound derived above. The main focus of this section, however, is to show how to get a polynomial improvement over the state-of-the-art bound of $\tilde{O}(n^2)$ on this specific instance, which stood for nearly four decades. We have already succeeded in this goal.

\medskip
\noindent {\bf Proof sketch of Lemma~\ref{lm:overview:1}.}
Let $C$ denote the expected cost incurred during the concerned iteration. Since the length of any alternating path in $G$ is at most $n$, we trivially have $C \leq n$. It remains to show that $C \leq O(\Delta^2 n |U|/m_0^2)$. Towards this end, we will start by lower bounding the expected size of $H$.

For each center $u \in U$, let $d_u$ denote its ``uncolored degree'', i.e., the number of uncolored star edges incident on $u$. It is easy to observe that $|\miss_{\psi}(u)| \geq d_u+1$, because the palette consists of $\Delta+1$ colors but the overall degree of $u$ is at most $\Delta$. Since $x \in [\Delta+1]$ is sampled u.a.r., a given center $u \in U$ belongs to the set $X$ with probability at least $\frac{|\miss_{\psi}(u)|}{\Delta+1} \geq \frac{d_u+1}{\Delta+1} \geq \frac{d_u}{\Delta}$, and when it does, it contributes $d_u$ many incident edges to the set $H$. Thus, in expectation, at least $d_u \cdot (d_u/\Delta) = d_u^2/\Delta$ many edges incident on $u$ are present in the set $H$. Since $\sum_{u \in U} d_u = m_0$, summing this bound over all the centers  $u \in U$, we get:
\begin{equation}
\label{eq:overview:1}
\mathbb{E}[|H|] \geq \sum_{u \in U} \frac{d_u^2}{\Delta} \geq \frac{\left(\sum_{u \in U} d_u \right)^2}{|U| \cdot \Delta} = \frac{m_0^2}{|U| \cdot \Delta}.
\end{equation}
In the above derivation, the second-step follows from an application of Cauchy-Schwarz inequality.

Consider the collection of alternating paths $\mathcal{P}_x := \{ P_x(v' \to u') : (v', u') \in H, u' \in X\}$. Each edge $(u', v') \in H$ contributes a unique path to $\mathcal{P}_{x}$, {\bf which gives us a natural one to one mapping between $H$ and $\mathcal{P}_x$}. Thus, we  essentially sample a path $P \in \mathcal{P}_x$ u.a.r.~and flip the colors on it. So, our expected cost is equal to $|P|$, the  length of this path $P$ that we sample. Fix any color $c \in [\Delta+1]$, and let $\mathcal{P}_{x,c} := \{ P_{x}(v' \to u') \in \mathcal{P}_x : \clr(v' \to u') = c\}$ denote the subset of these paths where the client has tentative color $c$ for the concerned center. Because of Invariant~\ref{inv:tentative}, each client $v \in V \setminus U$ contributes at most one path to the collection $\mathcal{P}_{x,c}$. As the edges with colors $x$ and $c$ in $G$ form a collection of vertex-disjoint paths and cycles, and each such path is counted at most twice in the collection $\mathcal{P}_x$ (once for each of its endpoints that is a client), this implies that $\sum_{P \in \mathcal{P}_{x, c}} |P| = O(n)$. Next, summing over all the colors $c\in [\Delta+1]$, we get:
\begin{equation}
\label{eq:paths:1}
\sum_{P \in \mathcal{P}_x} |P| = \sum_{c \in [\Delta+1]} \sum_{P \in \mathcal{P}_{x,c}} |P| =  O(\Delta n).
\end{equation}
 From~(\ref{eq:overview:1}) and~(\ref{eq:paths:1}), we now derive that the expected length of the alternating path $P$ that we need to flip (which equals the expected cost we incur) is at most:\footnote{We are being a bit  sloppy in this very high level proof sketch, in assuming that $|H| = \Omega(m_0^2/(|U| \cdot \Delta))$, whereas~(\ref{eq:overview:1}) holds only in expectation}
\begin{equation}
\label{eq:map:1}
\frac{\sum_{P \in \mathcal{P}_x} |P|}{|\mathcal{P}_x|} = 
\frac{\sum_{P \in \mathcal{P}_x} |P|}{|H|} = O\left(\frac{\Delta n}{m_0^2/(|U| \cdot \Delta)}\right) = O\left(\frac{\Delta^2 n |U|}{m_0^2}  \right).
\end{equation}
This concludes the proof sketch.
\qed

\subsubsection{Looking Back: The Key Insight}
\label{sub:sec:insight}

In summary, the main insight which allows us  to achieve a polynomial improvement over the previous approach~\cite{arjomandi1982efficient,gabow1985algorithms,sinnamon2019fast} is this: We carefully ensure that the uncolored edges are all incident on relatively few ``centers'', so that together they form a collection of ``stars''. We then carefully exploit this very special structure of the uncolored edges, to argue that we can perform a color-extension on one of them by flipping a relatively short  alternating path.  

To elaborate on this in a bit more detail, consider the $\sqrt{n}$-regular graph $G = (V, E)$ defined at the start of Section~\ref{sub:sec:instance}. If we wish to solve this instance via the previous approach~\cite{arjomandi1982efficient,gabow1985algorithms,sinnamon2019fast}, then we will end up invoking Ingredient I (see Section~\ref{sub:sec:review}). This is because if we  invoke Ingredient II, then the recursion tree will get truncated at the very first level (as $\Delta = \sqrt{n}$) and it will revert back to  Ingredient I in any case. Thus, on this instance, the previous approach   samples an uncolored edge $e$ u.a.r.~and then extends the existing partial coloring to $e$. As discussed in Section~\ref{sub:sec:review}, this color-extension step incurs an expected cost of $O(\Delta m/\lambda)$, where $\lambda$ is the current number of uncolored edges. Note that here we have {\em not} used any {\em structural property of the subgraph induced by the uncolored edges}, at any point in the analysis. 

In contrast, suppose  we are somehow lucky, so that all the  $\lambda$ edges that are currently uncolored are incident on a ``center'' vertex $u$. Thus, the uncolored edges are of the form $(u, v_1), 
\ldots, (u,v_{\lambda})$. It is then easy to verify that $|\miss_{\psi}(u)| \geq \lambda+1$, where $\psi$ denotes the current partial coloring. Consider any $i \in [\lambda]$, and let $c_i \in \miss_{\psi}(v_i)$ be any arbitrary missing color at $v_i$. If we want to extend the coloring $\psi$ to the edge $(u, v_i)$, then we  have at least $(\lambda+1)$ many different alternating paths (starting from $v_i$)  to choose from, one for every pair of colors in  $\bigcup_{c \in \miss_{\psi}(u)}\{ c, c_i\}$. Summing over all $i \in [\lambda]$, this gives us at least $\Omega(\lambda^2)$ many possible alternating paths, for the star centered at $u$. Hence, using the same argument as in the analysis of Ingredient I (see Section~\ref{sub:sec:review}), we can now infer that a path sampled u.a.r.~from this collection will have an expected length of $O(\Delta m/\lambda^2)$. This gives us a ``polynomial advantage'', because if the uncolored edges were scattered arbitrarily around the input graph, then we will only get an upper bound of $O(\Delta m/\lambda)$ on the average length of an alternating path, as opposed to the upper bound of $O(\Delta m/\lambda^2)$ we are getting for a star.

The above discussion motivates our basic algorithmic template in Section~\ref{sub:sec:template}, where we sample relatively few centers $U$, and set things up in a manner so that we only need to worry about extending a partial coloring to the uncolored edges belonging to the stars around these centers. We believe  that this key take home message from our algorithm, that of exploiting the structure of the remaining uncolored edges, will find future applications in the edge coloring literature. 

\subsubsection{Dealing with the Fans: The Remaining (and Very Technical) Challenge}

Until now, we have been ignoring the aspect where we might need to rotate a Vizing fan while performing a color-extension (see Assumption (b) in Section~\ref{sub:sec:review}). Dealing with the fans create a very significant bottleneck, which we 
 need to overcome via an extremely delicate case analysis. 

Let's revisit the proof sketch of Lemma~\ref{lm:overview:1}. There,  we highlighted the existence of a one-to-one mapping between $H$ (the relevant set of uncolored edges) and $\mathcal{P}_x$ (the set of alternating paths that we  need to flip while extending the coloring to some edge in $H$). This was crucially used in the first equality in~\Cref{eq:map:1}, where we asserted that $|\mathcal{P}_x| = |H|$. While dealing with the Vizing fans, this property can completely break down, and we might end up in a situation where $|H| \gg |\mathcal{P}_x|$, thereby invalidating the derivation in~\Cref{eq:map:1}.

In a bit more detail, suppose that we wish to extend the existing partial  coloring to an uncolored edge $(u', v') \in H$, with $u' \in U$ being the center. In general, this involves (i) rotating a fan around  $u'$, and (ii) then flipping the colors on an alternating path $P''$ which starts from {\em some other client} $v'' \neq v'$ (in the same star centered around $u'$). Thus, the edge $(u', v')$ gets {\em mapped} to this path $P''$. The issue is that {\em many different edges} can now get mapped to the {\em same} alternating path. This, in turn,  leads to the problematic situation described in the preceding paragraph (i.e., $|H| \gg |\mathcal{P}_x|$).

To overcome this significant challenge, the execution of our algorithm will split into four steps, as highlighted next, based on the following four different cases.

\paragraph{Case I: Pairing the clients.} \label{overview-step1} To establish a lower bound on $|\paths_x|$, the first question is what would happen if we do not try to rotate Vizing fans at all, but only flip the alternating paths $P_{x}(v\rightarrow u)$. The upfront issue is that an alternating path $P_{x}(v\rightarrow u)$ could end at $u$, so flipping the colors of $P_x(v\rightarrow u)$ does not allow us to extend $\psi$ to $(u, v)$ immediately. To work around this issue, a basic observation is that if another client $v'\neq v$ is generating the same type $\{x, \clr(v\rightarrow u)\}$ of alternating path $P_{x}(v'\rightarrow u)$ (or equivalently, $\clr(v\rightarrow u) = \clr(v'\rightarrow u)$), then at most one of $P_{x}(v\rightarrow u), P_{x}(v'\rightarrow u)$ can end at $u$, which means at least one of the two color extensions would be valid; this is actually the same u-fan technique from \cite{gabow1985algorithms}. Step \ref{alg-step1} of our main algorithm will be dealing with this case.

Based on this important observation, we will mainly focus here on the case where clients $v$ are missing different colors so that the u-fan technique fails. In this case, we have to use Vizing's procedure to color the edge $(u, v)$. The main issue shall arise with Vizing's procedure, but now we have an extra condition that all the colors $\clr(v\rightarrow u)$ are different for distinct clients $v$ of $u$.
(In this high-level overview we make the simplifying assumption that either all clients can be paired together as described above, or all the colors $\clr(v\rightarrow u)$ are different for distinct clients $v$;
the actual algorithm and analysis are much more subtle.) 

\paragraph{Case II: Dealing with directed chains.} \label{overview-step2} As a wishful thought, since the clients $v$ are missing different colors, Vizing's procedure on edges $(u, v)$ might yield different alternating paths. However, this wishful thought is generally incorrect. Consider a case where $u$ has a sequence of neighbors $w_0, w_1, \ldots, w_l$ such that $\psi(u, w_j)\in \miss_{\psi}(w_{j-1})$, for all $1\leq j\leq l$, and $\psi(u, w_{l-1})\in \miss_\psi(w_l)$. If all the clients are missing some colors in $\{\psi(u, w_k)\mid 0\leq k <l\}$, then any edge $(u, v)$ would generate the same $\{x, \psi(u, w_{l-1})\}$-alternating path through Vizing's procedure. To bypass this issue, our key insight is that, if $v$ is missing some color $\psi(u, w_k)$ and another vertex $v'$ is missing some other color $\psi(u, w_{k'}), k'<k$, then flipping the $\{x, \psi(u, w_k)\}$-alternating path starting at $u$ can also lead to a valid color extension. See \Cref{overview-fan} for an illustration. Step \ref{alg-step2} of our main algorithm will be dealing with this case.

\begin{figure}
	\centering
	\begin{tikzpicture}[thick,scale=1]
	\draw (2, 2) node(1)[circle, draw, fill=black!50,
	inner sep=0pt, minimum width=6pt, label = $u$] {};
	
	\draw (-3, 0) node(3)[circle, draw, fill=black!50,
	inner sep=0pt, minimum width=6pt, label = {-90: {$v$}}] {};
	
	\draw (-1, 0) node(4)[circle, draw, fill=black!50,
	inner sep=0pt, minimum width=6pt, label = {-90: {$v'$}}] {};

	\draw (1, 0) node(5)[circle, draw, fill=black!50,
	inner sep=0pt, minimum width=6pt, label = {-90: {$w_0$}}] {};
	
	\draw (3, 0) node(6)[circle, draw, fill=black!50,
	inner sep=0pt, minimum width=6pt, label = {-90: {$w_1$}}] {};
	
	\draw (5, 0) node(7)[circle, draw, fill=black!50,
	inner sep=0pt, minimum width=6pt, label = {-90: {$w_2$}}] {};
	
	\draw (7, 0) node(8)[circle, draw, fill=black!50,
	inner sep=0pt, minimum width=6pt, label = {-90: {$w_3$}}] {};
	
	\draw [line width = 0.5mm, dashed] (1) to (3);
	\draw [line width = 0.5mm, dashed] (1) to (4);
	\draw [line width = 0.5mm, color=red] (1) to (5);
	\draw [line width = 0.5mm, color=orange] (1) to (6);
	\draw [line width = 0.5mm, color=cyan] (1) to (7);
	\draw [line width = 0.5mm, color=purple] (1) to (8);
	\draw [->, line width = 0.5mm] (5) to (6);
	\draw [->, line width = 0.5mm] (6) to (7);
	\draw [->, line width = 0.5mm] (7) to[out=20, in=160] (8);
	\draw [->, line width = 0.5mm] (8) to[out=200, in=-20] (7);
	\draw [->, line width = 0.5mm] (3) to[out=-40, in=-140] (6);
	\draw [->, line width = 0.5mm] (4) to[out=-40, in=-140] (5);
	
\end{tikzpicture}
	\caption{In this picture, A vertex $s$ is directed to $t$ if $\psi(u, t)\in \miss_\psi(s)$. Applying Vizing' procedure for either $(u, v)$ or $(u, v')$ would end up with the same alternating path starting with same $\{x, \psi(u, w_2)\}$-alternating path. We will show that the $\{x, \psi(u, w_1)\}$-alternating path can also be used for color extension.}\label{overview-fan}
\end{figure}
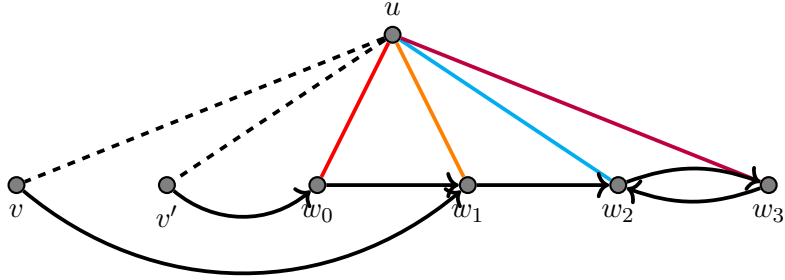

\paragraph{Case III: Dealing with directed stars.} \label{overview-step3} By the above insight, when the colored neighborhood of $u$ has a long directed chain structure, we are in good shape.  This technique fails miserably when the neighborhood forms a directed star structure instead of a chain; that is, all neighbors are directed toward the same center vertex, or more specifically, $\psi(u, w_{l})\in \miss_\psi(w_k)$, for all $k\neq l$, and $\psi(u, w_{l-1})\in\miss_\psi(w_{l})$. Assume we have two clients $v, v'$ such that $\psi(u, w_k)\in \miss_\psi(v), \psi(u, w_{k'})\in \miss_\psi(v')$. The key observation is that we can virtually rotate colors around $u$ to move the uncolored edges $(u, v), (u, v')$ to $(u, w_k), (u, w_{k'})$. Then, as $w_{k'}, w_k$ are both missing the same color $\psi(u, w_{l})$, we are actually back to the aforementioned \emph{pairing the clients case}, where we can pair $w_k, w_{k'}$ together by considering the $\{x, \psi(u, w_{l})\}$-alternating paths from $w_k, w_{k'}$, respectively. See \Cref{overview-star} for an illustration.

In our actual algorithm, we will not actually rotate both uncolored edges $(u, v), (u, v')$, but merely pick a random client $v$ to extend the coloring $\psi$ while pretending all other uncolored edges $(u, v')$ were rotated. In other words, such rotations of all uncolored edges like $(u, v), (u, v')$ are only {\em virtually done in the analysis}, while the algorithm only rotates a random one to extend one more color. Step \ref{alg-step3} of our main algorithm will be dealing with this case.

\begin{figure}
	\centering
	\begin{tikzpicture}[thick,scale=1]
	\draw (0, 2) node(1)[circle, draw, fill=black!50,
	inner sep=0pt, minimum width=6pt, label = $u$] {};
	
	\draw (-3, 0) node(2)[circle, draw, fill=black!50,
	inner sep=0pt, minimum width=6pt, label = {-90: {$v$}}] {};
	
	\draw (-3, -2) node(3)[circle, draw, fill=black!50,
	inner sep=0pt, minimum width=6pt, label = {-90: {$v'$}}] {};

	\draw (1, 0) node(4)[circle, draw, fill=black!50,
	inner sep=0pt, minimum width=6pt, label = {-90: {$w_0$}}] {};
	
	\draw (1, -2) node(5)[circle, draw, fill=black!50,
	inner sep=0pt, minimum width=6pt, label = {-90: {$w_1$}}] {};
	
	\draw (3, -1) node(6)[circle, draw, fill=black!50,
	inner sep=0pt, minimum width=6pt, label = {-90: {$w_4$}}] {};
	
	\draw (5, -1) node(7)[circle, draw, fill=black!50,
	inner sep=0pt, minimum width=6pt, label = {-90: {$w_3$}}] {};
	
	\draw [line width = 0.5mm, dashed] (1) to (2);
	\draw [line width = 0.5mm, dashed] (1) to (3);
	
	\draw [line width = 0.5mm, color=red] (1) to (4);
	\draw [line width = 0.5mm, color=orange] (1) to (5);
	\draw [line width = 0.5mm, color=cyan] (1) to (6);
	\draw [line width = 0.5mm, color=purple] (1) to (7);
	
	\draw [->, line width = 0.5mm] (2) to (4);
	\draw [->, line width = 0.5mm] (3) to (5);
	\draw [->, line width = 0.5mm] (4) to (6);
	\draw [->, line width = 0.5mm] (5) to (6);
	\draw [->, line width = 0.5mm] (6) to[out=20, in=160] (7);
	\draw [->, line width = 0.5mm] (7) to[out=200, in=-20] (6);
	
	\draw [->, line width = 1.5mm] (0, -3) to (0, -5);
	
	\draw (0, -6) node(8)[circle, draw, fill=black!50,
	inner sep=0pt, minimum width=6pt, label = $u$] {};
	
	\draw (-3, -8) node(9)[circle, draw, fill=black!50,
	inner sep=0pt, minimum width=6pt, label = {-90: {$v$}}] {};
	
	\draw (-3, -10) node(10)[circle, draw, fill=black!50,
	inner sep=0pt, minimum width=6pt, label = {-90: {$v'$}}] {};

	\draw (1, -8) node(11)[circle, draw, fill=black!50,
	inner sep=0pt, minimum width=6pt, label = {-90: {$w_0$}}] {};
	
	\draw (1, -10) node(12)[circle, draw, fill=black!50,
	inner sep=0pt, minimum width=6pt, label = {-90: {$w_1$}}] {};
	
	\draw (3, -9) node(13)[circle, draw, fill=black!50,
	inner sep=0pt, minimum width=6pt, label = {-90: {$w_4$}}] {};
	
	\draw (5, -9) node(14)[circle, draw, fill=black!50,
	inner sep=0pt, minimum width=6pt, label = {-90: {$w_3$}}] {};
	
	\draw [line width = 0.5mm, color=red] (8) to (9);
	\draw [line width = 0.5mm, color=orange] (8) to (10);
	
	\draw [line width = 0.5mm, dashed] (8) to (11);
	\draw [line width = 0.5mm, dashed] (8) to (12);
	\draw [line width = 0.5mm, color=cyan] (8) to (13);
	\draw [line width = 0.5mm, color=purple] (8) to (14);
	
	\draw [->, line width = 0.5mm] (11) to (13);
	\draw [->, line width = 0.5mm] (12) to (13);
	\draw [->, line width = 0.5mm] (13) to[out=20, in=160] (14);
	\draw [->, line width = 0.5mm] (14) to[out=200, in=-20] (13);
\end{tikzpicture}
	\caption{If the neighborhood of $u$ is a star rather than a chain, then we could virtually rotate uncolored edges to $(u, w_0), (u, w_1)$. As $w_0, w_1$ are both missing blue around them, we can
pair them together and reuse the argument for the
 \emph{pairing the clients case}.}
 \label{overview-star}
\end{figure}
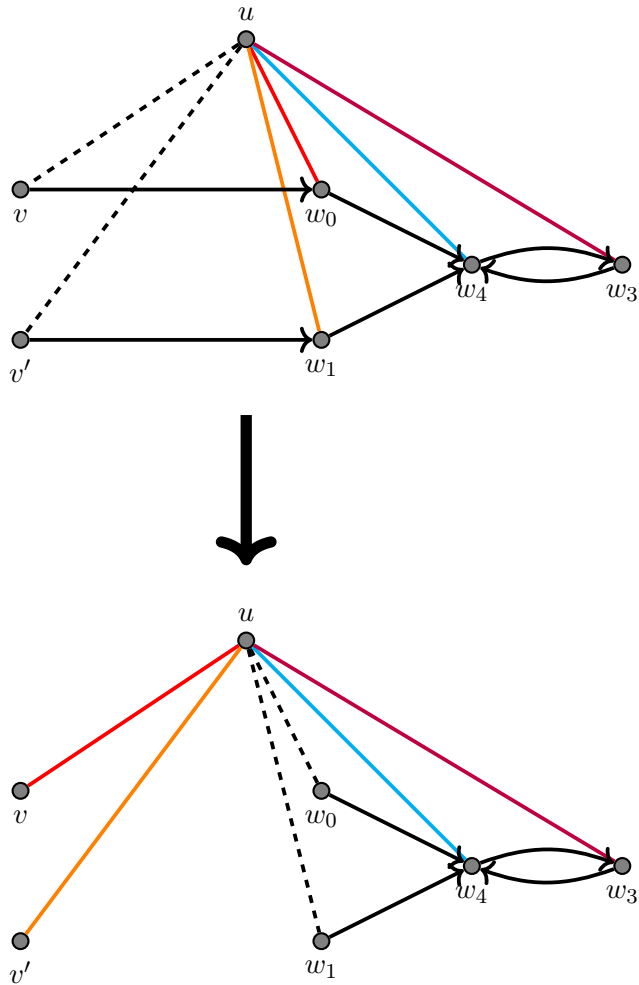

\paragraph{Case IV: Congestion control.} \label{overview-step4} One final obstacle comes from the interference between different star center $u$'s. When applying the above technique for handling star-like neighborhoods of both $u$ and $u'$, we might wish to virtually move uncolored edges $(u, v)$ and $(u', v')$ to a common neighbor $w$ of $u, u'$. In this case, when averaging the length of alternating path, the one starting at $w$ may be counted too many times --- an issue which will be called \emph{congestion} (see \Cref{def-cong}). To alleviate this disadvantage, letting $y$ be the missing color around $w$, our key observation is that $u$ and $u'$ have neighbors $w_l$ and $w_{l'}$, respectively, such that $y=\psi(u, w_l) = \psi(u', w_{l'})$. Then, we can consider $\{x, y\}$-alternating paths starting from $u, u'$ respectively; in other words, even when $w$ is counted many times, we can still collect a large number of different $\{x, y\}$-alternating paths of the same type. See \Cref{overview-cong} for an illustration. Step \ref{alg-step4} of our main algorithm will be dealing with this case.

\begin{figure}
	\centering
	\begin{tikzpicture}[thick,scale=1]
	\draw (0, 3) node(1)[circle, draw, fill=black!50,
	inner sep=0pt, minimum width=6pt, label = $u$] {};
	\draw (-2, 1.5) node(2)[circle, draw, fill=black!50,
	inner sep=0pt, minimum width=6pt, label = {180: {$v$}}] {};
	\draw (2, 1.5) node(3)[circle, draw, fill=black!50,
	inner sep=0pt, minimum width=6pt, label = {0:{$w_l$}}] {};
	\draw (0, 0) node(4)[circle, draw, fill=black!50,
	inner sep=0pt, minimum width=6pt, label = {0:{$w$}}] {};
	\draw (0, -3) node(5)[circle, draw, fill=black!50,
	inner sep=0pt, minimum width=6pt, label = {-90:{$u'$}}] {};
	\draw (-2, -1.5) node(6)[circle, draw, fill=black!50,
	inner sep=0pt, minimum width=6pt, label = {180:{$v'$}}] {};
	\draw (2, -1.5) node(7)[circle, draw, fill=black!50,
	inner sep=0pt, minimum width=6pt, label = {0:{$w_{l'}$}}] {};
	
	\draw [line width = 0.5mm, dashed] (1) to (2);
	\draw [line width = 0.5mm, color=cyan] (1) to (3);
	\draw [line width = 0.5mm, color=red] (1) to (4);
	\draw [line width = 0.5mm, dashed] (5) to (6);
	\draw [line width = 0.5mm, color=cyan] (5) to (7);
	\draw [line width = 0.5mm, color=orange] (5) to (4);
	
	\draw [->, line width = 0.5mm] (2) to (4);
	\draw [->, line width = 0.5mm] (4) to (3);
	\draw [->, line width = 0.5mm] (6) to (4);
	\draw [->, line width = 0.5mm] (4) to (7);
	
\end{tikzpicture}
	\caption{If $w$ is the target of many clients in different star subgraphs, we can show that all these star centers have the same type of alternating paths which begin with blue edges.}\label{overview-cong}
\end{figure}
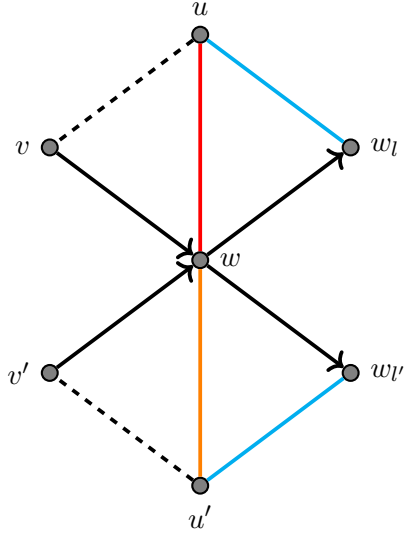

\section{Preliminaries}\label{prelim}
All logarithms will take base $2$. Given a simple undirected graph $G = (V, E)$ on $n$ vertices and $m$ edges with maximum degree $\Delta\geq n^{1/3}$, for any vertex subset $S$, let $E[S]$ be all edges in $E$ between vertices in $S$, and for any pair of disjoint vertex sets $X, Y$, let $E[X, Y]$ be the set of edges in $E$ between $X, Y$.

If $\Delta < n^{1/3}$, then we can apply the $\tilde{O}(m\Delta)$ time algorithm from \cite{gabow1985algorithms}. Let $\psi: E\rightarrow \{\bot, 1, 2, \ldots, \Delta+1\}$ be a valid {\em partial} edge coloring. For any vertex $v\in V$, let $N_\psi(v) = N_G^\psi(v)$ be the set of  { uncolored neighbors} adjacent to $v$,
and write $\deg_G^\psi(v) = |N_G^\psi(v)|$;
a neighbor $v$ of $u$ is called an {\em uncolored neighbor} of $u$, if $\psi(u, v) = \bot$. 
Also, let $\miss_\psi(v)\subseteq \{1, 2, \ldots, \Delta+1\}$ be the set of colors that are {\em missing} around vertex $v$.

For any pair of distinct colors $x, y\in \{1, 2, \ldots, \Delta+1\}$, an {\em $\{x, y\}$-alternating path} is a maximal simple path $P = \langle u_0, u_1, \ldots, u_k\rangle$ in $G$ whose edges are receiving colors from $\{x, y\}$ such that $u_0, u_k$ are missing exactly one color in $\{x, y\}$; the number of edges $k$ in any path $P$ is denoted by $|P|$. A {\em flip} operation of an $\{x, y\}$-alternating path $P$ is to exchange the $x, y$ colors of all edges along $P$.

\begin{lemma}\label{sum-alt-path}
	Given any partial edge coloring $\psi$ and any color $x\in \{1, 2, \ldots, \Delta+1\}$, for any $y\in \{1, 2, \ldots, \Delta+1\}\setminus \{x\}$, let $L_{x, y}$ denote the total length of all $\{x, y\}$-alternating paths of lengths at least $2$. Then $\sum_{y\neq x}L_{x, y} < 3m$.
\end{lemma}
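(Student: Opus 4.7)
My plan is to bound $\sum_{y\neq x} L_{x,y}$ by swapping the order of summation: rewrite it as the count, over all edges $e\in E$, of the number of colors $y\neq x$ such that $e$ lies in an $\{x,y\}$-alternating path of length $\geq 2$. Since uncolored edges belong to no alternating path, I only need to bound the contribution from edges colored $x$ and the contribution from edges colored some $y\neq x$, separately.

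First I would handle the $x$-colored edges. For $e=(a,b)\in E_x := \psi^{-1}(x)$ and a fixed color $y\neq x$, $e$ belongs to a unique maximal $\{x,y\}$-alternating component, which is either a cycle (not counted in $L_{x,y}$) or a path. For this component to be a path of length $\geq 2$, at least one of $a$ or $b$ must have some $y$-colored incident edge. Therefore the number of valid $y$'s is at most the number of non-$x$ colors appearing on edges incident to $a$ or $b$, which is bounded by $(\deg(a)-1)+(\deg(b)-1)$. Summing, and using that $E_x$ is a matching (so $\sum_{(a,b)\in E_x}(\deg(a)+\deg(b)) = \sum_{v\in V(E_x)}\deg(v) \leq 2m$), the total contribution from $E_x$ is at most $2m - 2|E_x|$.

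Next I would handle edges colored $y\neq x$: each such edge belongs to exactly one $\{x,y\}$-alternating component (for its own $y$) and contributes at most $1$ to the total, giving a contribution of at most $m - |E_x|$. Adding, $\sum_{y\neq x} L_{x,y} \leq 3m - 3|E_x|$. When $|E_x|\geq 1$ this is already strictly less than $3m$; when $|E_x|=0$ the sum is trivially $0$, since any $\{x,y\}$-alternating path of length $\geq 2$ must contain at least one $x$-colored edge (the edges along such a path alternate in color between $x$ and $y$). There is no substantial obstacle here; the argument is essentially bookkeeping. The only minor points requiring care are recognizing that $\{x,y\}$-alternating cycles do not contribute to $L_{x,y}$ (which only helps the upper bound), and separately handling the corner case $|E_x|=0$ in order to obtain strict inequality.
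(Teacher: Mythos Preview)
Your argument is correct. It differs in route from the paper's proof, though both are double-counting arguments arriving at essentially the same bound $3(m-|E_x|)$.

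The paper's proof is a one-liner: in any $\{x,y\}$-alternating path of length at least $2$, the number of $x$-colored edges is at most twice the number of $y$-colored edges (the extreme case being a length-$3$ path $x,y,x$), so the path length is at most three times its number of $y$-edges. Summing over all $y\neq x$ and all such paths, each edge with $\psi(e)\neq x$ is counted at most once (for its own color), giving $\sum_{y\neq x}L_{x,y}\leq 3\,|\{e:\psi(e)\neq x\}|<3m$. By contrast, you split the double count into the $x$-colored and non-$x$-colored edges and bound the former via a degree-sum argument using that $E_x$ is a matching. The paper's per-path ratio trick is shorter; your approach is a bit more hands-on but has the virtue of explicitly handling the corner case $|E_x|=0$ needed for the strict inequality (which the paper's inequality $|\{e:\psi(e)\neq x\}|<m$ silently assumes away).
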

\begin{proof}
	For any $\{x, y\}$-alternating path $P$ such that $|P|\geq 2$, the number of edges in $P$ with color $x$ is at most twice the number of edges in $P$ with color $y$. Therefore, taking a summing over all $y\neq x$ and all such $\{x, y\}$-alternating paths, we have 
	$$\sum_{y\neq x}L_{x, y} \leq 3\cdot |\{e\in E\mid \psi(e)\neq x\}| < 3m.$$
\end{proof}

\begin{lemma}[Eulerian partition 
\cite{arjomandi1982efficient,gabow1985algorithms,sinnamon2019fast}]\label{euler}
	Given an undirected simple graph $G = (V, E)$ on $n$ vertices and $m$ edges and maximum degree $\Delta$, there is a linear time deterministic algorithm that partitions $G$ into to two edge-disjoint subgraphs $G_1 = (V, E_1), G_2 = (V, E_2)$, such that the maximum degrees $\Delta_1, \Delta_2$ of $G_1, G_2$ are in the range $\floor{\Delta/2}\leq \Delta_1, \Delta_2\leq \ceil{\Delta/2}$.
\end{lemma}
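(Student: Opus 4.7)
The plan is to deploy the classical Eulerian-circuit partition, which is well known to give a near-balanced edge splitting. First I would reduce to the all-even-degree case: let $O\subseteq V$ be the (even-sized) set of odd-degree vertices of $G$, pair the vertices of $O$ up arbitrarily, and add one auxiliary edge between each pair to obtain a multigraph $G^\ast$ on the same vertex set in which every vertex has even degree. This preprocessing runs in $O(n+m)$ time and adds at most $n/2$ edges.

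Next, within each connected component of $G^\ast$ I would compute an Eulerian circuit via Hierholzer's algorithm in $O(m)$ total time; traverse each such circuit in order and alternately assign consecutive edges to two auxiliary sets $E_1^\ast$ and $E_2^\ast$; then set $E_1:=E_1^\ast\cap E$ and $E_2:=E_2^\ast\cap E$ by discarding the auxiliary edges. The whole procedure is clearly linear time. The reason the degrees come out balanced is the following elementary observation: every \emph{internal} visit of an Eulerian circuit at a vertex $v$ uses two \emph{consecutive} edges in the circuit, so these two edges receive indices of opposite parity and contribute exactly one edge each to $E_1^\ast$ and $E_2^\ast$. Thus in $G^\ast$ every vertex's degree is split as evenly as possible, up to a $\pm 1$ term that can only appear at the start/end vertex of a circuit of odd length. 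Removing the at most one auxiliary edge incident to each $v\in O$ then shifts that vertex's $E_1$- and $E_2$-degrees by at most $1$. A short parity case analysis (on whether $d_v$ is even or odd, and on whether $v$ is a circuit start) shows that in the end $\deg_{E_1}(v),\deg_{E_2}(v)\in\{\lfloor d_v/2\rfloor,\lceil d_v/2\rceil\}$. Taking the maximum over $v$ yields $\Delta_1,\Delta_2\le\lceil\Delta/2\rceil$, and looking at any vertex $v^\star$ with $d_{v^\star}=\Delta$ yields $\Delta_1,\Delta_2\ge\lfloor\Delta/2\rfloor$.

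The delicate point, and the one place where care is needed, is to make sure the two independent $\pm 1$ errors --- the odd-length-circuit parity mismatch at the start vertex, and the removal of an auxiliary edge at vertices of $O$ --- cannot compound in the same direction and push some vertex's degree outside the allowed interval. A convenient way to coordinate them is, for each odd-length Eulerian circuit that contains at least one auxiliary edge, to designate the start vertex to be an endpoint of such an auxiliary edge: then the $+1$ picked up at the start vertex in (say) $E_1^\ast$ is cancelled by the subsequent removal of that auxiliary edge from $E_1^\ast$. For the (atypical) components in which every vertex already has even degree in $G$ and the Eulerian circuit happens to have odd length, the same parity bookkeeping can be absorbed into the $\lceil\cdot\rceil/\lfloor\cdot\rfloor$ slack present in the lemma statement. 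Apart from these bookkeeping details, the argument is entirely standard and matches what is used (with minor variations) in the prior works \cite{arjomandi1982efficient,gabow1985algorithms,sinnamon2019fast} cited in the lemma.
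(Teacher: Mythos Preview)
The paper does not give a proof of this lemma; it is cited as a known result from the listed references. Your construction --- make all degrees even via auxiliary edges, run Hierholzer, alternate edges along each Euler circuit, then discard the auxiliary edges --- is the standard one, and you correctly flag the delicate ``compounding $\pm 1$'' issue.

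There is, however, a genuine gap in your last case. For a component in which every vertex already has even degree in $G$ and the Euler circuit has odd length, there is \emph{no} $\lceil\cdot\rceil/\lfloor\cdot\rfloor$ slack to appeal to: if the start vertex $s$ has (even) degree $2k$, the alternating assignment gives it degrees $k{+}1$ and $k{-}1$, whereas $\lfloor d_s/2\rfloor=\lceil d_s/2\rceil=k$. If moreover $d_s=\Delta$ with $\Delta$ even, this yields $\Delta_1=\Delta/2+1>\lceil\Delta/2\rceil$, violating the claimed bound. The easy repair is to start such a circuit at a vertex of degree strictly below $\Delta$; this works unless the whole component is $\Delta$-regular with $\Delta$ even and an odd edge count (e.g.\ a triangle with $\Delta=2$), and in that case the lemma as literally stated is in fact false --- a triangle cannot be split into two matchings. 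So the residual defect lies in the lemma's statement rather than in your method, and it is immaterial for the paper's applications, where $\Delta$ is large. (A smaller imprecision: in your fix for odd-length circuits that contain an auxiliary edge, merely starting at an endpoint $a$ of that edge is not enough --- you also need the auxiliary edge to land at an odd index in the circuit, e.g.\ by taking it as the first edge $e_1$, so that its removal actually cancels the $+1$ surplus at $a$ rather than compounding it.)
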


\subsection{Vizing's Original Algorithm} Let us describe the basic procedure by \cite{vizing1965critical} that extends any partial edge coloring $\psi$ by one more colored edge. Let $(u, v)\in E$ be any uncolored edge under $\psi$; that is, $\psi(u, v) = \bot$. Then find a sequence of distinct neighbors $v = v_0, v_1,v_2, \ldots, v_k$ of $u$ such that the following holds; this sequence $v_1,v_2, \ldots, v_k$ is usually called a {\em Vizing fan}.
\begin{itemize}[leftmargin=*]
	\item For any $1\leq i\leq k$, $\psi(u, v_i)\in \miss_\psi(v_{i-1})$.
	\item Either $\miss_\psi(u)\cap \miss_\psi(v_k)\neq \emptyset$, or there exists an index $1\leq j<k$ such that $\psi(u, v_j)\in \miss_\psi(v_k)$.
\end{itemize}

If $\miss_\psi(u)\cap \miss_\psi(v_k)\neq \emptyset$, then take an arbitrary color $x\in \miss_\psi(u)\cap \miss_\psi(v_k)$ and rotate the coloring around $u$ as: $\psi(u, v_i)\leftarrow \psi(u, v_{i+1}), 0\leq i<k$, and $\psi(u, v_k)\leftarrow x$.

Now, let us assume $\miss_\psi(u)\cap \miss_\psi(v_k) =  \emptyset$, so there must exist an index $1\leq j<k$ such that $\psi(u, v_j)\in \miss_\psi(v_k)$. Take an arbitrary color $x\in \miss_\psi(u)$, and define $y = \psi(u, v_j)$. Let $P$ be the $\{x, y\}$-alternating path beginning at $u$.
\begin{enumerate}[(1),leftmargin=*]
    \item $P$ does not end at $v_{j-1}$.

    Apply a rotation operation: $\psi(u, v_i)\leftarrow \psi(u, v_{i+1}), 0\leq i<j$, and flip the $\{x, y\}$-alternating path $P$. Finally, assign $\psi(u, v_{j})\leftarrow x$.

    \item $P$ ends at $v_{j-1}$.
	
    Flip the color of the $\{x, y\}$-alternating path from $u$, then apply a rotation operation: $\psi(u, v_i)\leftarrow \psi(u, v_{i+1}), 0\leq i<k$, then assign color $\psi(u, v_k)\leftarrow y$.
\end{enumerate}

The runtime of Vizing's procedure is bounded by $\tilde{O}(\Delta + |P|)$.

\subsection{Fast Edge Coloring with a Large Additive Slack}
We prove the following statement, which will be used in our main algorithm; a similar statement was proved in \cite{elkin2024deterministic}. Without this lemma, we can still obtain a polynomial improvement over the time barrier $\tilde{O}(m\sqrt{n})$ of $(\Delta+1)$-edge coloring, by instead using the near-linear time $(\Delta+\tilde{O}(\sqrt{\Delta}))$-edge coloring algorithm from \cite{karloff1987efficient}.

\begin{lemma}[\cite{elkin2024deterministic}]\label{slack}
	Given an undirected simple graph $G = (V, E)$ on $n$ vertices and $m$ edges with maximum degree $\Delta$. For any integer $d<\Delta$, there is a deterministic algorithm that computes a $(\Delta + d)$-edge coloring of $G$ in $\tilde{O}(m\Delta / d)$ time.
\end{lemma}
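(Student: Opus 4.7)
My plan is to iteratively apply the Eulerian partition of \Cref{euler} to decompose $G$ into roughly $d$ edge-disjoint spanning subgraphs of low maximum degree, and then edge-color each subgraph independently, using a pairwise disjoint palette, by invoking the classical bounded-degree $(\Delta+1)$-edge coloring algorithm of \cite{gabow1985algorithms} which runs in $\tilde{O}(m\Delta)$ time. The factor $\Delta/d$ in the desired runtime bound is then exactly the maximum degree of each resulting subgraph.

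Concretely, I set $k := 2^t$ where $t := \lfloor \log_2(d/2)\rfloor$, so that $d/4 \leq k \leq d/2$. Applying \Cref{euler} in $t$ successive rounds (recursively on both sides of each split) yields, in $\tilde{O}(m)$ total time, an edge partition $E = E_1 \sqcup \cdots \sqcup E_k$ into spanning subgraphs $G_i=(V,E_i)$, each of maximum degree $\Delta_i \leq \lceil \Delta/k \rceil$; this is because composing $\lceil \cdot/2 \rceil$ a total of $t$ times is bounded above by $\lceil \cdot/2^t \rceil$, which can be checked by induction on $t$.

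For each $i\in[k]$, I then run the algorithm of \cite{gabow1985algorithms} on $G_i$ using a fresh palette $\Pi_i$ of size $\Delta_i+1$, with the palettes $\Pi_1,\ldots,\Pi_k$ drawn disjointly from $\{1,\ldots,\Delta+d\}$. Since the $E_i$'s partition $E$ and the palettes are pairwise disjoint, concatenating these colorings yields a valid edge coloring of $G$. The total number of colors used is at most $\sum_{i=1}^k (\Delta_i + 1) \leq k\lceil \Delta/k\rceil + k \leq \Delta + 2k \leq \Delta + d$. The total runtime is
$$\sum_{i=1}^k \tilde{O}\bigl(|E_i|\,\Delta_i\bigr) \;\leq\; \tilde{O}\!\left(\tfrac{\Delta}{k}\sum_{i=1}^k |E_i|\right) \;=\; \tilde{O}\!\left(\tfrac{m\Delta}{k}\right) \;=\; \tilde{O}\!\left(\tfrac{m\Delta}{d}\right),$$
using $\sum_i |E_i| = m$ and $\Delta_i \leq \lceil\Delta/k\rceil = O(\Delta/d)$.

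There is no serious obstacle here: the only mild subtlety is calibrating $k$ so that the bound $k(\lceil\Delta/k\rceil+1) \leq \Delta+d$ holds even when $k\nmid \Delta$, which the choice $k \leq d/2$ handles; the edge cases ($d$ very small, or $d$ close to $\Delta$) can be dispatched either by reducing to the known $\tilde{O}(m\Delta)$ algorithm directly or by observing that a greedy $2\Delta$-edge coloring takes only $\tilde{O}(m)$ time.
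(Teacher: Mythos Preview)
Your proof is correct and follows essentially the same approach as the paper's own proof: recursively apply the Eulerian partition of \Cref{euler} for roughly $\log_2 d$ levels to obtain $\Theta(d)$ edge-disjoint subgraphs each of maximum degree $O(\Delta/d)$, then color each leaf subgraph with a disjoint palette via the $\tilde{O}(m\Delta)$ algorithm of \cite{gabow1985algorithms}. The only cosmetic differences are that the paper phrases the recursion with a stopping condition ($\Delta_0 < \Delta/2^{l-3}+1$) and bounds the leaf degree inductively by $\Delta/2^i + 1$, whereas you fix the number of levels $t$ up front and invoke the exact identity $\lceil\lceil x/2\rceil/2\rceil = \lceil x/4\rceil$; both yield the same color and time bounds.
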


\begin{proof}  
	Assume that $d \in [2^l, 2^{l+1})$. The algorithm is by a recursive application of the Eulerian partition provided by \Cref{euler}. Basically, on input tuple $\brac{G_0 = (V, E_0), \Delta_0}$, if $\Delta_0 < \Delta / 2^{l-3}+1$, then simply run the algorithm from \cite{gabow1985algorithms} to find a $(\Delta_0+1)$-edge coloring in time $\tilde{O}(|E_0|\Delta / d)$. Otherwise, apply \Cref{euler} on $G_0$, which produces a partition $G_0 = G_1\cup G_2$ in linear time $O(|E_0|)$ into subgraphs $G_1$ and $G_2$ with maximum degrees $\floor{\Delta_0/2}\leq \Delta_1, \Delta_2\leq \ceil{\Delta_0/2}$, respectively. After that, recursively compute two edge colorings $\psi_1, \psi_2$ on tuples $\brac{G_1, \Delta_1}$ and $\brac{G_2, \Delta_2}$, respectively, with disjoint color palettes, and merge the two colorings into a single coloring $\psi_0 = \psi_1\cup \psi_2$ for $G_0$.
	
	The runtime due to each level of recursion, excluding the leaves of the recursion tree, is the time it takes to run the Eulerian partition over all subgraphs at that level, namely $O(m)$; since there are at most $O(\log \Delta)$ recursion levels, this gives a runtime of $\tilde{O}(m)$.
The applications of the algorithm from \cite{gabow1985algorithms} at the leaves of the recursion trees take a total time of $\tilde{O}(m\Delta / d)$. As for the number of colors, it is shown inductively that if graph instance $G_0$ appears on a recursion tree node with depth $i$, then its maximum degree $\Delta_0$ satisfies $\Delta_0 < \frac{\Delta}{2^i} + 1$. Therefore, the recursion tree has depth at most $l-3$, thus at most $2^{l-3}$ different leaves. Since each leaf node uses at most $\frac{\Delta}{2^{l-3}}+2$ colors, the total number of different colors used for the edges of $G$ is at most $\Delta + 2^{l-2} < \Delta+d$.
\end{proof}

\section{Faster Edge-Coloring (Proof of Theorem~\ref{faster-vizing})}
We can assume the maximum degree $\Delta$ is at least $\Omega(n^{1/3})$, since otherwise we can use the algorithm from \cite{gabow1985algorithms} to compute a $(\Delta+1)$-edge coloring in $\tilde{O}(m\Delta) = \tilde{O}(mn^{1/3})$ time. If $\Delta > n^{2/3}$, then we will apply \Cref{euler} to reduce to the case of $\Delta = n^{2/3}$ (see \Cref{large-Delta}).

The main part of this section is devoted to the proof of the following statement. 

\begin{theorem}\label{faster-vizing-small}
	Given a simple undirected graph $G = (V, E)$ on $n$ vertices and $m$ edges with maximum degree $\Delta\in \left[ \tilde{\Omega}(n^{1/3}), n^{2/3}\right]$, there is a randomized algorithm that computes a $(\Delta+1)$-edge coloring with high probability in runtime $\tilde{O}\brac{mn^{\frac{1}{3}}}$.
\end{theorem}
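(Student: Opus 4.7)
The plan is to instantiate the template from Section~\ref{sub:sec:instance} with a sampling probability tuned to the general degree regime. I would sample each vertex of $V$ independently into a set $U$ with probability $p = \tilde{\Theta}(n^{-1/3})$; the hypothesis $\Delta = \tilde{\Omega}(n^{1/3})$ makes $p\Delta$ large enough for standard Chernoff bounds to give, with high probability, $|U| = \tilde{\Theta}(n^{2/3})$, $\Delta(G[U]) = \tilde{O}(p\Delta)$, and $\Delta(G[V\setminus U]) \leq \Delta - \Omega(p\Delta)$. The edge set then splits into two interior parts $E[V\setminus U]$, $E[U]$ and the cross (star) edges $E[U, V\setminus U]$, and the target runtime is $\tilde{O}(m/p) = \tilde{O}(mn^{1/3})$.

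The two interior subgraphs are colored in preprocessing, both drawing from the common palette $[\Delta+1]$. For $E[V\setminus U]$ I would invoke \Cref{slack} with slack $d = \Theta(p\Delta)$, producing a valid coloring in time $\tilde{O}(m\Delta/d) = \tilde{O}(m/p) = \tilde{O}(mn^{1/3})$. The subgraph $E[U]$ has max degree $\tilde{O}(p\Delta) \ll \Delta$, so it can be colored greedily in nearly linear time; because $U$ and $V\setminus U$ are disjoint the two colorings do not interfere. This leaves exactly the star edges $E[U, V\setminus U]$ to be handled iteratively.

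In the star-coloring phase, each client $v \in V\setminus U$ maintains a tentative color $\clr(v\to u) \in \miss_\psi(v)$ for each incident uncolored star edge $(u,v)$, with \Cref{inv:tentative} enforced throughout (easy, since $|\miss_\psi(v)| \geq 1$ and tentative colors must only avoid conflicts among the star edges at $v$). Each iteration samples a random color $x \in [\Delta+1]$, restricts to the centers $X := \{u \in U : x \in \miss_\psi(u)\}$, samples a uniformly random uncolored star edge incident to $X$, and performs a Vizing-style color extension. Modulo the fan-rotation issue, \Cref{lm:overview:1} bounds the expected per-iteration cost by $\tilde{O}\left(\min\!\left(n, \Delta^2 n|U|/m_0^2\right)\right)$; summing over iterations with the phase threshold $m^{\star} = \Delta\sqrt{|U|}$ yields a total star-coloring cost of $\tilde{O}(n\Delta\sqrt{|U|}) = \tilde{O}(mn^{1/3})$, matching the preprocessing.

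The main obstacle is the Vizing fan: once a fan rotation is required, the clean one-to-one correspondence between uncolored edges in $H$ and alternating paths in $\paths_x$ used in \Cref{eq:map:1} breaks down, since several uncolored edges incident to a common center can map to the same flip path. To recover a large enough collection of distinct short alternating paths, I would follow the four-case analysis from Section~\ref{sub:sec:instance}: (I) \emph{pair clients} of a common center that share a tentative color, using the classical u-fan observation from \cite{gabow1985algorithms} so that at least one of the two corresponding alternating paths avoids $u$; (II) for \emph{directed chains} in the colored neighborhood of a center, flip the $\{x, \psi(u, w_k)\}$-alternating path at an earlier index $k$ instead of the endpoint color that Vizing's standard procedure would produce; (III) for \emph{directed stars}, virtually rotate the uncolored edges at the center to move them onto fan neighbors $w_k, w_{k'}$ that share a missing color, reducing back to Case~I; and (IV) address the \emph{congestion} caused by distinct centers $u, u'$ sharing a common downstream target $w$ by observing that the missing color at $w$ forces many $\{x, y\}$-alternating paths starting at $u, u'$ themselves to be of the same type. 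Each case contributes one of the Steps~1--4 of the main algorithm, and the technical heart will be verifying that the per-iteration expected cost still matches \Cref{lm:overview:1} up to polylog factors across all four cases; implementing each iteration in time proportional to its cost via data structures for missing colors, tentative colors, and alternating-path traversal is a secondary but necessary task.
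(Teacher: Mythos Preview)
Your high-level template---random sampling of centers, slack-coloring the interior, then extending to the star edges via the four-case fan analysis---is exactly the paper's approach, and your description of Cases I--IV matches Steps~\ref{alg-step1}--\ref{alg-step4}. However, there is a genuine gap in the quantitative part that makes the argument break on non-regular graphs.

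You sample $U$ from \emph{all} of $V$, giving $|U| = \tilde{\Theta}(n^{2/3})$, and then assert $\tilde{O}(n\Delta\sqrt{|U|}) = \tilde{O}(mn^{1/3})$. Plugging in, $n\Delta\sqrt{|U|} = n^{4/3}\Delta$, and $n^{4/3}\Delta \leq m n^{1/3}$ is equivalent to $m \geq n\Delta$, which holds \emph{only} for near-regular graphs; in general $m$ can be far smaller than $n\Delta$. The underlying reason is the Cauchy--Schwarz step $\mathbb{E}[|H|] \geq m_0^2/(|U|\Delta)$ from~\eqref{eq:overview:1}: when $U$ contains many low-degree (or isolated) vertices, $|U|$ sits in the denominator and the bound becomes useless. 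Put differently, your Phase~II threshold $m^\star = \Delta\sqrt{|U|} = \Delta n^{1/3}$ can exceed the \emph{total} number of star edges $\tilde{O}(pm) = \tilde{O}(m/n^{1/3})$ whenever $m \ll n\Delta$, so the phase split does not even make sense.

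The paper repairs this by first partitioning $V = V_{\hi} \cup V_{\lo}$ with $V_{\lo} = \{v : \deg_G(v) \le \Delta/3\}$, and sampling $U \subseteq V_{\hi}$ and a second set $W \subseteq V_{\lo}$, each with probability $\tilde{\Theta}(n^{-1/3})$. Since $|V_{\hi}| \le 6m/\Delta$, this gives $|U| = \tilde{O}(m/(\Delta n^{1/3}))$, which is precisely what the Cauchy--Schwarz bound needs (see \Cref{uncolored}, case $m_0 \geq m_1$). The price is that a vertex in $V_{\hi}$ whose neighbors all lie in $V_{\lo}$ does not lose degree when $E[U,V_{\hi}\setminus U]$ is removed; the set $W$ is introduced so that removing $E[W,V_{\hi}]$ restores the slack (\Cref{hitset}). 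The star edges centered at $W$ are then handled by a \emph{different} branch of the analysis (\Cref{uncolored}, case $m_0 < m_1$), which exploits $|\miss_\psi(w)| > 2\Delta/3$ for $w \in V_{\lo}$ rather than Cauchy--Schwarz, and correspondingly uses a different length parameter $L$ in~\eqref{len-param}. Without this two-regime split your runtime bound does not follow.
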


Let $V_\lo = \{v\in V\mid \deg_G(v)\leq \Delta /3\}$, and $V_\hi = V\setminus V_\lo$, and denote $n_0 = |V_\hi|$. Take two random subsets of vertices $U\subseteq V_\hi, W\subseteq V_\lo$ which includes each vertex independently with probability $\frac{100\log n}{n^{1/3}}$; if $n_0 < n^{1/3} < \Delta/3$, then simply set $U = \emptyset$.

{\bf Remark.~} If the readers only want to understand how the algorithm behaves on near-regular graphs, they can simply assume $V_{\lo} = \emptyset$ and set $m = n\Delta$ whenever we encounter a complicated expression.

\begin{claim}\label{hitset}
	With high probability the following three statements hold:
	\begin{itemize}[leftmargin=*]
		\item the maximum degree of graph $$G_0 = \brac{V, E[U]\cup E[V_\hi\setminus U]\cup E[V_\hi, V_\lo\setminus W]\cup E[V_\lo]}$$ is at most $\Delta - 10\Delta / n^{1/3}$.
		\item $|U|\leq \frac{200n_0\log n}{n^{1/3}}\leq \frac{200m\log n}{\Delta n^{1/3}}$, and $\sum_{v\in W}\deg_G(v)\leq \frac{200m\log n}{n^{1/3}}$.
	\end{itemize}
\end{claim}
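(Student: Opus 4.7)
The three assertions follow from Chernoff bounds applied to the independent Bernoulli$(p)$ sampling (with $p:=100\log n/n^{1/3}$), combined with a union bound over $n$ vertices. I would verify the three bullets separately.

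For the max-degree bound, fix a vertex $v\in V$. If $v\in V_\lo$, the bound is trivial since $\deg_{G_0}(v)\leq \deg_G(v)\leq \Delta/3$. If $v\in V_\hi$, inspection of $G_0$ shows that the edges incident to $v$ that get removed are those with other endpoint in $U\cup W$ when $v\in V_\hi\setminus U$, and those with other endpoint in $(V_\hi\setminus U)\cup W$ when $v\in U$. In the first case, every neighbor of $v$ (whether in $V_\hi$ or $V_\lo$) lies in $U\cup W$ independently with probability exactly $p$, so the count of removed edges is binomial with mean at least $p\Delta/3$, and a Chernoff bound yields at least $10\Delta/n^{1/3}$ removed edges w.h.p. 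In the second case ($v\in U$) I would split on which side of $v$'s degree dominates: if $|N(v)\cap V_\hi|\geq \Delta/6$ then $|N(v)\cap(V_\hi\setminus U)|$ concentrates around $(1-p)|N(v)\cap V_\hi|\geq \Delta/12$; otherwise $|N(v)\cap V_\lo|\geq \Delta/6$ and $|N(v)\cap W|$ has mean at least $p\Delta/6 \gg 10\Delta/n^{1/3}$ and again concentrates. A union bound over $v\in V$ completes this part.

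The second bullet is a direct Chernoff bound: $|U|$ is a sum of $n_0$ independent Bernoulli$(p)$ variables, so $|U|\leq 2pn_0=200n_0\log n/n^{1/3}$ w.h.p. The inequality $n_0=O(m/\Delta)$ is deterministic: every $v\in V_\hi$ has $\deg_G(v)>\Delta/3$, so $n_0\Delta/3<\sum_{v\in V_\hi}\deg_G(v)\leq 2m$, and the resulting constant is absorbed into the $200$ in the stated bound. The third bullet asks for concentration of $X:=\sum_{v\in W}\deg_G(v)$, a weighted Bernoulli sum with summands bounded by $R:=\Delta/3$ and mean $\mathbb{E}[X]=p\sum_{v\in V_\lo}\deg_G(v)\leq 2pm=200m\log n/n^{1/3}$. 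A multiplicative Chernoff bound in the form $\Pr[X\geq 2\mathbb{E}[X]]\leq \exp(-\mathbb{E}[X]/(3R))$ delivers the stated bound up to the absorbed constant.

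The main obstacle is the $v\in U$ subcase of bullet one: $V_\hi$-neighbors are retained with probability $1-p$ rather than removed with probability $p$, so a single Chernoff on a Bernoulli$(p)$ sum does not directly yield the $\Theta(p\Delta)$ removal rate that suffices in the easier case. The case split on $|N(v)\cap V_\hi|$ versus $|N(v)\cap V_\lo|$ is exactly what bypasses this: it separately handles the regime where most removed edges come from $V_\hi\setminus U$ (via Chernoff upper-bounding $|N(v)\cap U|$) and the regime where they come from $W$ (via Chernoff lower-bounding $|N(v)\cap W|$).
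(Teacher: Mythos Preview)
Your proposal is correct and matches the paper's argument in all essentials. The only structural difference is the order of the case split for bullet one: the paper first splits on whether $|N(v)\cap V_\hi|\geq \Delta/6$ and only then on $v\in U$ versus $v\in V_\hi\setminus U$, whereas you split on membership first. Your ordering actually buys a small simplification in the $v\in V_\hi\setminus U$ subcase, since you observe directly that every neighbor is sampled into $U\cup W$ with probability exactly $p$, avoiding the neighbor-distribution split there; for the $v\in U$ subcase your argument coincides with the paper's (upper-bound $|N(v)\cap U|$ to lower-bound $|N(v)\cap(V_\hi\setminus U)|$, or fall back to $W$-neighbors).
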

\begin{proof}
	The second property is straightforward by standard concentration inequalities. So let us focus on the first property.
	
	Consider any vertex $v\in V_\hi$. If $v$ has less than $\Delta/6$ neighbors in $V_\hi$, then it has more than $\Delta/6$ neighbors in $V_\lo$. Then, with high probability, it has at least $10\Delta/n^{1/3}$ neighbors in $W$. As $G_0$ excludes all edges in $E[W, V_\hi]$, the degree of $v$ in $G_0$ is at most $\Delta - 10\Delta / n^{1/3}$.
	
	Otherwise, it has $\Delta/6$ neighbors in $V_\hi$. If $v\in V_\hi\setminus U$, then with high probability it has $10\Delta/n^{1/3}$ neighbors in $U$. As $G_0$ excludes all edges in $E[U, V_\hi\setminus U]$, the degree of $v$ in $G_0$ is at most $\Delta - 10\Delta / n^{1/3}$.
	
	If $v\in U$, then with high probability it has at most $\frac{500\Delta\log^2 n}{n^{1/3}} < \Delta/6 - 10\Delta / n^{1/3}$ neighbors in $U$, and thus at least $10\Delta / n^{1/3}$ neighbors in $V_\hi \setminus U$. As $G_0$ excludes all edges in $E[U, V_\hi\setminus U]$, the degree of $v$ in $G_0$ is at most $\Delta - 10\Delta / n^{1/3}$.
\end{proof}

Given any partial coloring of $G$, for any $u\in V$, take an arbitrary color $c_\psi(u)\in \miss_\psi(u)$. Let $C_\psi(u)$ be the set of neighbors such that $\psi(u, v)\neq \bot, \forall v\in C_\psi(u)$. For any set $C_\psi(u)$, build a directed graph $T_\psi(u)$ consisting of all vertices in $C_\psi(u)$, and for each pair of vertices $v, w\in C_\psi(u)$, add a directed edge $(v, w)$ if $c_\psi(v) = \psi(u, w)$. Then each vertex in $T_\psi(u)$ has out-degree at most one; if some vertices have out-degree $0$, we can easily extend the coloring later on. Hence, each weakly connected component in $T_\psi(u)$ is a directed tree plus at most one extra edge. If we treat this weakly connected component as an undirected graph, and remove an arbitrary edge on the unique fundamental cycle, then we would get a tree; subsequently, we will be using the same notation $T_{\psi}(*)$ to denote the tree structures {\em after} removing one potential edge. Formally, we have the following statement.

\begin{claim}
    For each weakly connected component in $T_\psi(u)$, we can find a spanning tree component where each vertex has a directed edge towards its parent.
\end{claim}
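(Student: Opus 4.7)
The plan is to reduce the claim to a simple edge count. First I would observe that every vertex $v\in T_\psi(u)$ has out-degree at most one, since the outgoing edges from $v$ are determined by the single color $c_\psi(v)\in\miss_\psi(v)$: there is at most one neighbor $w\in C_\psi(u)$ with $\psi(u,w)=c_\psi(v)$. Hence a weakly connected component on $k$ vertices carries at most $k$ directed edges and, being connected, at least $k-1$. This forces the component to have either exactly $k-1$ or exactly $k$ directed edges, and I would handle these two cases separately.

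In the $(k-1)$-edge case the component is already a tree as an undirected graph, and since the sum of out-degrees equals $k-1$ there is exactly one vertex $r$ with out-degree $0$. I would designate $r$ as the root and claim that every other vertex's unique outgoing edge points to its parent in the rooted tree: iterating the out-map from any non-root vertex produces a simple path (a tree has no undirected cycles, so the iteration cannot revisit a vertex) which must therefore terminate at the unique sink $r$, so this iterated path is exactly the tree-path to the root and its first edge is the parent edge.

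In the $k$-edge case every vertex has out-degree exactly $1$, so iterating the out-map from any vertex eventually produces a cycle; since the component has exactly $k$ vertices and $k$ edges, this is the unique fundamental cycle of the underlying undirected graph. I would delete any one directed edge $(a,b)$ lying on this cycle; the resulting graph has $k-1$ edges, remains connected (removing a cycle edge preserves connectivity), and now has $a$ as its unique out-degree-$0$ vertex, reducing immediately to the previous case with $a$ as the root.

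There is no real obstacle in this proof; it is a routine structural observation about graphs whose vertices have out-degree at most one (essentially the ``$\rho$-shape'' of functional graphs). The only minor point to verify is that after removing the cycle edge in the second case the outgoing edges genuinely encode the parent relation, but this is immediate from the tree-path argument used in the first case.
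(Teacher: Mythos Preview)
Your argument is correct and is exactly the fleshed-out version of what the paper does: the paper does not actually prove this claim, it merely states it as the formalization of the preceding two sentences (out-degree at most one $\Rightarrow$ each weak component is a tree plus at most one extra edge $\Rightarrow$ delete an edge on the unique cycle). Your two-case breakdown and the functional-graph/$\rho$-shape reasoning are precisely this sketch made rigorous, so there is nothing to add.
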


Let us proceed to color all edges in $E$ incident on $V_\hi$; that is, all edges in $E \setminus E[V_\lo]$. Under the event that \Cref{hitset} holds, we know that the maximum degree of $G_0$ is at most $\Delta - 10\Delta / n^{1/3}$. Then, applying \Cref{slack} on $G_0$, we can find a $(\Delta+1)$-edge coloring $\psi$ for all edges in $G_0$ in $\tilde{O}(m n^{1/3})$ time. Next, the main task is to extend $\psi$ to all edges in $E[U, V_\hi\setminus U]\cup E[W, V_\hi]$. Our algorithm repeats the following procedure for multiple phases until all edges in $E[U, V_\hi\setminus U]\cup E[W, V_\hi]$ are colored under $\psi$.

\paragraph{Preparation.} We will color $E[U, V_\hi\setminus U]$ and $E[W, V_\hi]$ using the same algorithm, but with different settings of parameters. The coloring process will consist of multiple phases. In each phase, let $m_0$ be the current number of uncolored edges in $E[U, V_\hi\setminus U]$, and let $m_1$ be the current number of uncolored edges in $E[W, V_\hi]$. 

If $m_0 = \tilde{O}\brac{m / n^{2/3}}$, then we will apply Vizing's procedure for each of the uncolored edges, which will take time $\tilde{O}\brac{mn^{1/3}}$. Similarly, if $m_1 = \tilde{O}(m/n)$, then we will apply Vizing's procedure for each of the uncolored edge, which will take near-linear time.

For the rest, we are going to assume $m_0 = \tilde{\Omega}\brac{m/n^{2/3}}$ or $m_1 = \tilde{\Omega}(m/n)$. By the above preparatory steps, we can claim the following lower bounds on $m_0, m_1$.
\begin{claim}\label{uncolored-lb}
	If $m_0 \neq 0$, then $m_0 =\tilde{\Omega}\brac{m/n^{2/3}}$; if $m_1\neq 0$, then $m_1 = \tilde{\Omega}(m/n)$.
\end{claim}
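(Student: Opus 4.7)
The plan is to read the claim directly off the preparatory step described immediately above it: the claim is essentially just a packaging of what that step already guarantees. The only external fact I need is that Vizing's procedure (Section~\ref{prelim}) colors a single uncolored edge in time $\tilde{O}(\Delta + |P|) = \tilde{O}(n)$, since the alternating path $P$ it flips has at most $n$ edges.

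For the first bound, I would fix the implicit polylogarithmic factor so that the condition ``$m_0 = \tilde{O}(m/n^{2/3})$'' in the preparation literally means $m_0 \leq c\cdot m\log^k n / n^{2/3}$, using the same constants $c$ and $\log^k n$ on the $\tilde{\Omega}$ side of the claim. If this inequality holds at the start of a phase, the preparation colors each of the $m_0$ uncolored edges of $E[U, V_\hi\setminus U]$ with a single invocation of Vizing's procedure, at total cost
\[
m_0 \cdot \tilde{O}(n) \;\leq\; \tilde{O}\!\left(\frac{m}{n^{2/3}}\cdot n\right) \;=\; \tilde{O}\!\left(mn^{1/3}\right),
\]
which is absorbed into the budget of Theorem~\ref{faster-vizing-small}. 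After this step $m_0 = 0$. Consequently, if the algorithm enters the main body of a phase with $m_0 \neq 0$, the threshold cannot have been triggered, so $m_0 > c\cdot m\log^k n / n^{2/3} = \tilde{\Omega}(m/n^{2/3})$, as claimed.

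The bound for $m_1$ is structurally identical, with threshold $\tilde{O}(m/n)$. If $m_1$ lies below this threshold at the start of a phase, coloring each of the $m_1$ uncolored edges of $E[W, V_\hi]$ via a single call to Vizing's procedure costs in total $m_1\cdot \tilde{O}(n) = \tilde{O}(m)$, which is near-linear and easily absorbed into the target runtime. Hence, any phase that reaches the main body with $m_1 \neq 0$ must satisfy $m_1 = \tilde{\Omega}(m/n)$.

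There is no genuine obstacle in the argument; it is essentially a bookkeeping observation built into the definition of the preparatory step. The only delicate point is to keep the hidden polylogarithmic factor on the $\tilde{O}$ side of the preparation's threshold consistent with the one on the $\tilde{\Omega}$ side of the claim, which is a standard matter once the hidden constants have been fixed.
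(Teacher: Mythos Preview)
Your proposal is correct and matches the paper's own justification: the claim is not given a separate proof in the paper but is stated as following directly from the preparatory step, and your expansion of why that step takes $\tilde{O}(mn^{1/3})$ (resp.\ $\tilde{O}(m)$) time via the $\tilde{O}(n)$-per-edge cost of Vizing's procedure is exactly the intended reasoning.
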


The algorithm will perform several rounds to extend $\psi$ until the number of uncolored edges under $\psi$ in $E[U, V_\hi\setminus U]\cup E[W, V_\hi]$ has dropped below $\frac{3}{4}(m_0 + m_1)$. In each round, define:
$$R = \begin{cases}
	U	&	m_0\geq m_1\\
	W	&	m_0 < m_1
\end{cases}$$

Let $d$ be the integer power of $2$ such that the total degree: $$\sum_{\deg_G^\psi(u)\in [d, 2d), u\in R}\deg^\psi_G(u)$$ is maximized. Then, define $R_d = \{u\mid \deg^\psi_G(u)\in [d, 2d), u\in R\}$, and pick the color $x\in \{1, 2, \ldots, \Delta+1\}$ such that $\sum_{u\in X}\deg_G^\psi(u)$ is maximized, where $X\subseteq R_d$ is the set of all vertices $v$ such that $\miss_\psi(v)\ni x$. 

\begin{claim}\label{uncolored}
    The value of $\sum_{u\in X}\deg_G^\psi(u)$ is at least: $$\sum_{u\in X}\deg_G^\psi(u)\geq\begin{cases}\frac{m_0^2}{2\Delta|U|\log^2 n}\brac{\geq \frac{m_0^2n^{1/3}}{400m\log^3 n}}	&	m_0\geq m_1\\
		\frac{m_1}{2\log n}	&	m_0 < m_1
	\end{cases}$$
	Plus, $|X|\geq \frac{\max\left\{m_0, m_1\right\} }{4\Delta\log n} $.
\end{claim}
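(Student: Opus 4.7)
The plan is to combine two averaging arguments: a dyadic bucketing on uncolored degrees within $R$, and an averaging over the $\Delta+1$ color classes to identify a color $x$ missed by many high-uncolored-degree vertices of the chosen bucket. As a preliminary observation, when $R=U$ every uncolored edge of $E[U,V_\hi\setminus U]$ contributes $1$ to the degree sum, so $\sum_{u\in U}\deg_G^\psi(u)\geq m_0$; when $R=W$ the only uncolored edges touching $W$ lie in $E[W,V_\hi]$, giving $\sum_{u\in W}\deg_G^\psi(u)=m_1$. Since uncolored degrees lie in $[1,\Delta]$ and $\Delta\leq n^{2/3}$, there are at most $\log n$ dyadic buckets $[d,2d)$ with $d$ a power of $2$, so by pigeonhole the heaviest bucket (the one chosen by the algorithm) satisfies $\sum_{u\in R_d}\deg_G^\psi(u)\geq \frac{\max(m_0,m_1)}{\log n}$.

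The second step is the color averaging. For $u\in R_d$ one has $|\miss_\psi(u)|\geq \Delta+1-\deg_G(u)+d$: when $R=U\subseteq V_\hi$ this only yields $|\miss_\psi(u)|\geq d+1$, but when $R=W\subseteq V_\lo$ the bound $\deg_G(u)\leq \Delta/3$ upgrades it to $|\miss_\psi(u)|\geq 2\Delta/3$. Double-counting
\[
\sum_{c\in[\Delta+1]}\,\sum_{u\in R_d,\,c\in\miss_\psi(u)}\deg_G^\psi(u)\;=\;\sum_{u\in R_d}|\miss_\psi(u)|\,\deg_G^\psi(u)
\]
and picking the maximizing color $c=x$ yields $\sum_{u\in X}\deg_G^\psi(u)\geq \frac{|\miss_\psi|_{\min}}{\Delta+1}\sum_{u\in R_d}\deg_G^\psi(u)$. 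For $R=W$ the missing-color factor is already at least $\frac12$, so this immediately delivers $\sum_{u\in X}\deg_G^\psi(u)\geq \frac{m_1}{2\log n}$, matching the second case of the claim.

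For $R=U$ the factor $d/\Delta$ is too weak on its own, and the key trick is to chain the two averaging arguments \emph{multiplicatively}. Since $\deg_G^\psi(u)<2d$ on $R_d$ and $|R_d|\leq |U|$, we have $d\geq \frac{1}{2|R_d|}\sum_{u\in R_d}\deg_G^\psi(u)\geq \frac{1}{2|U|}\sum_{u\in R_d}\deg_G^\psi(u)$; substituting this into the color-averaging bound squares up the $R_d$ sum and yields
\[
\sum_{u\in X}\deg_G^\psi(u)\;\geq\;\frac{1}{2|U|(\Delta+1)}\brac{\sum_{u\in R_d}\deg_G^\psi(u)}^2\;\geq\;\frac{m_0^2}{2\Delta|U|\log^2 n},
\]
and the parenthetical form follows by substituting the bound $|U|\leq \frac{200m\log n}{\Delta n^{1/3}}$ from \Cref{hitset}.

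Finally, the cardinality bound $|X|\geq \frac{\max\{m_0,m_1\}}{4\Delta\log n}$ is essentially free: $\deg_G^\psi(u)<2d$ on $X$ gives $|X|\geq \frac{1}{2d}\sum_{u\in X}\deg_G^\psi(u)$, and combining with the color-averaging bound $\sum_{u\in X}\deg_G^\psi(u)\geq \frac{d+1}{\Delta+1}\sum_{u\in R_d}\deg_G^\psi(u)$ cancels the $d$ and produces $|X|\geq \frac{1}{2(\Delta+1)}\sum_{u\in R_d}\deg_G^\psi(u)\geq \frac{\max\{m_0,m_1\}}{2(\Delta+1)\log n}$. The only mildly delicate step of the whole argument is the $R=U$ case in the previous paragraph, where one must chain the bucket choice and color choice multiplicatively in order to extract the quadratic dependence on $m_0$; the rest is routine pigeonhole.
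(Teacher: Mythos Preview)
Your proof is correct and follows essentially the same approach as the paper: dyadic bucketing over uncolored degrees followed by averaging over colors, with the crucial multiplicative chaining in the $R=U$ case to extract the quadratic dependence on $m_0$. The only cosmetic slip is the final inequality $\frac{1}{2|U|(\Delta+1)}(\cdots)^2 \geq \frac{m_0^2}{2\Delta|U|\log^2 n}$, which goes the wrong way; fix it by using $\frac{d+1}{\Delta+1}\geq \frac{d}{\Delta}$ (valid since $d\leq \Delta$) before substituting for $d$, exactly as the paper does.
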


\begin{proof}
	If $m_0\geq m_1$, then by the maximizing condition, we have:
	$$2d|U|\geq \sum_{u\in R_d}\deg_G^\psi(u)\geq \frac{m_0}{\log n}$$
	Since $|\miss_\psi(u)|\geq d+1, \forall u\in R_d$, by an averaging argument, we have:
	
	$$\begin{aligned}
		2d|X|\geq \sum_{u\in X}\deg_G^\psi(u)&\geq \frac{1}{\Delta+1}\sum_{z=1}^{\Delta+1}\sum_{u\in R_d, \miss_\psi(u)\ni z}\deg_G^\psi(u)\\
		&\geq \frac{d+1}{\Delta+1}\sum_{u\in R_d}\deg_G^\psi(u)
		\geq \frac{m_0d}{\Delta\log n} > \frac{m_0^2}{2\Delta|U|\log^3 n}
	\end{aligned}$$
	This also yields $|X|\geq m_0 / 2\Delta\log n$ at the same time.

	If $m_0 < m_1$, then by the maximizing condition, we have:
	$$\sum_{u\in R_d}\deg_G^\psi(u)\geq \frac{m_1}{\log n}$$
	
	Since for any $u\in R_d\subseteq V_\lo$ we have $|\miss_\psi(u)|> 2\Delta/3$, an averaging argument yields:
	$$\begin{aligned}
		2d|X|\geq \sum_{u\in X}\deg_G^\psi(u)&\geq\frac{1}{\Delta+1}\sum_{z=1}^{\Delta+1}\sum_{u\in R_d, \miss_\psi(u)\ni z}\deg_G^\psi(u)\\
		&\geq \frac{2\Delta/3}{\Delta+1}\sum_{u\in R_d}\deg_G^\psi(u) > \frac{m_1}{2\log n}\end{aligned}$$
	In this case, it also gives $|X|\geq m_1 / 4\Delta\log n$ at the same time.
\end{proof}

\paragraph{Color extension.} We are going to iteratively perform a sequence of color extensions on edges of $X$ in $G$, and each time an uncolored edge incident on a vertex $u\in X$ receives color $x$, we will remove $u$ from $X$. This iterative process terminates when $|X|$ has dropped by half. Throughout this iterative color extensions, we will build and maintain the following data structures; the readers can ignore the $T_{\psi}(.)$ notation if they are happy to pretend that we do not need to resolve the Vizing fans.

\begin{framed}
	Data structure $\ds_X$ for one round of color extensions.
\begin{itemize}[leftmargin=*]
	\item As mentioned in the preliminaries, each vertex $v\in V$ maintains a special color $c_\psi(v)\in \miss_\psi(v)$ which is used to compute the directed graphs $T_\psi(*)$.
	
	This data structure is maintained throughout the whole algorithm, so we do not need to initialize this part at the beginning of each round.
	
	\item For vertex $v\in V_\hi$, if $v$ has an uncolored edge connecting to $X$, let $u_1, u_2, \ldots, u_l$ be all uncolored neighbors of $v$ in $X$. For each $u_i$, associate the edge $(u_i, v)$ with a color $\clr_\psi(v\rightarrow u_i)\in \miss_\psi(v)$, such that all colors $\clr_\psi(v\rightarrow u_i)$ are different for $1\leq i\leq l$.
	
	\item For each vertex $u\in X$ and each color $y\in \{1, 2, \ldots, \Delta+1\}$, maintain a list of neighbors: $$\lst_\psi(u, y) = \{v\mid y = \clr_\psi(v\rightarrow u) \}.$$
\end{itemize}
\end{framed}

\begin{definition}
    For any $u\in X$ and for any directed tree component $T$ of $T_\psi(u)$ as well a vertex $v$ in $T$, $v$ is called \emph{active} if there exists a (non-strict) descendant $w$ of $v$ satisfying $|\lst_\psi\brac{u, \psi(u, w)}| = 1$.
\end{definition}

Intuitively, $v$ is active in $T_\psi(u)$ if we can make a color rotation around the neighbors of $u$ which transfers an uncolored edge to edge $(u, v)$.

\begin{definition}\label{def-branch}	
    For any $u\in X$ and for any directed tree component $T$ of $T_\psi(u)$, the \emph{branch number} $\br_T(b)$ of vertex $b\in T$ is the number of active children $v$ of $b$ in $T$. A vertex $b$ is called branching if $\br_T(b)\geq 2$.
\end{definition}

This branch number will be mainly used in the analysis of Steps \ref{alg-step3}\ref{alg-step4}.

Define a length parameter 
\begin{equation}\label{len-param}
	L = \begin{cases}
	\frac{mn^{1/3}\log^{10}n}{m_0}	&	m_0\geq m_1\\
	\frac{m^{1/2}n^{1/2}\log^{10}n}{m_1^{1/2}}	&	m_0 < m_1
\end{cases}
\end{equation}
which will be used to bound the length of alternating paths, as we wish to color uncolored edge in $\tilde{O}(\Delta+L)$ time. Next, our algorithm will repeatedly pick a uniformly random vertex $u\in X$, and build the graph $T_\psi(u)$ which takes $O(\Delta)$ time since it is a sparse digraph. Then, pick a random uncolored neighbor $v$ of $u$ and let $y = \clr_\psi(v\rightarrow u)$. Next, try several different steps to extend the color of $\psi$ around $u$; we say that a step fails if it does not extend $\psi$. The readers can only focus on Step (1) if they are happy to ignore Vizing fans.

\begin{enumerate}[(1),leftmargin=*]
	\item\label{alg-step1} If $y\in \miss_\psi(u)$, then assign $\psi(u, v)\leftarrow y$.
	
	Otherwise, let $P_{x, y}$ be the $\{x, y\}$-alternating path starting from $v$. If $|P_{x, y}|\leq L$ and $P_{x, y}$ does not end at $u$, then flip the path $P_{x, y}$, and extend the coloring $\psi(u, v)\leftarrow x$. Note that $P_{x, y}$ can be computed by tracing the alternating path from the starting point, and once we detect that $|P_{x, y}| > L$ we can terminate the search.
	
	\item\label{alg-step2} If Step (1) failed and $|\lst_\psi(u, y)|=1$, then find the unique vertex $w\in C_\psi(u)$ such that $\psi(u, w) = y$. Locate the position of $w$ in the tree structure $T_\psi(u)$. There are two cases below.
    \begin{enumerate}[(a),leftmargin=*]
	\item If $w$ is the only vertex in the tree component $T$ such that $|\lst_\psi(u, \psi(u, w))| = 1$, then just apply the standard Vizing's procedure (as described in \Cref{prelim}) to color the edge $(u, v)$. Vizing's procedure will either find an $\{x, z\}$-alternating path $Q_{x, z}$ starting from $u$, or extend $\psi$ to $(u, v)$ by rotating colors around $u$. Note that the execution of Vizing's procedure should be in accordance with special colors $c_\psi(*)$, so that the Vizing fan only involve vertices in the tree component $T$.

	If $|Q_{x, z}| \leq L$ or $Q_{x, z}$ is not needed, then we would complete the Vizing's color extension procedure; otherwise, rollback all the color changes of Vizing's procedure.

	   \item If there is another vertex $w'\in T_\psi(u)$ which is a descendant in the tree component of $w$ in $T_\psi(u)$ such that $|\lst_\psi\brac{u, \psi\brac{u, w'}}| = 1$, then take the unique vertex $v'\in \lst_\psi\brac{u, \psi(u, w')}$, and let $w'=w_0, w_1, \ldots, w_l = w$ be the directed tree path from $w'$ to $w$ in $T_\psi(u)$. Clearly $v' \neq v$ as $\psi(u, w)\neq \psi(u, w')$.

	   Let $Q_{x, y}$ be the $\{x, y\}$-alternating path starting from vertex $u$, so it begins with edge $(u, w)$. If $|Q_{x, y}|\leq L$, then consider two cases below.
		\begin{itemize}[leftmargin=*]
			\item If $Q_{x, y}$ ends at $v$, then make the color rotation around $u$: $\psi(u, v')\leftarrow\psi(u, w_0)$, $\psi(u, w_i)\leftarrow \psi(u, w_{i+1}), 0\leq i<l$, and then flip path $Q_{x, y}$.
			\item Otherwise, flip the path $Q_{x, y}$, and assign $\psi(u, v)\leftarrow y$.
		\end{itemize}

            See \Cref{step2-1,step2-2} for an illustration.
    \end{enumerate}

    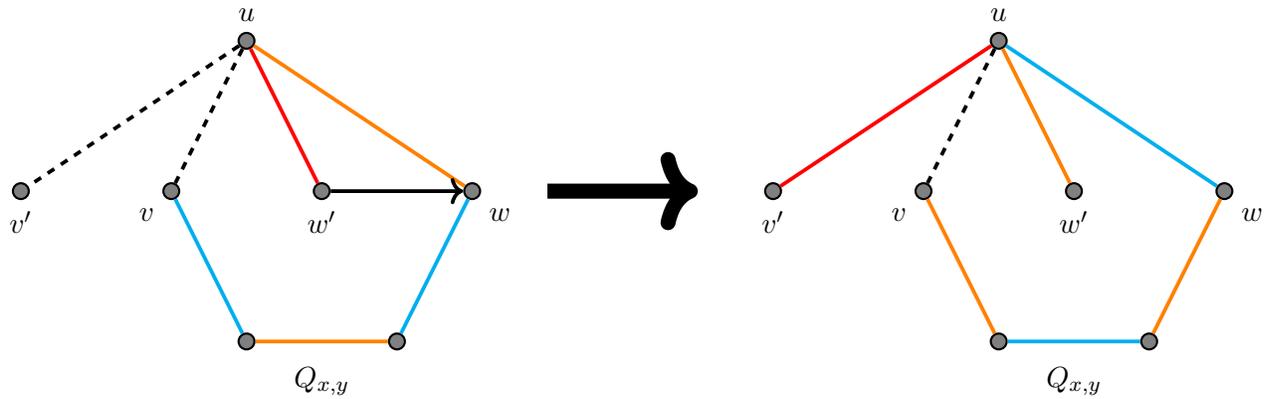
\begin{figure}
    	\centering
    	\begin{tikzpicture}[thick,scale=1]
	\draw (0, 2) node(1)[circle, draw, fill=black!50,
	inner sep=0pt, minimum width=6pt, label = $u$] {};
	
	\draw (3, 0) node(2)[circle, draw, fill=black!50,
	inner sep=0pt, minimum width=6pt, label = {-45: {$w$}}] {};
	
	\draw (1, 0) node(3)[circle, draw, fill=black!50,
	inner sep=0pt, minimum width=6pt, label = {-90: {$w'$}}] {};
	
	\draw (-1, 0) node(4)[circle, draw, fill=black!50,
	inner sep=0pt, minimum width=6pt, label = {-135: {$v$}}] {};
	\draw (-3, 0) node(5)[circle, draw, fill=black!50,
	inner sep=0pt, minimum width=6pt, label = {-90: {$v'$}}] {};
	
	\draw (2, -2) node(6)[circle, draw, fill=black!50,
	inner sep=0pt, minimum width=6pt] {};
	\draw (0, -2) node(7)[circle, draw, fill=black!50,
	inner sep=0pt, minimum width=6pt] {};
	
	\draw [line width = 0.5mm, color=orange] (1) to (2);	
	\draw [line width = 0.5mm, color=red] (1) to (3);
	\draw [->, line width = 0.5mm] (3) to (2);
	\draw [line width = 0.5mm, dashed] (1) to (4);
	\draw [line width = 0.5mm, dashed] (1) to (5);
	\draw [line width = 0.5mm, color=cyan] (2) to (6);
	\draw [line width = 0.5mm, color=orange] (6) to (7);
	\draw [line width = 0.5mm, color=cyan] (7) to (4);
	
	\draw (1, -3) node[black, label={$Q_{x, y}$}]{};
	
	\draw [->, line width = 2mm] (4, 0) to (6, 0);
	
	\draw (10, 2) node(8)[circle, draw, fill=black!50,
	inner sep=0pt, minimum width=6pt, label = $u$] {};
	
	\draw (13, 0) node(9)[circle, draw, fill=black!50,
	inner sep=0pt, minimum width=6pt, label = {-45: {$w$}}] {};
	
	\draw (11, 0) node(10)[circle, draw, fill=black!50,
	inner sep=0pt, minimum width=6pt, label = {-90: {$w'$}}] {};
	
	\draw (9, 0) node(11)[circle, draw, fill=black!50,
	inner sep=0pt, minimum width=6pt, label = {-135: {$v$}}] {};
	\draw (7, 0) node(12)[circle, draw, fill=black!50,
	inner sep=0pt, minimum width=6pt, label = {-90: {$v'$}}] {};
	
	\draw (12, -2) node(13)[circle, draw, fill=black!50,
	inner sep=0pt, minimum width=6pt] {};
	\draw (10, -2) node(14)[circle, draw, fill=black!50,
	inner sep=0pt, minimum width=6pt] {};
	
	\draw [line width = 0.5mm, color=cyan] (8) to (9);	
	\draw [line width = 0.5mm, color=orange] (8) to (10);
	\draw [line width = 0.5mm, dashed] (8) to (11);
	\draw [line width = 0.5mm, color=red] (8) to (12);
	\draw [line width = 0.5mm, color=orange] (9) to (13);
	\draw [line width = 0.5mm, color=cyan] (13) to (14);
	\draw [line width = 0.5mm, color=orange] (14) to (11);
	
	\draw (11, -3) node[black, label={$Q_{x, y}$}]{};
\end{tikzpicture}
    	\caption{In this case, blue is in $\miss_\psi(u)$, orange is in $\miss_\psi(v)\cap \miss_\psi(w')$, and red is in $\miss_\psi(v')$. If $Q_{x, y}$ ends at $v$, then we can rotate the colors and extend $\psi$ to $(u, v')$. The directed edge comes from $T_\psi(u)$.}\label{step2-1}
    \end{figure}
    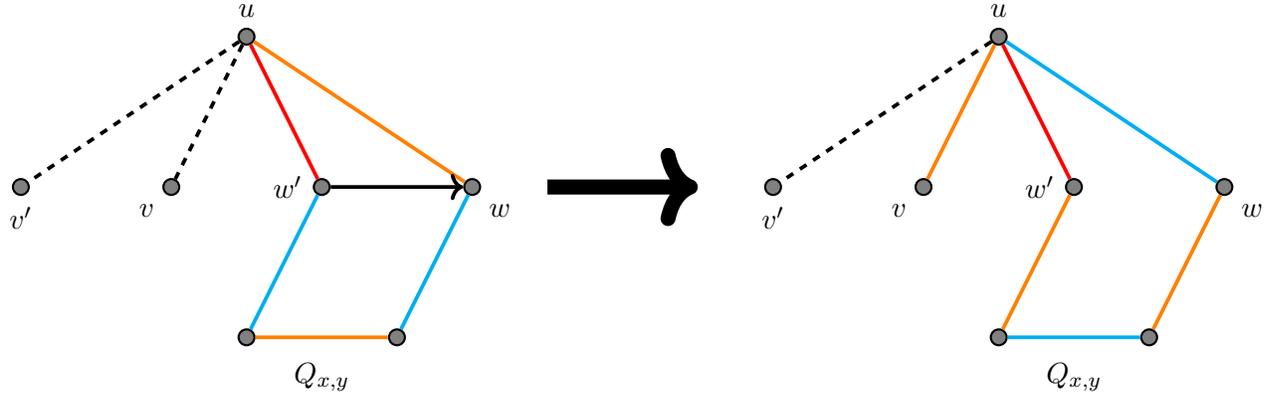
\begin{figure}
    	\centering
    	\begin{tikzpicture}[thick,scale=1]
	\draw (0, 2) node(1)[circle, draw, fill=black!50,
	inner sep=0pt, minimum width=6pt, label = $u$] {};
	
	\draw (3, 0) node(2)[circle, draw, fill=black!50,
	inner sep=0pt, minimum width=6pt, label = {-45: {$w$}}] {};
	
	\draw (1, 0) node(3)[circle, draw, fill=black!50,
	inner sep=0pt, minimum width=6pt, label = {180: {$w'$}}] {};
	
	\draw (-1, 0) node(4)[circle, draw, fill=black!50,
	inner sep=0pt, minimum width=6pt, label = {-135: {$v$}}] {};
	\draw (-3, 0) node(5)[circle, draw, fill=black!50,
	inner sep=0pt, minimum width=6pt, label = {-90: {$v'$}}] {};
	
	\draw (2, -2) node(6)[circle, draw, fill=black!50,
	inner sep=0pt, minimum width=6pt] {};
	\draw (0, -2) node(7)[circle, draw, fill=black!50,
	inner sep=0pt, minimum width=6pt] {};
	
	\draw [line width = 0.5mm, color=orange] (1) to (2);	
	\draw [line width = 0.5mm, color=red] (1) to (3);
	\draw [->, line width = 0.5mm] (3) to (2);
	\draw [line width = 0.5mm, dashed] (1) to (4);
	\draw [line width = 0.5mm, dashed] (1) to (5);
	\draw [line width = 0.5mm, color=cyan] (2) to (6);
	\draw [line width = 0.5mm, color=orange] (6) to (7);
	\draw [line width = 0.5mm, color=cyan] (7) to (3);
	
	\draw (1, -3) node[black, label={$Q_{x, y}$}]{};
	
	\draw [->, line width = 2mm] (4, 0) to (6, 0);
	
	\draw (10, 2) node(8)[circle, draw, fill=black!50,
	inner sep=0pt, minimum width=6pt, label = $u$] {};
	
	\draw (13, 0) node(9)[circle, draw, fill=black!50,
	inner sep=0pt, minimum width=6pt, label = {-45: {$w$}}] {};
	
	\draw (11, 0) node(10)[circle, draw, fill=black!50,
	inner sep=0pt, minimum width=6pt, label = {180: {$w'$}}] {};
	
	\draw (9, 0) node(11)[circle, draw, fill=black!50,
	inner sep=0pt, minimum width=6pt, label = {-135: {$v$}}] {};
	\draw (7, 0) node(12)[circle, draw, fill=black!50,
	inner sep=0pt, minimum width=6pt, label = {-90: {$v'$}}] {};
	
	\draw (12, -2) node(13)[circle, draw, fill=black!50,
	inner sep=0pt, minimum width=6pt] {};
	\draw (10, -2) node(14)[circle, draw, fill=black!50,
	inner sep=0pt, minimum width=6pt] {};
	
	\draw [line width = 0.5mm, color=cyan] (8) to (9);	
	\draw [line width = 0.5mm, color=red] (8) to (10);
	\draw [line width = 0.5mm, color=orange] (8) to (11);
	\draw [line width = 0.5mm, dashed] (8) to (12);
	\draw [line width = 0.5mm, color=orange] (9) to (13);
	\draw [line width = 0.5mm, color=cyan] (13) to (14);
	\draw [line width = 0.5mm, color=orange] (14) to (10);
	
	\draw (11, -3) node[black, label={$Q_{x, y}$}]{};
\end{tikzpicture}
    	\caption{If $Q_{x, y}$ does not end at $v$, say $w'$, then we can extend $\psi$ to $(u, v)$.}\label{step2-2}
    \end{figure}

	\item\label{alg-step3} If Step (2) failed, then take a random index $h\in [1, \log\Delta]$, and let $B_h$ be the set of vertices in $T_\psi(u)$ such that $\br_T(b)\in [2^h, 2^{h+1})$ for any $b\in B_h$ (here $T$ is the tree component containing $b$). Then, take a vertex $b\in B_h$ uniformly at random, as well as a uniformly random active child $c$ of $b$. By definition of activeness, there exists a descendant $s$ of $c$ such that $|\lst_\psi\brac{u, \psi(u, s)}| = 1$. Pick the unique vertex $v'\in \lst_\psi\brac{u, \psi(u, s)}$, and let $s = s_0, s_1, \ldots, s_l = c$ be the directed path from $s$ to $c$.

	Let $z = \psi(u, b)$, and let $R_{x, z}$ be the $\{x, z\}$-alternating path starting at $c$. If $|R_{x, z}|\leq L$ and $R_{x, z}$ does not end at $u$, then first rotate the coloring: $\psi(u, v')\leftarrow \psi(u, s), \psi(u, s_i)\leftarrow \psi(u, s_{i+1}), 0\leq i<l$, and then flip the path $R_{x, z}$, and then assign $\psi(u, s_{l})\leftarrow x$.

    See \Cref{step3} for an illustration.
	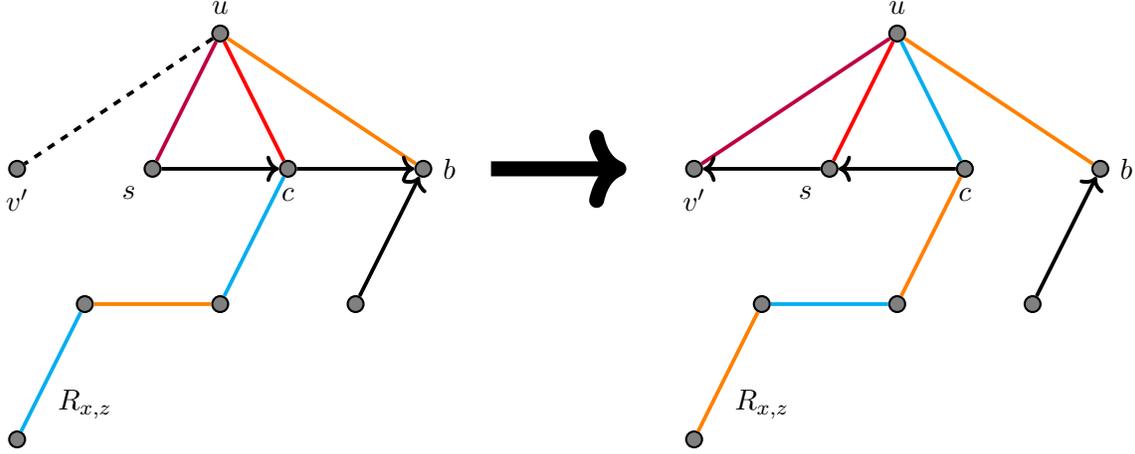
\begin{figure}
		\centering
		\begin{tikzpicture}[thick,scale=0.9]
	\draw (0, 2) node(1)[circle, draw, fill=black!50,
	inner sep=0pt, minimum width=6pt, label = $u$] {};
	
	\draw (3, 0) node(2)[circle, draw, fill=black!50,
	inner sep=0pt, minimum width=6pt, label = {0: {$b$}}] {};
	
	\draw (1, 0) node(3)[circle, draw, fill=black!50,
	inner sep=0pt, minimum width=6pt, label = {-90: {$c$}}] {};
	
	\draw (-1, 0) node(4)[circle, draw, fill=black!50,
	inner sep=0pt, minimum width=6pt, label = {-135: {$s$}}] {};
	\draw (-3, 0) node(5)[circle, draw, fill=black!50,
	inner sep=0pt, minimum width=6pt, label = {-90: {$v'$}}] {};
	
	\draw (2, -2) node(6)[circle, draw, fill=black!50,
	inner sep=0pt, minimum width=6pt] {};
	\draw (0, -2) node(7)[circle, draw, fill=black!50,
	inner sep=0pt, minimum width=6pt] {};
	\draw (-2, -2) node(8)[circle, draw, fill=black!50,
	inner sep=0pt, minimum width=6pt] {};
	\draw (-3, -4) node(9)[circle, draw, fill=black!50,
	inner sep=0pt, minimum width=6pt] {};
	
	\draw [line width = 0.5mm, color=orange] (1) to (2);	
	\draw [line width = 0.5mm, color=red] (1) to (3);
	\draw [->, line width = 0.5mm] (3) to (2);
	\draw [->, line width = 0.5mm] (4) to (3);
	\draw [line width = 0.5mm, color=purple] (1) to (4);
	\draw [line width = 0.5mm, dashed] (1) to (5);
	\draw [->, line width = 0.5mm] (6) to (2);
	\draw [line width = 0.5mm, color=cyan] (7) to (3);
	\draw [line width = 0.5mm, color=orange] (7) to (8);
	\draw [line width = 0.5mm, color=cyan] (8) to (9);
	
	\draw (-2, -4) node[black, label={$R_{x, z}$}]{};
	
	\draw [->, line width = 2mm] (4, 0) to (6, 0);
	
	\draw (10, 2) node(10)[circle, draw, fill=black!50,
	inner sep=0pt, minimum width=6pt, label = $u$] {};
	
	\draw (13, 0) node(11)[circle, draw, fill=black!50,
	inner sep=0pt, minimum width=6pt, label = {0: {$b$}}] {};
	
	\draw (11, 0) node(12)[circle, draw, fill=black!50,
	inner sep=0pt, minimum width=6pt, label = {-90: {$c$}}] {};
	
	\draw (9, 0) node(13)[circle, draw, fill=black!50,
	inner sep=0pt, minimum width=6pt, label = {-135: {$s$}}] {};
	\draw (7, 0) node(14)[circle, draw, fill=black!50,
	inner sep=0pt, minimum width=6pt, label = {-90: {$v'$}}] {};
	
	\draw (12, -2) node(15)[circle, draw, fill=black!50,
	inner sep=0pt, minimum width=6pt] {};
	\draw (10, -2) node(16)[circle, draw, fill=black!50,
	inner sep=0pt, minimum width=6pt] {};
	\draw (8, -2) node(17)[circle, draw, fill=black!50,
	inner sep=0pt, minimum width=6pt] {};
	\draw (7, -4) node(18)[circle, draw, fill=black!50,
	inner sep=0pt, minimum width=6pt] {};
	
	\draw [line width = 0.5mm, color=orange] (10) to (11);	
	\draw [line width = 0.5mm, color=cyan] (10) to (12);
	\draw [line width = 0.5mm, color=red] (10) to (13);
	\draw [line width = 0.5mm, color=purple] (10) to (14);
	\draw [->, line width = 0.5mm] (15) to (11);
	\draw [line width = 0.5mm, color=orange] (16) to (12);
	\draw [line width = 0.5mm, color=cyan] (16) to (17);
	\draw [line width = 0.5mm, color=orange] (17) to (18);
	
	\draw [->, line width = 0.5mm] (13) to (14);
	\draw [->, line width = 0.5mm] (12) to (13);
	
	\draw (8, -4) node[black, label={$R_{x, z}$}]{};
\end{tikzpicture}
		\caption{If $R_{x, z}$ is short and does not end at $u$, then we can extend $\psi$ to $(u, v')$.}\label{step3}
	\end{figure}
 
	\item\label{alg-step4} Continue with the notations in Step (3). If Step (3) failed, then let $S_{x, z}$ be the $\{x, z\}$-alternating path starting at $u$ with edge $(u, b)$. Since $b$ is a branching vertex, it has at least two different active children $c_1, c_2$.
 
    If $|S_{s, z}|\leq L$, without loss of generality, assume that $S_{x, z}$ does not end at $c_1$. By definition of activeness, there exists a descendant $t$ of $c_1$ such that $|\lst_\psi(u, \psi(u, t))| = 1$. Pick the unique vertex $v'\in \lst_\psi(u, \psi(u, t))$, and let $t = t_0, t_1, \ldots, t_l = c_1$ be the directed path from $t$ to $c_1$. Then assign $\psi(u, v')\leftarrow \psi(u, t), \psi(u, t_i)\leftarrow \psi(u, t_{i+1}), \forall 0\leq i\leq l$, and finally flip the path $S_{x, z}$.

    See \Cref{step4} for an illustration.
	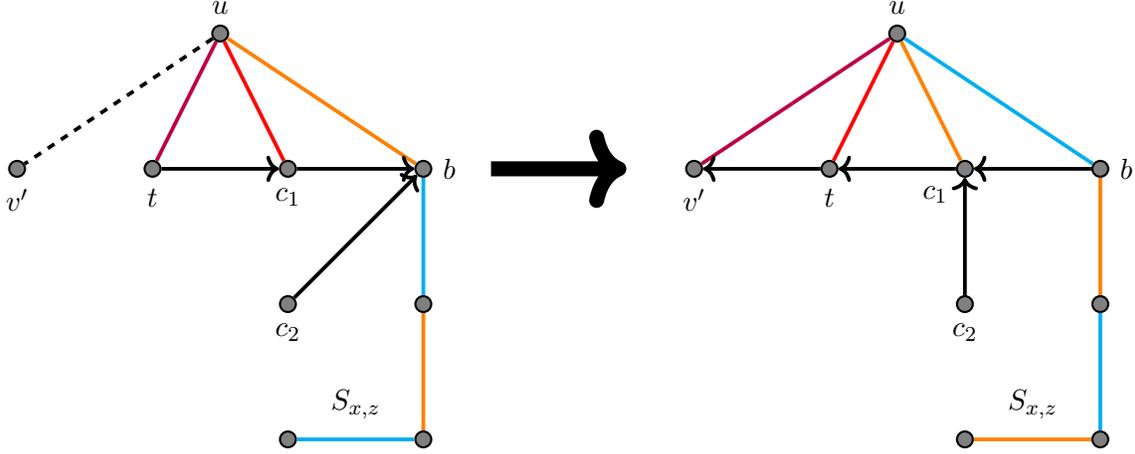
\begin{figure}
		\centering
		\begin{tikzpicture}[thick,scale=0.9]
	\draw (0, 2) node(1)[circle, draw, fill=black!50,
	inner sep=0pt, minimum width=6pt, label = $u$] {};
	
	\draw (3, 0) node(2)[circle, draw, fill=black!50,
	inner sep=0pt, minimum width=6pt, label = {0: {$b$}}] {};
	
	\draw (1, 0) node(3)[circle, draw, fill=black!50,
	inner sep=0pt, minimum width=6pt, label = {-90: {$c_1$}}] {};
	
	\draw (-1, 0) node(4)[circle, draw, fill=black!50,
	inner sep=0pt, minimum width=6pt, label = {-90: {$t$}}] {};
	\draw (-3, 0) node(5)[circle, draw, fill=black!50,
	inner sep=0pt, minimum width=6pt, label = {-90: {$v'$}}] {};
	
	\draw (1, -2) node(6)[circle, draw, fill=black!50,
	inner sep=0pt, minimum width=6pt, label = {-90: {$c_2$}}] {};
	\draw (3, -2) node(7)[circle, draw, fill=black!50,
	inner sep=0pt, minimum width=6pt] {};
	\draw (3, -4) node(8)[circle, draw, fill=black!50,
	inner sep=0pt, minimum width=6pt] {};
	\draw (1, -4) node(9)[circle, draw, fill=black!50,
	inner sep=0pt, minimum width=6pt] {};
	
	\draw [line width = 0.5mm, color=orange] (1) to (2);	
	\draw [line width = 0.5mm, color=red] (1) to (3);
	\draw [->, line width = 0.5mm] (3) to (2);
	\draw [->, line width = 0.5mm] (4) to (3);
	\draw [line width = 0.5mm, color=purple] (1) to (4);
	\draw [line width = 0.5mm, dashed] (1) to (5);
	\draw [->, line width = 0.5mm] (6) to (2);
	\draw [line width = 0.5mm, color=cyan] (7) to (2);
	\draw [line width = 0.5mm, color=orange] (7) to (8);
	\draw [line width = 0.5mm, color=cyan] (8) to (9);
	
	\draw (2, -4) node[black, label={$S_{x, z}$}]{};
	
	\draw [->, line width = 2mm] (4, 0) to (6, 0);
	
	\draw (10, 2) node(10)[circle, draw, fill=black!50,
	inner sep=0pt, minimum width=6pt, label = $u$] {};
	
	\draw (13, 0) node(11)[circle, draw, fill=black!50,
	inner sep=0pt, minimum width=6pt, label = {0: {$b$}}] {};
	
	\draw (11, 0) node(12)[circle, draw, fill=black!50,
	inner sep=0pt, minimum width=6pt, label = {-135: {$c_1$}}] {};
	
	\draw (9, 0) node(13)[circle, draw, fill=black!50,
	inner sep=0pt, minimum width=6pt, label = {-90: {$t$}}] {};
	\draw (7, 0) node(14)[circle, draw, fill=black!50,
	inner sep=0pt, minimum width=6pt, label = {-90: {$v'$}}] {};
	
	\draw (11, -2) node(15)[circle, draw, fill=black!50,
	inner sep=0pt, minimum width=6pt, label = {-90: {$c_2$}}] {};
	\draw (13, -2) node(16)[circle, draw, fill=black!50,
	inner sep=0pt, minimum width=6pt] {};
	\draw (13, -4) node(17)[circle, draw, fill=black!50,
	inner sep=0pt, minimum width=6pt] {};
	\draw (11, -4) node(18)[circle, draw, fill=black!50,
	inner sep=0pt, minimum width=6pt] {};
	
	\draw [line width = 0.5mm, color=cyan] (10) to (11);	
	\draw [line width = 0.5mm, color=orange] (10) to (12);
	\draw [->, line width = 0.5mm] (11) to (12);
	\draw [->, line width = 0.5mm] (12) to (13);
	\draw [line width = 0.5mm, color=red] (10) to (13);
	\draw [line width = 0.5mm, color=purple] (10) to (14);
	\draw [->, line width = 0.5mm] (15) to (12);
	\draw [line width = 0.5mm, color=orange] (16) to (11);
	\draw [line width = 0.5mm, color=cyan] (16) to (17);
	\draw [line width = 0.5mm, color=orange] (17) to (18);
	
	\draw [->, line width = 0.5mm] (13) to (14);
	
	\draw (12, -4) node[black, label={$S_{x, z}$}]{};
\end{tikzpicture}
		\caption{If $S_{x, z}$ does not end at $c_1$, then we can extend $\psi$ to $(u, v')$.}\label{step4}
	\end{figure}
 
	\item In the end, if any of the color extension Step \ref{alg-step1}\ref{alg-step2}\ref{alg-step3}\ref{alg-step4} succeeded, there will be at most two vertices $u_1, u_2\in X$ around which an edge has been assigned color $x$ (note that $u$ must be one of $\{u_1, u_2\}$); in this case, remove $u_1, u_2$ from $X$ and maintain data structure $\ds_X$.
\end{enumerate}

\subsection{Runtime Analysis}

Let us first state the following lemma whose proof is deferred to the next subsection.
\begin{lemma}\label{succ}
	In each iteration, the color extension procedure succeeds with probability at least $\tilde{\Omega}(1)$.
\end{lemma}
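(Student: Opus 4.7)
The plan is to set up a weighted counting argument over the sample space of random choices $(u, v, \ldots)$ made by the algorithm, classifying each sample according to which of Steps \ref{alg-step1}--\ref{alg-step4} is intended to succeed on it, and showing that the union of these ``good'' samples has measure $\tilde{\Omega}(1)$. The key quantitative leverage is the length parameter $L$ of (\ref{len-param}), which is calibrated so that, by \Cref{sum-alt-path} combined with the lower bounds from \Cref{uncolored}, only a $1/\poly(\log n)$ fraction of the relevant alternating paths can exceed length $L$.

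First I would handle Step \ref{alg-step1}: call a pair $(u, v)$ with $u \in X$ and $v$ an uncolored neighbor \emph{Step-1-good} if either $\clr_\psi(v \to u) \in \miss_\psi(u)$, or the alternating path $P_{x, \clr_\psi(v \to u)}$ has length at most $L$ and does not terminate at $u$. Using the data-structure invariant that the colors $\clr_\psi(v \to u)$ are distinct across uncolored neighbors of any fixed $u$, the collection of alternating paths $\{P_{x, \clr_\psi(v \to u)}\}$ embeds into the union (over $y \neq x$) of all $\{x, y\}$-alternating paths of $G$; \Cref{sum-alt-path} bounds their total length by $O(m)$. Combined with the weight lower bound on $X$ from \Cref{uncolored}, a Markov-type averaging argument shows that the fraction of long pairs is polylogarithmically small, while the ``ends at $u$'' exclusion is absorbed by noting that each alternating path has only two endpoints.

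Next, on the complementary event, $y = \clr_\psi(v \to u) \notin \miss_\psi(u)$, so there is a unique $w \in C_\psi(u)$ with $\psi(u, w) = y$; I would split according to the position of $w$ in $T_\psi(u)$, tracking the four informal cases of \Cref{sec:overview:sayan}. Step \ref{alg-step2} handles the ``directed chain'' subcase where a pairable ancestor/descendant $w'$ of $w$ satisfies $|\lst_\psi(u, \psi(u, w'))| = 1$. Step \ref{alg-step3} handles the ``directed star'' subcase, where the random choices of the dyadic level $h \in [1, \log \Delta]$, the branching vertex $b \in B_h$, and the active child $c$ each cost only a $1/\poly(\log n)$ factor. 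Step \ref{alg-step4} handles the remaining congestion subcase by flipping the $\{x, z\}$-alternating path emanating from $u$ itself. In each subcase I would apply \Cref{sum-alt-path} to the pertinent color pair to cap the aggregate alternating-path length by $O(m)$, then divide by the branching-vertex (or $X$) weight from \Cref{uncolored} to deduce that the flipped path has length at most $L$ with probability $\tilde{\Omega}(1)$.

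The hard part will be the fan-case analysis driving Steps \ref{alg-step3} and \ref{alg-step4}, because the alternating path being flipped no longer starts at $v$ but at another vertex of the star, so distinct uncolored edges can be charged to the same path. The crucial idea, foreshadowed by Case IV of the overview, is \emph{congestion control}: if stars centered at different $u, u' \in X$ would both charge the same alternating path through a shared neighbor $w$, then $u$ and $u'$ each possess a tree edge colored with $\psi(w)$'s missing color, so their $\{x, y\}$-alternating paths starting from $u$ and $u'$ respectively are distinct. This restores, up to $\poly(\log n)$ factors, an approximate bijection between random choices and alternating paths, at which point \Cref{sum-alt-path} closes the counting. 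Summing the contributions of the four steps then yields the claimed $\tilde{\Omega}(1)$ success probability.
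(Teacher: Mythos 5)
Your overall architecture matches the paper's proof: a chain of success conditions for Steps \ref{alg-step1}--\ref{alg-step4}, each showing that a $1/\poly(\log n)$ fraction of the sample space hits a "friendly" pair whose alternating path has length at most $L$, with \Cref{sum-alt-path} and \Cref{uncolored} supplying the averaging. Your treatment of Step \ref{alg-step1} (pairing clients with equal tentative colors so that at most one path per color class ends at $u$, then a Markov bound against $L$) is essentially the paper's argument. However, two load-bearing ingredients are missing, and without them the case analysis does not close.

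First, you never establish that the four cases are \emph{exhaustive in measure}: if Step \ref{alg-step1}'s condition fails, a majority of uncolored neighbors are "lonely" (distinct tentative colors, none missing at $u$), and you must show that these lonely neighbors are then captured either by descendant chains in $T_\psi(u)$ (feeding Step \ref{alg-step2}) or by branching vertices (feeding Steps \ref{alg-step3}/\ref{alg-step4}). The paper does this via a combinatorial tree inequality, $\br_\psi(u) + 2\desc_\psi(u)\geq \lone_\psi(u)$ together with $\br_\psi(u) < 4d$ (\Cref{tree}), obtained by comparing the leaves, branching vertices, and singleton-list vertices of each tree component. Saying you will "split according to the position of $w$" does not by itself guarantee that a $1/\poly(\log n)$ fraction of the probability mass lands in a tractable subcase. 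Second, your claim that "in each subcase" \Cref{sum-alt-path} caps the aggregate path length by $O(m)$ is false for Steps \ref{alg-step3} and \ref{alg-step4}: there a single alternating path can be charged by up to $\cng(w)$ distinct centers, so the paper must introduce the congestion threshold $\tau$ and split into a low-congestion regime (total length $\leq 6m\tau$, which is why \Cref{param-ineq} needs $m\tau/L \ll d|X|$) and a high-congestion regime. In the latter, the argument is not that congestion control "restores an approximate bijection"; rather, a congested vertex $w$ forces at least $\tau$ distinct $\{x,z\}$-alternating paths of the \emph{same} color type $z = c_\psi(w)$, whose total length is at most $2n$, so the average is $2n/\tau \leq L$. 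The specific value of $\tau$ is chosen to balance these two constraints against $L$, and this calibration is an essential, non-obvious step that your proposal omits.
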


Before we prove this lemma, let us first analyze the runtime of other parts of the algorithm.
In other words, we next provide the proof of \Cref{faster-vizing-small}, assuming the correctness of this lemma, and subsequently (see \Cref{lem41}) we prove the lemma.

\paragraph{Proof of \Cref{faster-vizing-small}.}
Let $R_d^{z}$ denote the set of vertices $\{v \in R_d \mid z \in \miss_\psi(v)$\}. At the start of each round, we want to efficiently find the color $x\in \{1, 2, \ldots, \Delta+1\}$ that maximizes the sum $\sum_{u\in R_d^x}\deg_G^\psi(u)$, as well as the set $X = R^x_d$ as in \Cref{uncolored}. To do with, we can create a data structure that, for each color $z \in \{1, \dots, \Delta+1\}$, each integer power of two $d$, and each choice $R\in \{U, W\}$, explicitly maintains the set $R_d^z$ and sum $\sum_{u\in R_d^z}\deg_G^\psi(u)$. These sets and sums, as well as $\arg \max_{z} \sum_{u\in R_d^z}\deg_G^\psi(u)$ for each integer power of $d$, can be maintained in $O(\log n)$ time each time an edge changes color and $O(\Delta \log n)$ time each time an uncolored edge receives a color. Furthermore, this data structure can be initialised in time $\tilde O(\Delta (|U| + |W|)) \leq \tilde O(mn^{1/3})$. Thus, the time spent initializing this data structure and updating it after uncolored edges recieve colors is $\tilde O(mn^{1/3}) + \tilde O(\Delta) \cdot \tilde O(m/n^{1/3}) \leq \tilde O(mn^{1/3})$, and the time spent updating the data structure each time an edge changes color leads to $O(\log n)$ overhead in the running time of our algorithm. Using this sum data structure, we can always retrieve the color $x$ in $O(\log n)$ time, and the set $X$ in $O(|X|)$ time, since it is maintained explicitly by the data structure.


After $X$ has just been initialized, let $X_0$ be the screenshot of set $X$ at the moment. After that, initializing the data structure $\ds_X$ takes time $\tilde{O}(m_0+m_1)$. In each iteration, the algorithm picks a random vertex $u\in X$ and then build the directed graph $T_\psi(u)$ which takes time $\tilde{O}(\Delta)$, and the coloring steps (1)(2)(3)(4) can be carried out in $\tilde{O}(\Delta + L)$ time. If it succeeds in one more color extension, it needs to update the data structure $\ds_X$ as following.
\begin{itemize}[leftmargin=*]
	\item Firstly, at most $\Delta+2$ vertices could possibly change their color palette $\miss_\psi(*)$, including all neighbors of $u$ plus the endpoints of an alternating path. Updating the special color $c_\psi(*)$ for these vertices takes time $\tilde{O}(\Delta)$.
	
	\item Secondly, for a similar reason, maintaining all the colors $\clr_\psi(*\rightarrow *)$ takes time $\tilde{O}(\Delta)$. Plus, when the value of $\clr_\psi(v\rightarrow u)$ changes, we can update the list $\lst_\psi(u, *)$ accordingly.
\end{itemize}

By \Cref{succ}, with high probability, in time $\tilde{O}\brac{|X_0|\cdot(\Delta + L)}$, the size of $X$ would decrease by half (that is, $|X|\leq \frac{1}{2}|X_0|$), and then this round of color extension would terminate. The number of new colored edges is at least $|X_0|/4$ while the total runtime is at most $\tilde{O}\brac{m_0 + m_1 + |X_0|\cdot(\Delta + L)}$. Therefore, the amortized runtime cost of each edge color extension is bounded by 
$$\tilde{O}\brac{\frac{m_0+m_1}{|X|}+ \Delta + L}$$
By \Cref{uncolored} we already know $\frac{m_0+m_1}{|X|} = \tilde{O}(\Delta)$ which is subsumed in two other terms.

\paragraph{Case $m_0 \geq m_1$.} In this case, by \Cref{len-param} we have $L =\frac{mn^{1/3}\log^{10}n}{m_0}$. This amortization is taken within a single phase which starts with $m_0$ uncolored edges and ends with $m_0/2$ uncolored edges. In other words, the algorithm takes time $$\tilde{O}(m_0\Delta + Lm_0) = \tilde{O}\brac{\Delta^2|U| + mn^{1/3}} =\tilde{O}\brac{mn^{1/3}}$$
to half the number of uncolored edges; in the inequalities we have used the fact that $m_0\leq \Delta |U|$, $|U| = \tilde{O}(n_0 / n^{1/3})$ and $\Delta \leq n^{2/3}$ from from \Cref{hitset}.

\paragraph{Case $m_0 < m_1$.} In this case, by \Cref{len-param}  we have $L =\frac{m^{1/2}n^{1/2}\log^{10}n}{m_1^{1/2}}	$. This amortization is taken within a single phase which starts with $m_1$ uncolored edges and ends with $m_1/2$ uncolored edges. In other words, the algorithm takes time $$\tilde{O}(m_1\Delta + Lm_1) = \tilde{O}\brac{m\Delta / n^{1/3} + mn^{1/3}}$$
to half the number of uncolored edges; in the inequalities we have used the fact that $m_1 = \tilde{O}(m/n^{1/3})$ and $\Delta \leq n^{2/3}$ from \Cref{hitset}.

This completes the proof of \Cref{faster-vizing-small} under the assumption that
\Cref{succ} holds. We shall next justify this assumption.

\subsection{Proof of \Cref{succ}} \label{lem41}
By the algorithm description, for any vertex $u\in X$, once we extend one more uncolored edge around $u$ or $\miss_\psi(u)$ loses $x$ due to flips of alternating paths, $u$ would be removed from $X$. Therefore, at any moment, $\deg_G^\psi(u)\in [d, 2d), \forall u\in X$. 

Define a congestion threshold which will be used by analysis of Steps \ref{alg-step3}\ref{alg-step4}.
$$\tau = \begin{cases}
	\frac{m_0n^{2/3}}{m}	&	m_0\geq m_1\\
	\frac{m_1^{1/2}n^{1/2}}{m^{1/2}}	&	m_0 < m_1
\end{cases} $$

Note that $\tau \geq 1$ by \Cref{uncolored-lb}. Let us first show some basic inequalities regarding the parameters.
\begin{claim}\label{param-ineq}
	$\frac{m\tau}{L} < \frac{d|X|}{10^3\log n}$.
\end{claim}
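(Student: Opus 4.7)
\medskip
\noindent\textbf{Proof proposal for \Cref{param-ineq}.}
The plan is to simply unfold the definitions of $\tau$ and $L$ from \Cref{len-param}, split into the two cases $m_0 \geq m_1$ and $m_0 < m_1$, and then use the lower bound on $\sum_{u \in X} \deg_G^\psi(u)$ given by \Cref{uncolored} together with the fact that $\deg_G^\psi(u) < 2d$ for every $u \in X$ (so $2d|X| \geq \sum_{u \in X} \deg_G^\psi(u)$) to lower bound the right-hand side $d|X|/(10^3 \log n)$. The inequality should then reduce in both cases to a bound of the form $\log^{c} n > C$ for some large constant $C$ and $c \geq 4$, which holds for sufficiently large $n$ (we may assume $n$ is large since $\Delta \geq \tilde\Omega(n^{1/3})$).

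First, in the case $m_0 \geq m_1$, I would substitute $\tau = m_0 n^{2/3}/m$ and $L = m n^{1/3} \log^{10} n / m_0$ and simplify to get $m\tau/L = m_0^2 n^{1/3}/(m \log^{10} n)$. For the right-hand side, I would use the bound $\sum_{u\in X} \deg_G^\psi(u) \geq m_0^2/(2\Delta |U| \log^2 n)$ from \Cref{uncolored}, combined with $|U| \leq 200 m \log n/(\Delta n^{1/3})$ from \Cref{hitset}, to conclude $d|X| \gtrsim m_0^2 n^{1/3}/(m \log^3 n)$. The desired inequality then amounts to $1/\log^{10} n < 1/(C \log^4 n)$, which clearly holds.

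Next, in the case $m_0 < m_1$, the substitutions give $m\tau/L = m_1/\log^{10} n$. Here I would use the bound $\sum_{u\in X} \deg_G^\psi(u) \geq m_1/(2 \log n)$ from \Cref{uncolored} to conclude $d|X| \geq m_1/(4 \log n)$. The required inequality then reduces to $1/\log^{10} n < 1/(C \log^2 n)$, which again holds for $n$ large.

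The main (minor) obstacle is just keeping straight which lower bound from \Cref{uncolored} applies in each case and ensuring the polylogarithmic slack is correctly accounted for; there is no conceptual difficulty here, only bookkeeping. Since $\Delta \geq \tilde\Omega(n^{1/3})$ ensures $\log n$ is at least some sufficiently large constant (absorbing the fixed constants that appear in both cases), the claim follows.
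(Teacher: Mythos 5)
Your proposal is correct and follows essentially the same route as the paper: unfold $\tau$ and $L$ from their definitions, compute $m\tau/L$ as $m_0^2 n^{1/3}/(m\log^{10}n)$ resp.\ $m_1/\log^{10}n$ in the two cases, and compare against the lower bounds on $d|X|$ coming from \Cref{uncolored} (using the $|U|$ bound from \Cref{hitset} in the first case), with the $\log^{10}n$ factor in $L$ absorbing all remaining constants and lower-order $\log$ powers for $n$ beyond a fixed constant. This matches the paper's proof in both structure and the specific bounds invoked.
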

\begin{proof}
	Consider two cases.
	\begin{itemize}[leftmargin=*]
		\item $m_0 \geq m_1$.
		
		In this case, by \Cref{len-param} and \Cref{uncolored} and \Cref{uncolored-lb}, we have (for $n>2^{10}$):
		$$\frac{m\tau}{L}\leq \frac{ m_0^2n^{1/3}}{m\log^{10}n}\leq \frac{m_0^2n^{1/3}}{10^4m\log^4 n} <  \frac{d|X|}{10^3\log n}$$
		
		\item $m_0 < m_1$.
		
		In this case, by \Cref{len-param} and \Cref{uncolored} and \Cref{uncolored-lb}, we have:
		$$\frac{m\tau}{L}\leq \frac{m_1}{\log^{10}n}\leq \frac{m_1}{10^4\log^2 n} <  \frac{d|X|}{10^3\log n}$$
	\end{itemize}
\end{proof}

\paragraph{Success condition of Step \ref{alg-step1}.} For a proof overview, please check \Cref{overview-step1}. Let us begin with the following definition.
\begin{definition}
	For any uncolored neighbor $v$ of $u\in X$, $v$ is called a \emph{lonely} vertex in the scope of $u$, if $\clr_\psi(v\rightarrow u)\notin \miss_\psi(u)$, and for any other uncolored neighbor $v'\neq v$ of $u$, we have $\clr_\psi(v\rightarrow u)\neq \clr_\psi(v'\rightarrow u)$. Let $\lone_\psi(u)$ counts the total number of lonely vertices in the scope of $u$.
\end{definition}

We will show that a sufficient condition for the success probability of Step (1) is:
$$\sum_{u\in X}\lone_\psi(u)\leq \frac{d|X|}{2}$$
The main intuition here is that if this condition holds, then at least a constant fraction of these alternating paths behave like in bipartite graphs.

\begin{claim}\label{step1friendly}
	For any non-lonely vertex $v$ in the scope of $u$, let $P_{x, \clr_\psi(v\rightarrow u)}(v)$ be the $\{x, \clr_\psi(v\rightarrow u)\}$-alternating path starting at $v$. If $\sum_{u\in X}\lone_\psi(u)\leq \frac{d|X|}{2}$, then there exists at least $\frac{d|X|}{8}$ many pairs of vertices $(u, v)$ such that:
	\begin{enumerate}[(i)]
		\item $v$ is not lonely in the scope of $u$;
		\item $P_{x, \clr_\psi(v\rightarrow u)}(v)$ does not end at $u$;
		\item $|P_{x, \clr_\psi(v\rightarrow u)}(v)|\leq L$.
	\end{enumerate}
	Such a pair $(u, v)$ would be called a \emph{Step-(1)-friendly} pair.
\end{claim}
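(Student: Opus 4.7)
The plan is to start from the full collection of pairs $(u, v)$ with $u \in X$ and $v$ an uncolored neighbor of $u$, and to successively subtract those failing each of conditions (i), (ii), and (iii). Since $X \subseteq R_d$ and every $u \in X$ satisfies $\deg_G^\psi(u) \geq d$, the pool contains at least $d|X|$ pairs. The hypothesis $\sum_{u \in X}\lone_\psi(u) \leq d|X|/2$ immediately gives at least $d|X|/2$ non-lonely pairs.

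Next I would verify condition (ii) by splitting the non-lonely pairs according to the two ways a pair can fail loneliness: either $c := \clr_\psi(v \to u) \in \miss_\psi(u)$ (case (a)), or $|\lst_\psi(u, c)| \geq 2$ (case (b)). In case (a), $u$ is missing both $x$ and $c$, so $u$ carries no $\{x, c\}$-colored edge and cannot appear on any $\{x, c\}$-alternating path; condition (ii) is then automatic. In case (b), $u$ is missing only $x$ among $\{x, c\}$ and carries a unique $c$-colored edge, which places $u$ at one endpoint of exactly one component of the $\{x, c\}$-subgraph; hence at most one $v' \in \lst_\psi(u, c)$ can have $P_{x, c}(v')$ terminate at $u$. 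Since $|\lst_\psi(u, c)| \geq 2$ in case (b), at least half of these pairs satisfy (ii), and combining with case (a) we retain at least $d|X|/4$ pairs satisfying (i) and (ii).

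For condition (iii), I would bound the global sum of path lengths. Because the data structure $\ds_X$ guarantees that the colors $\clr_\psi(v \to u_i)$ are distinct across different neighbors $u_i$ of a given $v$, each $v$ appears in $\lst_\psi(u, c)$ for at most one $u$ per color $c$. Therefore $\sum_{u, c}\sum_{v \in \lst_\psi(u, c)} |P_{x, c}(v)|$ counts each $\{x, c\}$-alternating path at most twice (once per endpoint), and invoking \Cref{sum-alt-path} (absorbing the length-$\leq 1$ components into the same $O(m)$ bound) shows this sum is $O(m)$. Hence the number of pairs violating (iii) is at most $O(m/L)$. By \Cref{param-ineq} with $\tau \geq 1$ this is $O(d|X|/\log n)$, which is comfortably below $d|X|/8$ for $n$ large. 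Subtracting from the $d|X|/4$ pairs already surviving (i) and (ii) yields at least $d|X|/8$ Step-(1)-friendly pairs.

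The part I expect to require the most care is the pairing argument in case (b) of condition (ii): one must argue that \emph{no matter how large} $\lst_\psi(u, c)$ is, at most one of the paths $P_{x, c}(v)$ for $v \in \lst_\psi(u, c)$ terminates at $u$; this hinges on $u$ being a single endpoint of a single component of the $\{x, c\}$-subgraph. Everything else reduces to a routine application of \Cref{sum-alt-path} and \Cref{param-ineq} while keeping track of constants.
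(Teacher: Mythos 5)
Your proposal is correct and follows essentially the same route as the paper's proof: lower-bound the non-lonely pairs by $d|X|/2$, use the "at most one path in each same-color group ends at $u$" pairing argument to retain $d|X|/4$ pairs satisfying (i)--(ii), and then discard the at most $O(m/L) < d|X|/8$ pairs with long paths via Lemma~\ref{sum-alt-path} (each path counted at most twice) and Claim~\ref{param-ineq}. The only cosmetic difference is in the degenerate case where $\clr_\psi(v\to u)\in\miss_\psi(u)$: you argue $u$ is isolated in the $\{x,c\}$-subgraph so (ii) holds vacuously, while the paper treats the path as empty; both are fine.
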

\begin{proof}
	For any vertex $u$, let $v_1, v_2, \ldots, v_l, l\geq 2$ be all of $u$'s uncolored neighbors such that:
	$$y = \clr_\psi(v_1\rightarrow u) = \clr_\psi(v_2\rightarrow u) = \cdots = \clr_\psi(v_l\rightarrow u)$$ Then, at most one of the $\{x, y\}$-alternating paths $P_{x, y}(v_1), P_{x, y}(v_2), \ldots, P_{x, y}(v_l)$ could end at $u$. So, at most one alternating path $P_{x, y}(v_i)$ is terminating at $u$.
	
	If $l = 1$, then by definition of loneliness we have $y = \clr_\psi(v_1\rightarrow u)\in \miss_\psi(u)$, so the algorithm would directly assign $\psi(u, v_1)\leftarrow y$; we could view $P_{x, y}(v)$ as an empty path in this case. If $l\geq 2$, then we have $l-1\geq l/2$, and so the total number of pairs $(u, v)$ that satisfies the (i)(ii) at least $d|X|/4$.
	
	For requirement (iii), we argue that the total length of all alternating paths $P_{x, \clr_\psi(v\rightarrow u)}(v)$ such that $|P_{x, \clr_\psi(v\rightarrow u)}(v)|\geq 2$ is at most $6m$. In fact, on the one hand, since $\clr(v\rightarrow u)$ are different when $u$ changes, all alternating paths $P_{x, \clr_\psi(v\rightarrow u)}(v)$ are different for a fixed $v$. So, when taking into account of all alternating paths $P_{x, \clr_\psi(v\rightarrow u)}(v)$, each path is counted at most twice. Hence, by \Cref{sum-alt-path}, the total length of all alternating paths of $P_{x, \clr_\psi(v\rightarrow u)}(v)$ such that $|P_{x, \clr_\psi(v\rightarrow u)}(v)|\geq 2$ is bounded by $6m$.
	
	Using this upper bound and \Cref{param-ineq}, there are at most $24m / L < d|X|/8$ alternating path $P_{x, \clr_\psi(v\rightarrow u)}(v)$ satisfying the first two bullets that could possibly have length larger than $L$. In other words, among the $d|X|/4$ pairs $(u, v)$ that satisfy (i)(ii), there are at most $16\Delta n_0/L$ of them that could violate (iii). Hence, there are at least $d|X|/4 - d|X|/8\geq d|X|/8$ Step-(1)-friendly pairs.

\end{proof}

By \Cref{step1friendly}, under the condition that $\sum_{u\in X}\lone_\psi(u)\leq \frac{d|X|}{2}$, there are at least $d|X|/8$ Step-(1)-friendly pairs. If Step (1) successfully samples such a pair, then it would succeed in the color extension. Since each pair is sampled with probability at least $\frac{1}{2d|X|}$, the success probability of Step (1) is at least $1/16$.

\paragraph{Success condition of Step~\ref{alg-step2}} For a proof overview, please check \Cref{overview-step2}. Next, we assume that the success condition of Step (1) does not hold; that is $\sum_{u\in X}\lone_\psi(u) >  \frac{d|X|}{2}$.

For any pair $(u, v)$ where $u\in X$ and $v$ is an uncolored neighbor of $u$, if $v$ satisfies the first condition in \Cref{step2friend}, then let $Q_{x}(u, v)$ be the Vizing chain if we were to apply Vizing's procedure to color the edge $(u, v)$ in Step (2)(a); if Step (2)(b) were to be operated on $v$, then let $Q_{x, \clr_\psi(v\rightarrow u)}(v)$ be the $(x, \clr_\psi(v\rightarrow u))$-alternating path starting from $u$.

\begin{definition}\label{step2friend}
	For any vertex $u\in X$ as well as a lonely uncolored neighbor $v$ of $u$, let $T\subseteq T_\psi(u)$ be the tree component containing the unique vertex $w$ such that $\psi(u, w) = \clr_\psi(v\rightarrow u)$. We say that $(u, v)$ is a \emph{Step-(2)-friendly} pair if:
	\begin{enumerate}[(i)]
		\item either $w$ is the only vertex in $T$ such that $|\lst_\psi(u, \psi(u, w))| = 1$;
		\item or there exists a strict descendant $w'$ of $w$ in $T$ such that $v\notin\lst_\psi(u, \psi(u, w'))\neq \emptyset$;
		\item the length of $|Q_x(u, v)|$ or $|Q_{x, \clr_\psi(v\rightarrow u)}(v)|$ is at most $L$.
	\end{enumerate}
\end{definition}

For any $u\in X$, let $\desc_\psi(u)$ be the total number of lonely uncolored neighbors $v$ of $u$ such that $(u, v)$ satisfies property (i)(ii) in \Cref{step2friend}. 
We will show that a sufficient condition for the success probability of Step (2) is that: $$\sum_{u\in X}\desc_\psi(u)\geq \frac{d|X|}{8}$$

\begin{claim}
	Under the condition that $\sum_{u\in X}\desc_\psi(u)\geq \frac{d|X|}{8}$, the total number of Step-(2)-friendly pairs is at least $d|X|/16$.
\end{claim}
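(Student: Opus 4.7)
The plan is to mirror the argument of \Cref{step1friendly}. We start from the $\sum_{u \in X} \desc_\psi(u) \geq d|X|/8$ pairs $(u, v)$ satisfying conditions (i) or (ii) of \Cref{step2friend}, and want to show that at most $d|X|/16$ of them violate the length condition (iii); subtracting will leave at least $d|X|/16$ Step-(2)-friendly pairs. To each such pair $(u, v)$ we associate the alternating path $Q(u, v)$ that Step~\ref{alg-step2} would try to flip: the Vizing chain $Q_x(u, v)$ in case (i), and $Q_{x, \clr_\psi(v \rightarrow u)}(v)$ in case (ii). In both cases $u$ is an endpoint of $Q(u, v)$, since $x \in \miss_\psi(u)$ while $\clr_\psi(v \rightarrow u) \notin \miss_\psi(u)$ by loneliness of $v$.

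The heart of the argument is to show $\sum_{(u, v)} |Q(u, v)| = O(m)$, by combining \Cref{sum-alt-path} with a bounded-multiplicity argument: every maximal $\{x, z\}$-alternating path in $G$ should appear among the $Q(u, v)$ at most $O(1)$ times. For a fixed center $u$, distinct lonely neighbors $v$ yield distinct colors $\clr_\psi(v \rightarrow u)$, hence distinct colored neighbors $w \in C_\psi(u)$ with $\psi(u, w) = \clr_\psi(v \rightarrow u)$. In case (ii) this immediately forces the color types $\{x, \clr_\psi(v \rightarrow u)\}$ to be distinct across different $v$ sharing the same $u$. In case (i), the Vizing fan can be arranged to begin with $v_1 = w$ and then follow the directed path in $T_\psi(u)$ prescribed by the special colors $c_\psi(\cdot)$, so the terminating color $z = \psi(u, v_j)$ is essentially determined by $w$ together with the local structure of $T$. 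Across different centers $u$, any maximal $\{x, z\}$-alternating path has only two endpoints, so at most two centers $u$ can appear as its starting endpoint. Once the $O(m)$ bound is established, \Cref{param-ineq} (using $\tau \geq 1$) gives that at most $O(m/L) \leq O(m\tau/L) < d|X|/16$ pairs have $|Q(u, v)| > L$, and the claim follows by subtraction.

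The main obstacle is justifying the $O(1)$ multiplicity bound in case (i). Distinct lonely $v$'s sharing the same $u$ need not have distinct $c_\psi(v)$, so their Vizing fans can in principle overlap; the argument must exploit the fact that each fan is pinned to start at the unique $w$ with $\psi(u, w) = \clr_\psi(v \rightarrow u)$ and then follows the directed path in $T$ out of $w$, so that the terminating fan vertex $v_k$ and the closing color $z$ depend only on where in $T$ the fan first triggers Vizing's closure condition. Making this precise will require careful bookkeeping of how fans from different starting points in the same directed tree merge and terminate, and arguing that only $O(1)$ lonely $v$'s can land on the same terminating data.
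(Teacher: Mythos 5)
Your proposal is correct and follows essentially the same route as the paper: count the $\geq d|X|/8$ pairs satisfying (i)(ii), bound the number of pairs whose associated alternating path exceeds length $L$ by combining \Cref{sum-alt-path} with a constant multiplicity bound and \Cref{param-ineq}, and subtract. The "main obstacle" you flag (multiplicity of Vizing chains in case (i)) is handled no more carefully in the paper, which simply asserts that each path is counted at most twice; the large slack in \Cref{param-ineq} absorbs any constant multiplicity in any case.
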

\begin{proof}
	On the one hand, as $\sum_{u\in X}\desc_\psi(u)\geq \frac{d|X|}{8}$, the number of pairs $(u, v)$ that satisfies (i)(ii) in \Cref{step2friend} is at least $\frac{d|X|}{8}$. On the other hand, using \Cref{sum-alt-path}, the total number of alternating paths with length $\geq 2$ is at most $6m$ (each single path could be counted at most twice), and therefore by \Cref{param-ineq} the total number of alternating paths whose lengths are larger than $L$ is at most:
	$$\frac{6m}{L} <  d|X|/16$$
	Therefore, by a subtraction the number of Step-(2)-friendly pairs is at least $d|X|/16$.
\end{proof}

According to our algorithm, for any Step-(2)-friendly pair $(u, v)$, the probability $(u, v)$ is sampled and processed in Step (2) is at least $\frac{1}{2d|X|}$. Therefore, under the condition that $\sum_{u\in X}\desc_\psi(u)\geq \frac{d|X|}{8}$, Step (2) would succeed with probability at least $1/32$.

\paragraph{Success condition of Step \ref{alg-step3}} For a proof overview, please check \Cref{overview-step3}. Next, we assume that the success conditions of Step (1)(2) do not hold; that is $\sum_{u\in X}\lone_\psi(u) >  \frac{d|X|}{2}$ and $\sum_{u\in X}\desc_\psi(u)< \frac{d|X|}{8}$. For any $u\in T$ and any tree component $T\subseteq T_\psi(u)$, define $\br_\psi(T) = \sum_{b\in T, \br_T(b) \geq 2}\br_T(b)$, and let $\br_\psi(u) = \sum_{T\subseteq T_\psi(u)}\br_\psi(T)$.

\begin{claim}\label{tree}
	For any $u\in X$, $\br_\psi(u)< 4d$, and $\br_\psi(u) + 2\desc_\psi(u)\geq \lone_\psi(u)$.
\end{claim}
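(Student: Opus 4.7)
The plan is to derive both bounds from a careful analysis of the ``active subtree'' $A_T$ inside each tree component $T$ of $T_\psi(u)$. I would first call a colored neighbor $w$ with $|\lst_\psi(u,\psi(u,w))|=1$ a \emph{source} of $u$, and set up a bijection between sources and lonely uncolored neighbors: the unique $v\in \lst_\psi(u,\psi(u,w))$ has $\clr_\psi(v\to u)=\psi(u,w)\notin \miss_\psi(u)$ and, by definition of $\lst$, no other uncolored neighbor shares this color, so $v$ is lonely; conversely every lonely $v$ produces such a $w$ via its $\clr$-value, and since $\psi$ is a proper partial edge coloring, distinct sources carry distinct colors and hence correspond to distinct lonely vertices. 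Writing $\alpha_T$ for the number of sources in $T$, this gives $\lone_\psi(u)=\sum_T \alpha_T\le \deg_G^\psi(u)<2d$.

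For the first inequality I would observe that the set of active vertices inside each component $T$ is closed under taking ancestors (an ancestor of any vertex with a source descendant shares that source as a descendant), so $A_T$ is a connected rooted subtree whose leaves are exactly the \emph{deepest sources} (sources with no source strict descendant). Letting $\beta_T$ denote this leaf count and $|B_T|$ the number of branching vertices in $A_T$, a double count of the edges of $A_T$ by parent type (leaf, chain vertex with a single active child, branching vertex) yields $\br_\psi(T)=\beta_T+|B_T|-1$ whenever $\beta_T\ge 1$. Contracting the chain vertices produces a rooted tree in which every internal node has at least two children, forcing $|B_T|\le \beta_T-1$, and therefore $\br_\psi(T)\le 2\beta_T-2$. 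Summing over components gives $\br_\psi(u)\le 2\sum_T \beta_T\le 2\sum_T \alpha_T<4d$.

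For the second inequality the pivotal observation is that every \emph{non}-deepest source $w$ automatically satisfies condition~(ii) of \Cref{step2friend}: it admits a source strict descendant $w'$ in the same component, whose list $\lst_\psi(u,\psi(u,w'))$ has size $1$ and is therefore non-empty, and since edge colors around $u$ are all distinct we have $\psi(u,w')\ne \psi(u,w)=\clr_\psi(v\to u)$, so the lonely $v$ attached to $w$ cannot lie in that list. Single-source components ($\alpha_T=1$) instead fall under condition~(i). Consequently $\desc_T\ge \alpha_T-\beta_T$ when $\alpha_T\ge 2$, and $\desc_T\ge 1$ when $\alpha_T=1$.

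To finish, I would combine these estimates by a per-component case analysis of the inequality $\br_\psi(T)+2\desc_T\ge \alpha_T$. The case $\alpha_T=0$ is trivial; when $\alpha_T=1$ we have $\br_\psi(T)=0$ and $2\desc_T\ge 2\ge \alpha_T$; when $\alpha_T\ge 2$ and $\beta_T=1$ (a chain of sources) $\br_\psi(T)=0$ but $2\desc_T\ge 2(\alpha_T-1)\ge \alpha_T$; and when $\beta_T\ge 2$ the subtree $A_T$ cannot be a chain, so $|B_T|\ge 1$ and $\br_\psi(T)\ge \beta_T$, giving $\br_\psi(T)+2\desc_T\ge 2\alpha_T-\beta_T\ge \alpha_T$ via $\beta_T\le \alpha_T$. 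Summing over components yields $\br_\psi(u)+2\desc_\psi(u)\ge \lone_\psi(u)$. I expect the main subtlety to be verifying that condition~(ii) truly fires for every non-deepest source: one must use only the distinctness of edge colors at $u$, rather than the stronger singleton-list property that the algorithm itself exploits in Step~(2)(b).
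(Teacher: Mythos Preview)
Your proposal is correct and follows essentially the same approach as the paper: both proofs work component-by-component in $T_\psi(u)$, identify the sources (the paper's set $D$, your $\alpha_T$) and the deepest sources among them (the paper's $D\setminus F$, your $\beta_T$), and use the rooted-tree structure of the active vertices to bound $\br_\psi(T)$ by roughly twice the number of deepest sources, then handle the second inequality by the same three-case analysis (single source, chain of sources, genuinely branching). Your write-up is in fact more explicit in places---the clean identity $\br_\psi(T)=\beta_T+|B_T|-1$ and the spelled-out bijection between sources and lonely neighbors---but the underlying argument is the same.
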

\begin{proof}
	Take any tree component $T\subseteq T_\psi(u)$, let $D\subseteq T$ be the set of all vertices such that for each $w\in D$, $\lst_\psi(u, \psi(u, w))$ is a singleton set. For the rest we only consider the case where $D\neq \emptyset$; otherwise we move on to other tree components of $T_\psi(u)$.
 
    Let $B\subseteq T$ be the set of all branching vertices in $T$. Let $F\subseteq D$ be the set of vertices which have strict descendants from $D$ in tree $T$. Therefore $D\setminus F$ is the set of vertices in $D$ that are on the lowest positions of $T$. 
	
	For the first half of the statement, consider the subtree of $T$ formed by vertices in $B\cup (D\setminus F)$. Then, we have:
	$$\frac{1}{2}\sum_{b\in B}\br_T(b) + |D\setminus F|\geq |B| + |D\setminus F| = 1+ \sum_{b\in B}\br_T(b)$$
	which leads to $\br_\psi(T) = \sum_{b\in B}\br_T(b) < 2|D\setminus F|\leq 2|D|$. Taking a summation over all tree component $T$, we have $\sum_{b\in T}\br_T(b)< 4d$.
	
	For the second half of the statement, consider three cases below.
	\begin{itemize}[leftmargin=*]
		\item If $T$ only contains one vertex $w$ such that $\lst_\psi(u, \psi(u, w))$ contains a lone neighbor $v$ of $u$, then the pair $(u, v)$ is counted in the sum $\desc_\psi(u)$, while $D = \{w\}$ contains only one vertex. 
		
		\item Otherwise, if $|D|\geq 2$ but all vertices in $D$ are on the same leaf-to-root tree path in $T$, then $|D\setminus F|\geq |D|-1\geq |D|/2$. By definition of $\desc_\psi(u)$, this tree $T$ would contribute $|D|-1\geq |D|/2$ to the total sum $\desc_\psi(u)$, and contribute $|D|$ to $\lone_\psi(u)$.
		
		\item Finally, if $|D\geq 2|$ and vertices in $D$ are not on the same leaf-to-root tree path in $D$, then because $|D| - |F|$ is the number of vertices in $D$ that are on lowest positions of $T$, this quantity should not exceed the total number of branchings which is $\br_\psi(T)$. In other words, $T$ contributes $|D| - |F|$ to the sum $\br_\psi(u)$, and $|D|$ to the sum $\lone_\psi(u)$.
	\end{itemize}
	Summing over all the three cases and all different tree components in $T_\psi(u)$, we have $\br_\psi(u) + 2\desc_\psi(u)\geq \lone_\psi(u)$. See \Cref{branch-leaf} for an illustration.
	
	\begin{figure}
		\centering
		\begin{tikzpicture}[thick,scale=1]

	
	\draw (-3, 0) node(1)[circle, draw, fill=red, inner sep=0pt, minimum width=6pt] {};
	\draw (-1, 0) node(2)[circle, draw, fill=red, inner sep=0pt, minimum width=6pt] {};
	\draw (1, 0) node(3)[circle, draw, fill=black!50, inner sep=0pt, minimum width=6pt] {};
	\draw (3, 0) node(4)[circle, draw, fill=black!50, inner sep=0pt, minimum width=6pt] {};
	
	\draw (-2, 1.5) node(5)[circle, draw, fill=cyan, inner sep=0pt, minimum width=6pt] {};
	\draw (2, 1.5) node(6)[circle, draw, fill=red, inner sep=0pt, minimum width=6pt] {};
	
	\draw (-2, 3) node(7)[circle, draw, fill=orange, inner sep=0pt, minimum width=6pt] {};
	\draw (2, 3) node(8)[circle, draw, fill=black!50, inner sep=0pt, minimum width=6pt] {};
	
	\draw (0, 4.5) node(9)[circle, draw, fill=cyan, inner sep=0pt, minimum width=6pt] {};
	
	
	\draw [->, line width = 0.5mm] (1) to (5);
	\draw [->, line width = 0.5mm] (2) to (5);
	\draw [->, line width = 0.5mm] (3) to (6);
	\draw [->, line width = 0.5mm] (4) to (6);
	\draw [->, line width = 0.5mm] (5) to (7);
	\draw [->, line width = 0.5mm] (6) to (8);
	\draw [->, line width = 0.5mm] (7) to (9);
	\draw [->, line width = 0.5mm] (8) to (9);

\end{tikzpicture}
		\caption{This example shows a tree component $T$ where $D\setminus F$ are red vertices, $F$ are orange vertices, and branching vertices are blue ones. So by definition, $\br_\psi(T) = 4 = 1+|D\setminus F|$.}\label{branch-leaf}
	\end{figure}
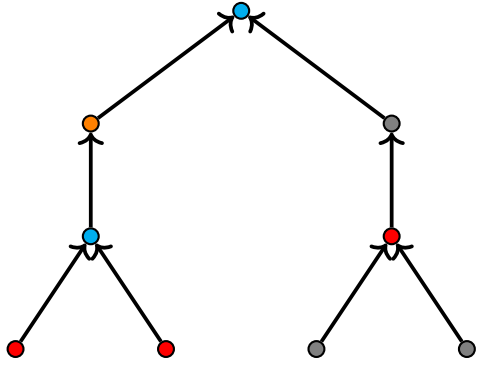
\end{proof}

By the above claim, under the assumption that $\sum_{u\in X}\lone_\psi(u) >  \frac{d|X|}{2}$ and $\sum_{u\in X}\desc_\psi(u)< \frac{d|X|}{8}$, we know that $\sum_{u\in X}\br_\psi(u)>\frac{d|X|}{4}$. Therefore, there exists an index $h\in [1, \log\Delta]$ such that: $$2^{h+1}\sum_{u\in X}|B_h(u)|>\sum_{u\in X}\sum_{b\in T\subseteq T_\psi(u), \br_T(b)\in [2^h, 2^{h+1})}\br_T(b)\geq \frac{d|X|}{4\log\Delta}$$
For each vertex $u\in X$, let $B_h(u)\subseteq T_\psi(u)$ be the set of branching vertices whose branch numbers in $T_\psi(u)$ are in the range $[2^h, 2^{h+1})$. Then, by \Cref{tree}, we have:
$$|B_h(u)|\leq \br_\psi(u) / 2^h < d / 2^{h-2}$$

\begin{definition}\label{def-cong}
	For any vertex $w$, define the \emph{congestion} $\cng(w)$ of $w$ to be the total number of vertices $u\in X$ such that $w$ is an active child of some branching vertex $b$ in a tree component $T\subseteq T_\psi(u)$ with $\br_T(b)\in [2^h, 2^{h+1})$. A vertex $w$ is called \emph{congested} if $\cng(w) \geq \tau$.
\end{definition}

Define $C = \{w\mid \cng(w) \geq 1\}$, so by definition of congestion we have: $$\sum_{w\in C}\cng(w) = \sum_{u\in X}\sum_{b\in T\subseteq T_\psi(u), \br_T(b)\in [2^h, 2^{h+1})}\br_T(b)\geq \frac{d|X|}{4\log\Delta}$$
Decompose this set as $C = C_\text{low}\cup C_\text{high}$ where $C_\text{high} = \{w\mid \cng(w) \geq \tau\}$. Next, we will show that a sufficient condition for the success probability of Step (3) is:
$$\sum_{w\in C_\text{low}}\cng(w)> \frac{2}{3}\sum_{w\in C}\cng(w)$$

\begin{definition}
	For a vertex $u\in X$ as well as a colored neighbor $w$ of $u$ such that $w\in C_\text{low}$ and the value of $\cng(w)$ is contributed by $u$. Let $R_{x, c_\psi(w)}(w)$ be the $(x, c_\psi(w))$-alternating path starting at $w$. We say that $(u, w)$ is \emph{Step-(3)-ready} if $R_{x, c_\psi(w)}(w)$ does not terminate at $u$.
\end{definition}

\begin{claim}
	Under the condition that $\sum_{w\in C_\text{low}}\cng(w)> \frac{2}{3}\sum_{w\in C}\cng(w)$, the total number of Step-(3)-ready pairs is at least $\frac{1}{6}\sum_{w\in C}\cng(w) \geq \frac{d|X|}{24\log\Delta}$.
\end{claim}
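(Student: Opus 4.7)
The plan is to upper bound the number of ``bad'' pairs $(u,w)$, where a pair is called bad if it is counted in $\sum_{w\in C_\text{low}}\cng(w)$ but the alternating path $R_{x, c_\psi(w)}(w)$ \emph{does} terminate at $u$; subtracting from the hypothesized lower bound $\sum_{w\in C_\text{low}}\cng(w) > \tfrac{2}{3}\sum_{w\in C}\cng(w)$ then yields the number of Step-(3)-ready pairs.

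First I would fix $u\in X$ and count bad pairs with this first coordinate. For any such $(u,w)$ the vertex $w$ is an active child of some $b\in B_h(u)$ in $T_\psi(u)$, hence $c_\psi(w) = \psi(u, b) =: z$. Now $u\in X$ means $x\in\miss_\psi(u)$, so $u$ is not incident to any $x$-colored edge, and any $\{x,z\}$-alternating path terminating at $u$ must therefore arrive via a $z$-colored edge. Since each vertex has at most one edge of any given color, the unique $z$-edge at $u$ is $(u,b)$. Consequently $R_{x,z}(w)$ ends at $u$ if and only if $w$ coincides with the (unique) opposite endpoint of the single $\{x,z\}$-alternating path through $(u,b)$; in particular, at most one active child $w$ of $b$ can make $(u,w)$ bad. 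Summing over $b\in B_h(u)$, the number of bad pairs with first coordinate $u$ is at most $|B_h(u)|$.

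Next I would sum over $u\in X$. Because each vertex of $T_\psi(u)$ has out-degree at most one, every $w$ is an active child of at most one $b\in B_h(u)$, and so
\[
\sum_{w\in C}\cng(w) \;=\; \sum_{u\in X}\sum_{b\in B_h(u)}\br_T(b).
\]
Since $\br_T(b)\geq 2^h$ for every $b\in B_h(u)$, this yields $\sum_{u\in X}|B_h(u)| \leq 2^{-h}\sum_{w\in C}\cng(w)$, and using $h\geq 1$ I get the total bad-pair bound
\[
|\{\text{bad pairs}\}| \;\leq\; \sum_{u\in X}|B_h(u)| \;\leq\; \tfrac{1}{2}\sum_{w\in C}\cng(w).
\]
Combining with the hypothesis,
\[
|\{\text{Step-(3)-ready pairs}\}| \;\geq\; \sum_{w\in C_\text{low}}\cng(w) \;-\; \tfrac{1}{2}\sum_{w\in C}\cng(w) \;>\; \tfrac{2}{3}\sum_{w\in C}\cng(w) - \tfrac{1}{2}\sum_{w\in C}\cng(w) \;=\; \tfrac{1}{6}\sum_{w\in C}\cng(w),
\]
and the final inequality $\geq \tfrac{d|X|}{24\log\Delta}$ is just the bound $\sum_{w\in C}\cng(w) \geq \tfrac{d|X|}{4\log\Delta}$ established immediately before the claim. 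The main subtlety is the uniqueness-of-endpoint observation: the condition $x\in\miss_\psi(u)$ forces the last edge of any terminating $\{x,z\}$-path to be the specific edge $(u,b)$, collapsing the count of bad witnesses to at most one per $(u,b)$. Once this is in place, the rest is clean double counting exploiting the slack $2^h\geq 2$ guaranteed by $h\geq 1$.
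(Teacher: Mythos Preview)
Your proof is correct and takes a genuinely different route from the paper. The paper introduces an auxiliary notion of a \emph{single} vertex (a non-congested $w$ whose active siblings under the same parent $b$ are all congested), charges each single pair $(u,w)$ to a congested sibling, and thereby bounds the number of single pairs by $\sum_{w'\in C_\text{high}}\cng(w') < \tfrac{1}{3}\sum_{w'\in C}\cng(w')$; among the remaining non-single low-congestion pairs (which come in groups of size $\geq 2$ per branching vertex), at most one per group can have its path end at $u$, yielding the factor $\tfrac{1}{2}\cdot\tfrac{1}{3}=\tfrac{1}{6}$.

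Your argument sidesteps the single/non-single split entirely: you observe that for each fixed $(u,b)$ at most one active child of $b$ can be bad (by the endpoint-uniqueness observation), so bad pairs number at most $\sum_u |B_h(u)|$, and then you extract the needed factor $\tfrac{1}{2}$ purely from $\br_T(b)\geq 2^h\geq 2$. This is cleaner and more elementary, and it makes transparent that the only structural fact being used is the branching lower bound $\br_T(b)\geq 2$; the $C_\text{high}/C_\text{low}$ partition enters only through the hypothesis at the very end. The paper's charging argument, by contrast, ties the bad-pair count to $\sum_{C_\text{high}}\cng(w')$, which is conceptually useful for seeing why the case split between Steps~(3) and~(4) is exhaustive, but is not needed for this particular claim.
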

\begin{proof}
	Consider any pair $(u, w)$ such that $w$ is not congested. Let $b$ be the parent node of $w$ in $T_\psi(u)$. We say that $w$ is \emph{single} in $T_\psi(u)$ if all active children of $b$ other than $w$ are congested.
	
	By definition, we can charge any single vertex $w$ to all of its siblings vertices in $T_\psi(u)$, and thus the total number of pairs $(u, w)$ such that $w$ is single in $T_\psi(u)$ is at most:
	$$\sum_{w'\in C_\text{high}}\cng(w') < \frac{1}{3}\sum_{w'\in C}\cng(w')$$
	Consequently, the total number of pairs $(u, w)$ such that $w\in C_\text{low}$ is not single in $T_\psi(u)$ is at least:
	$$\sum_{w'\in C_\text{low}}\cng(w') -  \frac{1}{3}\sum_{w'\in C}\cng(w') > \frac{1}{3}\sum_{w'\in C}\cng(w') $$
	
	For each $u\in X$, for each branching vertex $b\in T\subseteq T_\psi(u)$ such that $\br_T(b)\in [2^h, 2^{h+1})$, let $w_1, w_2, \ldots, w_l\in C_\text{low}$ ($l\geq 2$) be all of its non-single vertices. Then, at most one of the $(x, \psi(u, b))$-alternating paths $R_{x, \psi(u, b)}(w_1), R_{x, \psi(u, b)}(w_2), \ldots, R_{x, \psi(u, b)}(w_l)$ can end at $u$; or in other words, at least $l-1 \geq l/2$ of the pairs $(u, w_1), (u, w_2), \ldots, (u, w_l)$ are Step-(3)-ready. Summing over all $u\in X$ we can prove that the total number of such pairs is at least $\frac{1}{6}\sum_{w'\in C}\cng(w')$.
\end{proof}

Finally, let us bound the total length of alternating paths.
\begin{claim}
	The total length of alternating paths $$\sum_{(u, w)\text{ is Step-(3)-ready}, |R_{x, c_\psi(w)}(w)|\geq 2}|R_{x, c_\psi(w)}(w)|$$ is at most $6m\tau$.
\end{claim}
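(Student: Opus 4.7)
The plan is to decouple the sum over pairs $(u,w)$ into a sum over $w$ alone, using two observations: first, the alternating path $R_{x, c_\psi(w)}(w)$ depends only on $w$ and not on the particular $u\in X$ for which $w$ happens to be an active child; and second, the number of choices of $u$ that can pair with a given $w\in C_\text{low}$ is at most $\cng(w)<\tau$ by definition of $C_\text{low}$.

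More concretely, I would first rewrite the sum, grouping by $w$:
\[
\sum_{\substack{(u,w)\text{ Step-(3)-ready}\\ |R_{x,c_\psi(w)}(w)|\geq 2}}\!\!\!\!\!\!\!\!\!|R_{x,c_\psi(w)}(w)|
\;\leq\; \sum_{\substack{w\in C_\text{low}\\ |R_{x,c_\psi(w)}(w)|\geq 2}}\cng(w)\cdot|R_{x,c_\psi(w)}(w)|
\;\leq\; \tau\cdot\sum_{\substack{w\in C_\text{low}\\ |R_{x,c_\psi(w)}(w)|\geq 2}}|R_{x,c_\psi(w)}(w)|.
\]
This uses only the fact that Step-(3)-ready pairs involve $w\in C_\text{low}$, together with the definition of congestion.

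Next, I would bound $\sum_w |R_{x,c_\psi(w)}(w)|$ (restricted to paths of length $\geq 2$) by grouping $w$ according to the second color $y=c_\psi(w)$. Since $c_\psi(w)\in\miss_\psi(w)$, the vertex $w$ is necessarily an endpoint of the unique $\{x,y\}$-alternating path containing it, and any such maximal path has exactly two endpoints. Hence, for each color $y\neq x$, each $\{x,y\}$-alternating path of length at least $2$ is counted at most twice across all $w$ with $c_\psi(w)=y$. Summing over $y$ and invoking \Cref{sum-alt-path} gives
\[
\sum_{\substack{w\\|R_{x,c_\psi(w)}(w)|\geq 2}}|R_{x,c_\psi(w)}(w)| \;\leq\; 2\sum_{y\neq x} L_{x,y} \;<\; 6m.
\]
Combining with the previous displayed inequality yields the claimed bound of $6m\tau$.

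The two ingredients are entirely modular and both rather short; the only point that requires care is the first observation that $R_{x,c_\psi(w)}(w)$ is a function of $w$ alone (in particular, it does not depend on $u$ or on the tree component of $T_\psi(u)$ in which $w$ sits), so that the factor of $\cng(w)$ is an honest upper bound on the multiplicity of the path in the left-hand sum. Once this is in place, the rest is just the standard endpoint-counting argument used earlier in the paper.
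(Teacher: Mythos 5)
Your proposal is correct and follows essentially the same route as the paper's own proof: bound the multiplicity of each path by $\cng(w)<\tau$ (since Step-(3)-ready pairs have $w\in C_\text{low}$), then bound the total length of the distinct paths by $6m$ via Lemma~\ref{sum-alt-path} and the fact that each maximal alternating path has only two endpoints. Your write-up just makes explicit the two counting steps that the paper compresses into one sentence.
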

\begin{proof}
	By definition of Step-(3)-readiness, for each such pair $(u, w)$ in the summation, we have $\cng(w)\leq \tau$. Therefore, the same alternating path length $|R_{x, c_\psi(w)}(w)|$ appears in the summation for at most $\tau$ times. Then, using \Cref{sum-alt-path}, we know that the total summation is at most $6m\tau$.
\end{proof}

\begin{claim}
	A Step-(3)-ready pair $(u, w)$ is called Step-(3)-friendly if $|R_{x, c_\psi(w)}(w)|\leq L$. Then, the total number of Step-(3)-friendly pairs is at least $\frac{d|X|}{48\log\Delta}$.
\end{claim}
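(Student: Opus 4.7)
The plan is a short subtraction argument: combine the lower bound on Step-(3)-ready pairs with the upper bound on the total length of their alternating paths, discard the few that are too long, and check that what remains still meets the target $\frac{d|X|}{48\log\Delta}$.

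First I would recall from the immediately preceding claim that the number of Step-(3)-ready pairs is at least $\frac{d|X|}{24\log\Delta}$. Next, from the length claim, the sum of $|R_{x,c_\psi(w)}(w)|$ over all Step-(3)-ready pairs with $|R_{x,c_\psi(w)}(w)|\geq 2$ is at most $6m\tau$. By a simple averaging (Markov-style) argument, the number of Step-(3)-ready pairs with $|R_{x,c_\psi(w)}(w)|> L$ is therefore at most $\frac{6m\tau}{L}$; the pairs with $|R_{x,c_\psi(w)}(w)|\leq 1$ are automatically Step-(3)-friendly and need no attention.

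Invoking \Cref{param-ineq}, which gives $\frac{m\tau}{L}<\frac{d|X|}{10^3\log n}$, the number of Step-(3)-ready pairs that violate the length requirement is strictly less than $\frac{6d|X|}{10^3\log n}$. Subtracting from the Step-(3)-ready lower bound yields
\[
\#\{\text{Step-(3)-friendly pairs}\}\ \geq\ \frac{d|X|}{24\log\Delta}-\frac{6d|X|}{10^3\log n}.
\]
Since $\Delta\leq n$, we have $288\log\Delta\leq 1000\log n$, so $\frac{6}{10^3\log n}\leq \frac{1}{48\log\Delta}$, and the right-hand side is at least $\frac{d|X|}{24\log\Delta}-\frac{d|X|}{48\log\Delta}=\frac{d|X|}{48\log\Delta}$, as required.

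The only step that requires any care is verifying that the slack between $\frac{1}{24\log\Delta}$ and $\frac{1}{48\log\Delta}$ dominates the loss term $\frac{6m\tau}{L}$; this is exactly what the constant $10^3$ in \Cref{param-ineq} was chosen to buy, so no new ingredient is needed and the bound follows immediately.
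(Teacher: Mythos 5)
Your proposal is correct and follows exactly the paper's argument: subtract the at most $6m\tau/L$ ready pairs whose path exceeds length $L$ (via Markov on the total-length bound) from the $\frac{d|X|}{24\log\Delta}$ lower bound on Step-(3)-ready pairs, and use \Cref{param-ineq} to verify the remainder is at least $\frac{d|X|}{48\log\Delta}$. Your explicit handling of the paths of length at most $1$ and the arithmetic check $288\log\Delta\leq 1000\log n$ are details the paper leaves implicit, but the route is the same.
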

\begin{proof}
	Since the total number of Step-(3)-ready pairs is at least $\frac{d|X|}{24\log\Delta}$, then by the previous claim and \Cref{param-ineq}, we know that the total number of Step-(3)-friendly pairs is at least:
	$$\frac{d|X|}{24\log\Delta} - \frac{6m\tau}{L} > \frac{d|X|}{48\log\Delta}$$
\end{proof}

According to the algorithm and using \Cref{tree}, for each Step-(3)-friendly pair $(u, w)$, it is selected by Step (3) with probability at least:
$$\frac{1}{|X|}\cdot \frac{1}{\sum_{b\in B_h(u)}\br_T(b)}> \frac{1}{4d|X|}$$

Under the condition that 
$$\sum_{w\in C_\text{low}}\cng(w)> \frac{2}{3}\sum_{w\in C}\cng(w) \geq \frac{d|X|}{6\log\Delta}$$ 
we know that with probability at least $\frac{1}{192\log^2\Delta}$ (the extra log factor is the inverse probability of guessing the right choice of $h$), the pair $(u, w)$ is Step-(3)-friendly, and therefore Step (3) would succeed in this case.

\paragraph{Success condition of Step \ref{alg-step4}} For a proof overview, please check \Cref{overview-step4}. Finally, let us assume that the success conditions of Step (1)(2)(3) do not hold. According to previous analysis, we can assume now $\sum_{w\in C_\text{high}}\cng(w)> \frac{1}{3}\sum_{w\in C}\cng(w)$. 

Let $C_h(u)\subseteq B_h(u)$ be the set of branching vertices one of whose active children is congested. 
\begin{claim}\label{step4friendly}
	$\sum_{u\in X}|C_h(u)|\geq \frac{1}{6}\sum_{u\in X}|B_h(u)| > \frac{d|X|}{3\cdot 2^{h+4}\log\Delta}$.
\end{claim}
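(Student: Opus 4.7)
The plan is a clean double-counting argument. Let $N$ denote the number of triples $(u, b, w)$ such that $u \in X$, $b \in B_h(u)$, and $w$ is a congested active child of $b$ in its tree component of $T_\psi(u)$. Grouping by $w$, the congested vertices are exactly the $w \in C_\mathrm{high}$, and each contributes $\cng(w)$ such triples, so $N = \sum_{w \in C_\mathrm{high}} \cng(w)$. Grouping instead by $(u, b)$, only the pairs with $b \in C_h(u)$ contribute, and each such $b$ contributes at most $\br_T(b) < 2^{h+1}$ triples (since all counted $w$'s are active children of $b$). Hence
\[
\sum_{w \in C_\mathrm{high}} \cng(w) \;\leq\; 2^{h+1} \sum_{u \in X} |C_h(u)|.
\]

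Next I combine this with the assumed failure of Step~\ref{alg-step3}, namely $\sum_{w \in C_\mathrm{high}} \cng(w) > \tfrac{1}{3}\sum_{w \in C} \cng(w)$, and with the identity
\[
\sum_{w \in C} \cng(w) \;=\; \sum_{u \in X}\sum_{b \in T \subseteq T_\psi(u),\, \br_T(b) \in [2^h, 2^{h+1})} \br_T(b) \;\geq\; 2^h \sum_{u \in X} |B_h(u)|,
\]
which is immediate since every branching vertex counted in $B_h(u)$ has $\br_T(b) \geq 2^h$. Chaining the three inequalities gives
\[
\sum_{u \in X} |C_h(u)| \;\geq\; \frac{1}{2^{h+1}} \sum_{w \in C_\mathrm{high}} \cng(w) \;>\; \frac{1}{3 \cdot 2^{h+1}} \cdot 2^h \sum_{u \in X} |B_h(u)| \;=\; \frac{1}{6} \sum_{u \in X} |B_h(u)|,
\]
which is exactly the first half of the claim.

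For the second half, I recall from the derivation just preceding the claim that $h$ was chosen so that $2^{h+1} \sum_{u \in X} |B_h(u)| > \tfrac{d|X|}{4 \log \Delta}$, equivalently $\sum_{u \in X} |B_h(u)| > \tfrac{d|X|}{2^{h+3} \log \Delta}$. Plugging this into the bound above yields
\[
\sum_{u \in X} |C_h(u)| \;>\; \frac{1}{6} \cdot \frac{d|X|}{2^{h+3} \log \Delta} \;=\; \frac{d|X|}{3 \cdot 2^{h+4} \log \Delta},
\]
completing the proof. There is essentially no obstacle here: the only thing to be careful about is the direction of the inequality $\br_T(b) < 2^{h+1}$ (used to bound $N$ from above by $2^{h+1}\sum_u |C_h(u)|$) versus $\br_T(b) \geq 2^h$ (used to bound $\sum_{w \in C} \cng(w)$ from below by $2^h \sum_u |B_h(u)|$), and the fact that both bounds lose only a factor of $2$, which is exactly what allows the final ratio $\tfrac{1}{6}$ to emerge.
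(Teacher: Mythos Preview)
Your proof is correct and follows essentially the same approach as the paper's. The paper compresses the argument into a single chain of inequalities justified by the remark ``since all branch numbers are in the range $[2^h, 2^{h+1})$,'' whereas you spell out the underlying double-counting of triples $(u,b,w)$ explicitly; the logical content is identical.
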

\begin{proof}
	Since all branch numbers are in the range $[2^h, 2^{h+1})$, we have:
	$$\sum_{u\in X}|C_h(u)|\geq \frac{1}{2^{h+1}}\sum_{w\in C_\text{high}}\cng(w) > \frac{1}{6\cdot 2^{h}}\sum_{w\in C}\cng(w) \geq \frac{1}{6}\sum_{u\in X}|B_h(u)| >  \frac{d|X|}{48\cdot 2^{h}\log\Delta}$$
\end{proof}

Recall that in Step (4) the algorithms uniformly samples a vertex $b\in B_h(u)$ and try to compute the alternating path $S_{x, \psi(u, b)}(b)$. Let us group all alternating paths $S_{x, \psi(u, b)}(b)$ for $b\in \bigcup_{u\in X}C_h(u)$. For each fixed color $z\in \{1, 2, \ldots, \Delta+1\}$, let $\paths_{x, z}$ be the collection of all $\{x, z\}$-alternating paths in the set $\{S_{x, \psi(u, b)}(b)\mid b\in C_h(u), u\in X\}$.
\begin{claim}\label{step4total}
	For any $z\in \{1, 2, \ldots, \Delta+1\}$ such that $\paths_{x, z}\neq\emptyset$, we have $|\paths_{x, z}|\geq \tau$, and the total length of paths in $\paths_{x, z}$ is at most $2n$.
\end{claim}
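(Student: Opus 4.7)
The plan is to handle the two assertions separately: the lower bound $|\paths_{x,z}|\geq \tau$ is the substantive part that exploits how congestion is defined, while the length bound $\leq 2n$ is a straightforward edge-counting observation about $\{x,z\}$-alternating paths in $G$.

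For the lower bound, the key insight is that the color $z$ is not arbitrary---it is forced to equal the special missing color $c_\psi(w)$ of some congested child, and once this identity is extracted, the definition of $\cng(w)$ directly supplies enough witnesses. Concretely, since $\paths_{x,z}$ is nonempty, I would fix some $u\in X$ and $b\in C_h(u)$ with $\psi(u,b)=z$ producing a path in $\paths_{x,z}$. By definition of $C_h(u)$, the branching vertex $b$ has at least one congested active child $w$ in its tree component of $T_\psi(u)$; the directed edge from $w$ to $b$ in $T_\psi(u)$ encodes the identity $c_\psi(w)=\psi(u,b)=z$, so $c_\psi(w)=z$. Now invoke congestion: there are at least $\cng(w)\geq \tau$ distinct vertices $u'\in X$ such that $w$ is an active child of some branching vertex $b'\in T'\subseteq T_\psi(u')$ with $\br_{T'}(b')\in [2^h, 2^{h+1})$. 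For every such $u'$, the same directed-edge identity applied inside $T_\psi(u')$ gives $\psi(u',b')=c_\psi(w)=z$, and $b'\in C_h(u')$ because $w$ is itself a congested active child of $b'$. Therefore each such $u'$ contributes a distinct element $S_{x,z}$ (starting at $u'$, first edge $(u',b')$) to $\paths_{x,z}$; distinctness is immediate because the paths have pairwise distinct starting endpoints $u'$.

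For the upper bound, observe that every element of $\paths_{x,z}$ is a maximal $\{x,z\}$-alternating path in $G$. Since each of the colors $x,z$ forms a matching, the set of edges in $G$ of color $x$ or $z$ induces a subgraph of maximum degree $2$ on $n$ vertices, hence contains at most $n$ edges, and it decomposes into vertex-disjoint paths and even cycles. Each maximal $\{x,z\}$-alternating path in $G$ has exactly two endpoints (each missing one of $\{x,z\}$), and so it can appear in $\paths_{x,z}$ at most twice---once with each endpoint playing the role of the starting vertex. Summing lengths over all distinct maximal paths gives a total of at most $2n$.

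The main subtlety sits in the first part: one has to see that the color $z$ in the claim is not a free parameter but is determined by $w$ alone, namely $z=c_\psi(w)$, and simultaneously that for every congestion witness $u'$ the branching vertex $b'$ automatically lies in $C_h(u')$ (not merely in $B_h(u')$). Once these two facts are unwound from the definitions of $T_\psi(\cdot)$, $\cng(\cdot)$ and $C_h(\cdot)$, both halves of the claim follow by a short counting argument.
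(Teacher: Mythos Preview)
Your proposal is correct and follows essentially the same route as the paper's own proof: identify a congested child $w$ of some $b\in C_h(u)$, read off $c_\psi(w)=\psi(u,b)=z$ from the directed edge $(w,b)$ in $T_\psi(u)$, and then use $\cng(w)\geq\tau$ to produce $\tau$ witnesses $u'$ each yielding a path $S_{x,z}$ in $\paths_{x,z}$; the $2n$ bound is the same double-counting argument over the $\{x,z\}$-colored subgraph. If anything, you are slightly more explicit than the paper in noting that each witness $b'$ automatically lies in $C_h(u')$ (since $w$ is its congested active child), a point the paper leaves implicit.
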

\begin{proof}
	Take any path $S_{x, \psi(u, b)}(b)\in \paths_{x, z}$. By definition of $b$, there is a congested child of $w$ in $T_\psi(u)$. By definition of congestion, there exists at least $\tau-1$ different choices of $u'\in X$, such that $w$ is an active child of a branching vertex $b'\in B_h(u')$. Therefore, since $(w, b')$ is a directed edge in $T_\psi(u')$, we know that $\psi(u', b') = c_\psi(w) = z$. So, each alternating path $S_{x, \psi(u', b')}(b')$ is also in $\paths_{x, z}$.
	
	The second half of the statement can be proved using the property that each alternating path is counted at most twice, and so total length sum is at most $2n$.
\end{proof}

\begin{claim}
	For any $u\in X$ and $b\in C_h(u)$, we say that $(u, b)$ is Step-(4)-friendly, if $|S_{x, \psi(u, b)}(b)|\leq L$. Then, the total number of Step-(4)-friendly pairs is at least $\frac{d|X|}{100\cdot 2^{h}\log\Delta}$.
\end{claim}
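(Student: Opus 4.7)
The plan is to upper bound the number of pairs $(u,b)$ with $b\in C_h(u)$ that produce a \emph{long} alternating path (i.e., $|S_{x,\psi(u,b)}(b)|>L$) and subtract from the lower bound on $\sum_{u\in X}|C_h(u)|$ given by \Cref{step4friendly}. Write $K:=\sum_{u\in X}|C_h(u)|$, so that $K\geq \frac{d|X|}{48\cdot 2^h\log\Delta}$ and each pair $(u,b)$ with $b\in C_h(u)$ contributes exactly one path to $\paths_{x,z}$ for $z=\psi(u,b)$; hence $\sum_z|\paths_{x,z}|=K$.

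The key is to exploit \emph{both} halves of \Cref{step4total}. The lower bound $|\paths_{x,z}|\geq\tau$ for every non-empty color $z$ implies that at most $K/\tau$ colors contribute to the sum, and the per-color length bound of $2n$ then yields
\[
\sum_{(u,b)\,:\,b\in C_h(u)}\bigl|S_{x,\psi(u,b)}(b)\bigr|\;=\;\sum_{z}\sum_{P\in\paths_{x,z}}|P|\;\leq\;2n\cdot\frac{K}{\tau}.
\]
By Markov's inequality, the number of pairs $(u,b)$ with $|S_{x,\psi(u,b)}(b)|>L$ is therefore at most $\frac{2nK}{\tau L}$.

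It remains to verify that the factor $\frac{2n}{\tau L}$ is tiny. A routine case split using \Cref{len-param} and the definition of $\tau$ shows that in both regimes ($m_0\geq m_1$ and $m_0<m_1$) the product collapses to $\tau L = n\log^{10}n$, so $\frac{2n}{\tau L}=\frac{2}{\log^{10}n}$. Consequently at most a $\frac{2}{\log^{10}n}$-fraction of the $K$ pairs yield a long path, and the number of Step-(4)-friendly pairs is at least
\[
K\left(1-\frac{2}{\log^{10}n}\right)\;\geq\;\frac{K}{2}\;\geq\;\frac{d|X|}{96\cdot 2^h\log\Delta}\;\geq\;\frac{d|X|}{100\cdot 2^h\log\Delta}.
\]

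The only conceptual subtlety --- and the one place where ``congestion control'' really buys us something --- is the use of $|\paths_{x,z}|\geq\tau$ to bound the number of non-empty colors by $K/\tau$. Without this we would be forced to pay the trivial bound $\Delta+1$ on the number of colors in the per-color length inequality, incurring a loss of $\Omega(\Delta)$ which the length parameter $L$ defined in \Cref{len-param} is far too small to absorb. This is precisely the phenomenon illustrated in \Cref{overview-cong}: the very fact that $w$ is congested forces many different $(u,b)$ pairs into the \emph{same} color class $z = c_\psi(w)$, so amortizing the $2n$ length budget over $\tau$ copies is what makes the bookkeeping close.
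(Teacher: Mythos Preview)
Your proposal is correct and follows essentially the same approach as the paper: both arguments combine the two halves of \Cref{step4total} (the per-color length bound $2n$ and the lower bound $|\paths_{x,z}|\geq\tau$) together with the identity $\tau L = n\log^{10}n$ to show that at most half of the $K=\sum_{u\in X}|C_h(u)|$ pairs can be long. The only cosmetic difference is that the paper works one color class at a time (for each $z$, at most $2n/L\leq\tau/2\leq\tfrac12|\paths_{x,z}|$ paths are long, then sums), whereas you first bound the number of non-empty color classes by $K/\tau$ and apply Markov globally; the two computations are equivalent.
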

\begin{proof}
	By \Cref{step4total}, for each $z\in \{1, 2, \ldots, \Delta+1\}$, the total number of paths in $\paths_{x, z}$ with length at most $L$ is at least:
	$$|\paths_{x, z}| - \frac{2n}{L} \geq \frac{1}{2}|\paths_{x, z}|$$
	as $\frac{1}{2}|\paths_{x, z}|\geq \frac{\tau}{2} \geq \frac{2n}{L}$. Therefore, at least half of the paths in $\paths_{x, z}$ are coming from Step-(4)-friendly pairs. So, the total number of Step-(4)-friendly pairs should be at least $\frac{1}{2}\sum_{u\in X}|C_h(u)| > \frac{d|X|}{100\cdot 2^h\log\Delta}$.
\end{proof}

According to the algorithm, for any Step-(4)-friendly pair $(u, b)$, Step (4) selects this pair with probability at least $$\frac{1}{|X|\log\Delta}\cdot \frac{1}{|B_h(u)|}\geq \frac{2^{h-2}}{d|X|\log\Delta}$$
Therefore, Step (4) succeeds with at least constant probability at least $\frac{1}{100\log^2\Delta}$.

This concludes the proof of 
\Cref{succ}.

\subsection{Proof of \Cref{faster-vizing}}\label{large-Delta}

Finally, let us finish our main theorem. By \Cref{faster-vizing-small}, the main theorem holds whenever $\Delta \leq n^{2/3}$, so we are left with the case that $\Delta > n^{2/3}$.

The algorithm is by a recursive application of Eulerian partition provided by \Cref{euler}. Basically, on input tuple $\brac{G_0 = (V, E_0), \Delta_0}$, if $\Delta_0 \leq n^{2/3}$, then simply run the algorithm from \Cref{faster-vizing-small} to find a $(\Delta_0+1)$-edge coloring in time $\tilde{O}(mn^{1/3})$. Otherwise, apply \Cref{euler} on $G_0$, which produces a partition $G_0 = G_1\cup G_2$ in linear time $O(|E_0|)$ into two subgraphs $G_1$ and $G_2$ with maximum degrees $\floor{\Delta_0/2}\leq \Delta_1, \Delta_2\leq \ceil{\Delta_0/2}$, respectively. After that, recursively compute two edge colorings with disjoint color palettes $\psi_1, \psi_2$ on tuples $\brac{G_1, \Delta_1}$ and $\brac{G_2, \Delta_2}$ respectively.

Since $\psi_1, \psi_2$ are using disjoint color palettes, the total number of different colors in $\psi_1, \psi_2$ is at most $\Delta_0+3$. To remove the extra colors, pick the two least popular colors under $\psi_1, \psi_2$ and uncolor all the edges with these two colors. Now let $\psi$ be the union of $\psi_1, \psi_2$, which is now a partial coloring of $E_0$ with at most $2|E_0|/\Delta_0$ uncolored edges. Then, apply the standard Vizing's procedure to extend $\psi$ to these edges, which takes total time $\tilde{O}(|E_0|n / \Delta_0) = \tilde{O}(|E_0|n^{1/3})$, which concludes the proof.

\newpage

\vspace{5mm}
\bibliographystyle{alpha}
\bibliography{ref}

\end{document}